\DeclareMathOperator{\TypeO}{\mathsf{Type-O}}
\newtheorem{theorem}{Theorem}[section]
\newtheorem{lemma}[theorem]{Lemma}
\newtheorem{corollary}[theorem]{Corollary}
\newtheorem{conjecture}[theorem]{Conjecture}
\newtheorem{claim}[theorem]{Claim}
\newtheorem{defn}[theorem]{Definition}
\newtheorem{remark}[theorem]{Remark}
\theoremstyle{definition}
\newcommand{\E}{\mathbb{E}} 	
\newcommand{\poly}{\mathrm{poly}}
\newcommand{\polylog}{\mathrm{polylog}}
\newcommand{\bpSAT}{\textsc{BPSAT}}
\newcommand{\circuitSAT}{\texttt{Circuit}\textsc{SAT}}
\newcommand{\CNFSAT}{\textsc{CNFSAT}}
\newcommand{\acQSETH}{$\textsc{AC}^0_2$\textsc{-QSETH}}
\newcommand{\QSETH}{\textsc{QSETH}}
\newcommand{\SETH}{\textsc{SETH}}
\newcommand{\countingSETH}{\# \textsc{SETH}}
\newcommand{\paritySETH}{\oplus \textsc{SETH}}
\newcommand{\circuitSETH}{\texttt{Circuit}\textsc{SETH}}
\newcommand{\NCSETH}{\textsc{NCSETH}}
\renewcommand{\P}{\mathsf{P}}
\newcommand{\PSPACE}{\mathsf{PSPACE}}
\newcommand{\CNF}{\textsc{CNF}}
\newcommand{\DNF}{\textsc{DNF}}
\newcommand{\AC}{\mathsf{AC}_{2}^{0}}
\newcommand{\ACconstDepth}{\mathsf{AC^0}}
\newcommand{\ACarg}[2]{\mathsf{AC}_{{#1},{#2}}^{0}}
\newcommand{\NC}{\mathsf{NC}}
\newcommand{\NCarg}[2]{\mathsf{NC}_{#1,#2}}
\newcommand{\Size}[1]{\mathsf{SIZE}(#1)}
\newcommand{\SizeArg}[2]{\mathsf{SIZE}_{#1,#2}}
\newcommand{\circuit}[1]{\mathsf{#1}}
\newcommand{\Ppoly}{\mathsf{P}/\textnormal{poly}}
\newcommand{\BQTIME}[1]{\textsc{BQTIME}(#1)}
\newcommand{\BQP}{\textsc{BQP}}
\newcommand{\PRF}{\mathrm{PRF}}
\newcommand{\propertyP}{\textsc{P}}
\newcommand{\propertyD}{\textsc{D}}
\newcommand{\propertyOR}{\textsc{OR}}
\newcommand{\propertyAND}{\textsc{AND}}
\newcommand{\propertyParity}{\textsc{parity}}
\newcommand{\propertyMajority}{\textsc{majority}}
\newcommand{\propertyOddMajority}{\textit{odd}\textsc{-majority}}
\newcommand{\propertyStrictMajority}{\textit{st}\textsc{-majority}}
\renewcommand{\epsilon}{\varepsilon}
\newcommand{\probabilityOf}{\mathrm{Pr}}
\newcommand{\rnote}[1]{}
\newcommand{\zo}{\{0,1\}}
\newcommand{\la}{\leftarrow}
\newcommand{\Q}{\mathsf{Q}}
\crefname{conjecture}{Conjecture}{Conjectures}
\crefname{defn}{Definition}{definition}
\crefname{claim}{Claim}{Claims}
\newcommand{\truthtable}[1]{\texttt{tt} #1}
\newcommand{\setOfAllStrings}{F}
\newcommand{\concatenate}{\bigcirc}
\newcommand{\DTIME}[1]{\textsc{DTIME}(#1)}
\newcommand{\linear}{\texttt{linear}}
\newcommand{\fourierCoefficients}[2]{\ensuremath{\Hat{\textsc{#1}}(#2)}}
\newcommand{\TR}{\mathsf{TR}}
\title{Fine-Grained Complexity via Quantum Natural Proofs}
\author{Yanlin Chen\thanks{University of Maryland (QuICS) {\tt yanlin@umd.edu}}
\and 
Yilei Chen\thanks{Tsinghua University, Shanghai Qi Zhi Institute {\tt chenyilei.ra@gmail.com}. Supported by Tsinghua University startup funding, and Shanghai Qi Zhi Institute Innovation Program SQZ202405. }
\and 
Rajendra Kumar\thanks{Indian Institute of Technology Delhi {\tt rajendra@cse.iitd.ac.in}. Supported by Chandruka New Faculty Fellowship at IIT Delhi.}
\and 
Subhasree Patro\thanks{Eindhoven University of Technology {\tt patrofied@gmail.com}}
\and 
Florian Speelman\thanks{University of Amsterdam and QuSoft {\tt f.speelman@uva.nl}. Supported by the Dutch Ministry of Economic Affairs and Climate Policy (EZK), as part of the Quantum Delta NL program, and the project Divide and Quantum `D\&Q' NWA.1389.20.241 of the program `NWA-ORC', which is partly funded by the Dutch Research Council (NWO).}}
\date{}
\begin{document}
\pagenumbering{roman}

\maketitle

\begin{abstract}
Buhrman, Patro, and Speelman \cite{BPS21} presented a framework of conjectures that together form a quantum analogue of the strong exponential-time hypothesis and its variants. They called it the QSETH framework. In this paper, using a notion of quantum natural proofs (built from natural proofs introduced by Razborov and Rudich), we show how part of the QSETH conjecture that requires properties to be `compression oblivious' can in many cases be replaced by assuming the existence of quantum-secure pseudorandom functions, a standard hardness assumption. Combined with techniques from Fourier analysis of Boolean functions, we show that properties such as $\propertyParity$ and $\propertyMajority$ are compression oblivious for certain circuit class $\Lambda$ if subexponentially secure quantum pseudorandom functions exist in $\Lambda$, answering an open question in \cite{BPS21}.
\end{abstract}

\thispagestyle{empty}
\newpage
\setcounter{tocdepth}{2}
{
  \hypersetup{linkcolor=black}
  \tableofcontents
}
\newpage
\pagenumbering{arabic}

\section{Introduction}
\label{sec:Introduction}

A well-studied problem in complexity theory is $\CNFSAT$, the problem of whether a formula, input in conjunctive normal form, has a satisfying assignment. Not only is there no polynomial time known for $\CNFSAT$, there is also no known algorithm that significantly improves over the brute-force method of checking all $2^n$ assignments. Impagliazzo, Paturi and Zane in \cite{IP01,IPZ01} conjectured that determining whether an input $\CNF$ formula is satisfiable or not cannot be done in time $O(2^{n(1-\epsilon)})$ for any constant $\epsilon>0$. They called it \emph{Strong Exponential-Time Hypothesis} (SETH). As a consequence, $\SETH$-based lower bounds were shown for many computational problems; for example, see \cite{williams2005new,BI15,HNS20}.

\paragraph{Variants of classical $\SETH$} Subsequent works also explored alternate directions in which the strong exponential-time hypothesis can be weakened and thereby made more plausible. For, e.g.,
\begin{enumerate}
    \item \label{item:VariantsOfSatisfiability} The \emph{satisfiability} problem is studied for other (more general) succinct representations of Boolean functions such as $\bpSAT$ or $\circuitSAT$ - the problem of whether a polynomial-size branching program or a polynomial-size circuit has a satisfying assignment. Furthermore, respective hardness conjectures for these satisfiability problems, namely $\NCSETH$ and $\circuitSETH$ (under the name `$\circuitSAT$ form of $\SETH$') have also been discussed together with their implications \cite{AHWW16,Williams24CircuitSAT}. 
    \item \label{item:VariantsOfProperties} Another direction of study that has made $\SETH$ more plausible is by computing more complicated properties of the formula. For example, counting the number of satisfying assignments and computing whether the number of satisfying assignments is even or odd is captured by $\countingSETH$ and $\paritySETH$, respectively; see \cite{CDLMNOPSW16} for example.
\end{enumerate}
Unfortunately, neither $\SETH$ nor $\NCSETH$ or $\circuitSETH$, hold relative to \emph{quantum} computation, as using Grover's algorithm for unstructured search~\cite{grover1996fast} it is possible to solve the respective satisfiability problems in $O(2^{n/2})$ time - leading to conjectured lower bounds of complexity $\Omega(2^{n/2})$. While these lower bounds have interesting consequences (as shown by \cite{BasicQSETH-Aaronson-2020,BPS21}), in the quantum case, generalizing the properties to be computed (as suggested in \Cref{item:VariantsOfProperties}) is a way to come up with variants of the hypothesis that enable stronger bounds: for many of such tasks it is likely that the
quadratic quantum speedup, as given by Grover's algorithm, no longer exists. In fact, motivated by this exact observation, Buhrman, Patro, and Speelman \cite{BPS21} introduced a framework of Quantum Strong Exponential-Time Hypotheses ($\QSETH$) as quantum analogues to $\SETH$, with a striking feature that allows one to technically unify quantum analogues of $\SETH$, $\oplus\SETH$, $\#\SETH$, $\NCSETH$, $\oplus \NCSETH$, etc., all under one umbrella conjecture.

\paragraph{The $\QSETH$ framework} Buhrman \textit{et al.}\ consider the problem in which one is given a formula or a circuit representation of a Boolean function $f:\{0,1\}^n \rightarrow \{0,1\}$ and is asked whether a property $\propertyP: \{0,1\}^{2^n}\rightarrow \{0,1\}$ on its truth table\footnote{Truth table of a formula/circuit $\phi$ on $n$ variables, denoted by $tt(\phi)$, is a $2^n$ bit string derived in the following way $tt(\phi)=\bigcirc_{a \in \{0,1\}^n}\phi(a)$; the symbol $\circ$ denotes concatenation.} evaluates to $1$. They conjectured that when the circuit representation is \emph{obfuscated} enough, then for \textit{most} properties $\propertyP$ (that are `compression oblivious' with respect to the representation), the time taken to compute $\propertyP$ on the truth table of $\poly(n)$-sized circuits is lower bounded by $\Q(\propertyP)$, i.e., the $1/3$-bounded error quantum query complexity of $\propertyP$, on all bit strings of length $2^n$.

Note that there are two requirements to using the \QSETH{} framework. Firstly, the input formula or circuit must be `obfuscated enough' - this translates to saying that knowing the description of the input must not help towards computing $\propertyP$ any more than just having query access to the truth table of the input. The second requirement (which is the main topic of discussion of this paper) is that the property $\propertyP$ itself must be `compression oblivious' - a notion introduced in \cite{BPS21} that we discuss next.

\paragraph{Compression-oblivious properties} 
Buhrman \textit{et al.}\ defined this notion to capture properties whose \textit{query complexity} is a lower bound for the \emph{time complexity} to compute the property even for inputs that are ``compressible'' as a truth table of small formulas (or circuits). This notion is useful (to their framework) because if the formula (or circuit) is obfuscated enough then one can use the quantum query lower bound of such a property to conjecture a time lower bound of computing that property on a small formula (or circuit).\footnote{Here is an example to illustrate why obfuscation is important. Consider the $\propertyOR$ property. The quantum query complexity of $\propertyOR$ when restricted to truth tables of small $\CNF$ or $\DNF$ formulas defined on $n$ input variables is $\Omega(\sqrt{N})$ where $N=2^n$ - let us denote this statement by $(\star)$. Furthermore, the structure of $\CNF$ formulas are not known to help in computing $\propertyOR$ more efficiently than $O(\sqrt{N})$. Therefore, we can use the same lower bound from $(\star)$ to conjecture the lower bound of $\Omega(\sqrt{N})$ for $\CNFSAT$ in the quantum setting. However, $\DNF$ formulas are not obfuscated enough for $\propertyOR$ as we can compute $\propertyOR$ in $\poly(n)$ time when input formulas are $\DNF$s. This means we cannot use the query lower bound of $\sqrt{N}$ from $(\star)$ for the $\DNF$s as we do in the case of $\CNF$ formulas.} Therefore, before using their framework, one has to be convinced that the property is indeed compression-oblivious with respect to the circuit representation considered.
They give various results to initiate the study of the set of compression-oblivious languages. For example, they show the following.
\begin{itemize}
    \item For simple properties, such as $\propertyOR, \propertyAND$, they were able to prove that these are compression-oblivious for polynomial-size $\AC$ circuits.
    \item For more complicated properties, such as $\propertyParity$ or $\propertyMajority$, they show that proving compression-obliviousness would separate $\P$ from $\PSPACE$; see Theorem~9 in \cite{BPS21}. Their result indicates that \emph{unconditionally} proving compression-obliviousness for a complicated property is hard. One way around this would be to directly \emph{conjecture} the compression-obliviousness of such properties, which was the choice of \cite{BPS21}. However, it would be desirable to put this on more solid ground.
\end{itemize}
One might ask --- \textit{why is the notion of compression-oblivious properties useful for the QSETH framework, especially when it is hard to prove results about this notion for many properties? 
} 
\newline

Towards answering this question, consider the $\oplus\QSETH$ hypothesis - implicitly stated and discussed in \cite{BPS21} and explicitly stated in \cite{CCKSS23QSETH,Che24Thesis}. It says,
\begin{conjecture}[Conjecture~8.9 in \cite{Che24Thesis}]
For each constant $\delta >0$, there exists $c>0$ such that there
is no bounded-error quantum algorithm that solves $\oplus \CNFSAT$ (even restricted to formulas with at most $c \cdot n^2$ clauses) in $O(2^{n(1-\delta)})$ time.
\end{conjecture}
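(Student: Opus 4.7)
The plan is to derive this $\oplus\QSETH$ conjecture inside the $\QSETH$ framework by instantiating the template with the property $\propertyParity$ and the circuit class of CNF formulas having at most $c\cdot n^2$ clauses. Two ingredients are needed. The first is a quantum query lower bound: parity on an $N$-bit input requires $\Theta(N)$ quantum queries, so with $N=2^n$ the time lower bound extracted from the $\QSETH$ template is already $\Omega(2^n/\poly(n))$, which is strictly stronger than what the conjecture asks for. The second, and the only nontrivial ingredient, is that $\propertyParity$ is compression oblivious for sparse CNFs. That is precisely the ingredient this paper promises: assuming subexponentially secure quantum pseudorandom functions, the quantum natural proofs argument together with Fourier analysis of Boolean functions yields compression obliviousness of $\propertyParity$ for the relevant circuit class. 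The conjecture then follows by a plug-and-play combination of (i) this paper's main compression-obliviousness theorem for $\propertyParity$ and (ii) the tight quantum query lower bound for parity, inside the $\QSETH$ framework.

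Concretely, I would argue by contradiction. Suppose some bounded-error quantum algorithm $\cA$ solves $\oplus\CNFSAT$ on $c\cdot n^2$-clause formulas in time $O(2^{n(1-\delta)})$ for some fixed $\delta>0$ and every sufficiently large $c$. Since $\oplus\CNFSAT$ on a CNF $\phi$ is exactly $\propertyParity$ applied to the truth table of $\phi$, the algorithm $\cA$ computes $\propertyParity$ on the truth tables of such CNFs in time $O(2^{n(1-\delta)})$. Compression obliviousness of $\propertyParity$ for this class then forces $\Q(\propertyParity)$ on $N=2^n$-bit strings to be at most $O(2^{n(1-\delta)})\cdot\poly(n)$, contradicting the optimal $\Omega(N)$ lower bound for parity. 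The nontrivial content of the argument therefore reduces entirely to the compression-obliviousness step, which this paper in turn reduces to the standard cryptographic assumption that subexponentially secure quantum PRFs exist.

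The main obstacle I expect is the sparsity restriction to $c\cdot n^2$ clauses. The paper's compression-obliviousness result (as previewed in the abstract) establishes the required property for a circuit class $\Lambda$ in which a subexponentially secure quantum PRF lives, and it is not automatic that $O(n^2)$-clause CNFs themselves support such a PRF. One must therefore either (a) argue that $\Lambda$ can be taken to lie inside the class of sparse CNFs --- which seems unlikely for natural PRF constructions --- or, more realistically, (b) combine the compression-obliviousness theorem with an Impagliazzo--Paturi--Zane style sparsification that transfers hardness from $\circuitSAT$-type problems over $\Lambda$ back to $\CNFSAT$ with $O(n^2)$ clauses while preserving the parity of the number of satisfying assignments. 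Verifying that parity is indeed preserved by such a reduction, and that the clause blowup stays within the allotted $c\cdot n^2$ budget for a suitably chosen $c$, is the principal technical point where I expect most of the real effort to concentrate; the query-complexity and natural-proofs pieces, by contrast, are essentially off-the-shelf once the compression-obliviousness theorem is in hand.
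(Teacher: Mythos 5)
First, note that the statement you are asked about is a \emph{conjecture}: the paper never proves it unconditionally. What the paper actually does (Corollary~\ref{cor:AC2QSETHimpliesConjectureInYanlinThesis}) is derive it from \emph{two} assumptions: the $\AC$-$\QSETH$ conjecture \emph{and} the assumption that $\propertyParity \in \mathcal{CO}(\AC)$. Your derivation is missing the first of these in an essential way. Compression obliviousness (Definition~\ref{def:CompressionOblivious}) only constrains algorithms that compute $\propertyParity$ in the \textbf{black-box} setting, i.e.\ with query access to the truth table. A hypothetical $O(2^{n(1-\delta)})$-time algorithm for $\oplus\CNFSAT$ is a \textbf{white-box} algorithm: it receives the formula itself. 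Your contradiction step --- ``compression obliviousness of $\propertyParity$ for this class then forces $\Q(\propertyParity)$ to be at most $O(2^{n(1-\delta)})\cdot\poly(n)$'' --- applies a black-box constraint to a white-box algorithm. Bridging that gap is exactly the content of the $\Gamma$-$\QSETH$ conjecture (the ``obfuscated enough'' / Black-Box-Hypothesis part of the framework), which cannot be proved and must simply be assumed. So the argument is not ``plug-and-play'' from compression obliviousness plus the parity query bound; it is a conditional derivation from two independent conjectures.

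Second, your claim that the paper's main theorem ``yields compression obliviousness of $\propertyParity$ for the relevant circuit class'' is not what the paper establishes. Theorem~\ref{thm:LinearPviolatesQPRFassumption} and Corollary~\ref{thm:ParityMajorityviolateQPRFassumptionIfNotCO} give evidence for $\propertyParity \in \mathcal{CO}(\Ppoly)$ (and, under LWE/RingLWE, for $\NC^2$/$\NC^1$), conditioned on the existence of quantum-secure PRFs constructible in that class. The class relevant to the conjecture here is depth-$2$ formulas with $O(n^2)$ clauses, i.e.\ $\AC$, and the paper explicitly states in its discussion section that the technique \emph{cannot} be extended to $\ACconstDepth$: by Linial--Mansour--Nisan, $\ACconstDepth$-constructible PRFs are not secure against $2^{\polylog(n)}$-time adversaries, so Conjecture~\ref{conj:QSecurePRF} has no analogue there. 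Your proposed fix (b), an IPZ-style sparsification transferring hardness from $\Ppoly$ back to sparse CNFs, also does not work within this framework: compression obliviousness is a statement about black-box query/time complexity on a restricted set of truth tables, not about reductions between satisfiability problems, and Lemma~\ref{thm:CompressionObliviousProperties} shows the property propagates \emph{upward} from smaller to larger circuit classes, not downward. The paper's actual route is simply to take $\propertyParity \in \mathcal{CO}(\AC)$ as an additional hypothesis in the corollary; the PRF-based machinery grounds the analogous assumption only for richer classes.
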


One can go about refuting this conjecture in two (immediate) ways.
\begin{enumerate}
    \item \label{item:COmotivation1} Either one could use the description of the input formulas to help deduce $\propertyParity$ of the truth table corresponding to this formula (which would mean that the $\CNF$ formula is not obfuscated enough and one could use the structure of the formula towards computing $\propertyParity$ more efficiently), or
    \item \label{item:COmotivation2} One could use the fact that the set of truth tables on which $\propertyParity$ is to be computed is `not too big' (as the number of polynomial length formulas are not too many) and potentially use this information to speed up the computation quantumly.  
\end{enumerate}
While both approaches seem like good starting points, the likelihood of \Cref{item:COmotivation1} not working out is probably for the same reason why $\oplus \SETH$ is not refuted yet. Which brings us to \Cref{item:COmotivation2}. There is a stark contrast between the possible speedups that quantum can offer over classical computation. For example, there exists a quantum \emph{query} algorithm that can compute \emph{any} property in $\widetilde{O}(\sqrt{N})$ many queries, when the input is promised to be the $N$-length truth table of a small formula \cite{AIKMPY04,AIKRY07,Kothari14}; such an algorithm is impossible in the classical setting. Although efficient in the number of queries, the algorithm itself is not \emph{time} efficient. But the existence of this algorithm (in contrast to the lack of one in the classical setting) does alert us that query complexity cannot naively be used to lower bound time complexity in our setting, without taking care. (The results of the current work will indicate that such a fast algorithm is unlikely to refute $\oplus\QSETH$ via \Cref{item:COmotivation2}.)

As pointed out by Buhrman \textit{et al.}\ \cite{BPS21}, the proof barrier does not allow one to (easily) prove compression-obliviousness of properties such as $\propertyParity$ or $\propertyMajority$. It is therefore natural to ask if there are other reasons why certain properties, say $\propertyParity$ for example, should be compression oblivious. The authors leave the following open question:
\newline

\noindent \textit{`What complexity-theoretic assumptions are needed to show that, e.g., the $\propertyParity$ property, is compression oblivious?'}
\newline

It is precisely this question we answer in this paper. While our answer doesn't give us a way around \emph{conjecturing} the compression obliviousness of these properties, it gives strong evidence supporting the conjecture and the use of the $\QSETH$ framework.

\subsection{Main idea: arguing compression obliviousness via quantum natural proofs}
\label{sec:OurResults}

Following is the answer to the question asked earlier.

\begin{theorem}[Informal restatement of \Cref{thm:LinearPviolatesQPRFassumption}]
\label{thm:MainThmAsStatedInIntro}
The properties $\propertyParity$ and $\propertyMajority$ are compression oblivious for a circuit class $\mathsf{M}$ unless there are no quantum-secure $\mathsf{M}$-constructible $\PRF$s (of a particular type).\footnote{For us, the logical statements `$A$ implies $B$', `if $A$ then $B$' and `$\neg A$ unless $B$' are all equivalent.}
\end{theorem}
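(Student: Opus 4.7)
The plan is to argue the contrapositive: assuming $\propertyParity$ (or $\propertyMajority$) is not compression oblivious for the class $\mathsf{M}$, I would construct a quantum adversary refuting the existence of every $\mathsf{M}$-constructible subexponentially-secure quantum $\PRF$ of the given type. By the negation of compression obliviousness, there is a bounded-error quantum algorithm $\mathcal{A}$ which, on input the description of any $\mathsf{M}$-circuit $C$ with $n$ inputs (so $N = 2^n$), outputs $\propertyParity(\text{tt}(C))$ in time $T(n)$ asymptotically smaller than the bounded-error quantum query complexity $\Theta(N)$ of $\propertyParity$.

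The overall strategy is a quantum adaptation of the Razborov--Rudich natural-proofs framework. Viewing $\mathcal{A}$ as a $T$-query quantum algorithm against an arbitrary oracle, the three natural-proofs ingredients translate as follows: \emph{constructivity} is immediate from the time bound; \emph{usefulness} is exactly that $\mathcal{A}$ correctly computes $\propertyParity(\text{tt}(f))$ when $f$ is realised by an $\mathsf{M}$-circuit; and \emph{largeness} comes from the tight $\Omega(N)$ average-case quantum query lower bound for $\propertyParity$, which guarantees that no $T$-query procedure can correlate with the true parity when $f$ is uniformly random. From this gap I would extract a quantum distinguisher against the $\PRF$. To avoid spending $\Theta(N)$ queries on a naive consistency check, I would exploit the linearity identity $\propertyParity(\text{tt}(f \oplus g)) = \propertyParity(\text{tt}(f)) \oplus \propertyParity(\text{tt}(g))$: pick a known $\mathsf{M}$-circuit $g$ whose parity is known, run $\mathcal{A}$ on (the oracles for) $f$ and $f\oplus g$, and accept iff the XOR of the two outputs equals the known parity of $g$. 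For $f=F_k$ both runs are correct by hypothesis, so the identity passes with probability close to $1$; for a uniformly random $f$ the query lower bound prevents both outputs from satisfying the identity with non-negligible advantage over $1/2$, yielding an $\Omega(1)$ distinguishing advantage in time $O(T)$. For $\propertyMajority$ I would instead use Fourier analysis to reinterpret $\propertyMajority(\text{tt}(f))$ as the sign of $\widehat{f}(\vec{0})$ and replace the linearity trick by a Goldreich--Levin-style Fourier-estimation argument, arriving at the same conclusion.

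The main obstacle I anticipate is the \emph{description-versus-oracle} gap: $\mathcal{A}$ is specified on circuit descriptions, while the $\PRF$ security game provides only oracle access, so the na\"ive substitution ``replace the circuit with the oracle'' must be justified. I would handle this by either restricting to $\mathcal{A}$'s that are black-box in their input --- a natural closure property in the quantum-query setting --- or by exploiting that $\mathsf{M}$-constructibility of the $\PRF$ family fixes the syntactic shape of its circuits and leaves only the key unknown, allowing the distinguisher to feed the correctly-shaped circuit template to $\mathcal{A}$ while simulating its evaluation via the oracle. The final quantitative step is to match $T(n)$ against the $\PRF$'s subexponential-security parameter so that the distinguisher's running time stays strictly below the security threshold; this is where the ``of a particular type'' caveat in the theorem statement enters, pinning down exactly which $\PRF$ families the reduction is able to rule out.
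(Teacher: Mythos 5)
Your high-level skeleton (contrapositive, Razborov--Rudich-style naturalization, largeness from a query lower bound on random inputs, Fourier analysis for $\propertyMajority$) matches the paper, but your central reduction deviates and the deviation has a genuine gap. The XOR-consistency test --- accept iff $\mathcal{A}(f)\oplus\mathcal{A}(f\oplus g)$ equals the known parity of $g$ --- is not sound on a uniformly random $f$ from the stated query lower bound alone. Writing $E(f)=\mathcal{A}(f)\oplus\propertyParity(\truthtable(f))$ for the error indicator, your check passes exactly when $E(f)=E(f\oplus g)$, i.e.\ when $\mathcal{A}$'s errors on $f$ and on $f\oplus g$ are \emph{correlated}. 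The $\Omega(N)$ query lower bound for $\propertyParity$ only says that each individual output is poorly correlated with the true parity; it says nothing about the joint event $E(f)=E(f\oplus g)$, which a low-query $\mathcal{A}$ can easily bias (for instance, any $\mathcal{A}$ whose error set is nearly invariant under translation by $\truthtable(g)$ makes the check pass on almost all random $f$, killing the distinguishing advantage). There is also a closure issue: $f\oplus g$ need not remain in the size-bounded class $\mathsf{M}$, so correctness of $\mathcal{A}$ on the second run is not guaranteed even in the PRF case.

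The workaround you are reaching for is unnecessary, and this is where the ``of a particular type'' caveat actually lives. The paper's PRF assumption (\cref{conj:QSecurePRF}) is against quantum adversaries running in time $2\cdot 2^n$ --- linear in the truth-table length $N$ --- with distinguishing advantage $1/2^{2n}=1/N^2$. So the ``naive consistency check'' you tried to avoid is exactly what is done: run the exact $\DTIME{N}$ algorithm $\mathcal{B}$ once, run $\mathcal{A}$ a constant number of times, and accept iff they disagree too often. Rejection on truth tables of $\mathsf{M}$-circuits is immediate; acceptance of a $1/N^2$ fraction of random strings (the largeness/advantage bound, which is $1/N^2$, not the $\Omega(1)$ you claim for $\propertyMajority$) follows from the key quantitative lemma you are missing: $\propertyParity$ and $\propertyMajority$ require $\Omega(N)$ quantum queries even when restricted to \emph{every} subset of $\{0,1\}^N$ of size $2^N(1-1/N^2)$, proved via the polynomial method using the fact that their degree-$N$ Fourier coefficient is large. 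Finally, the description-versus-oracle obstacle you anticipate does not arise: compression obliviousness is defined for algorithms computing the property in the black-box setting, so the algorithm obtained from its negation is already a query algorithm on truth tables.
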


In fact, we conclude the same for a larger class of properties (which includes $\propertyParity$ and $\propertyMajority$). The details are mentioned in the statement of \cref{thm:LinearPviolatesQPRFassumption}.


\subsubsection{Natural proofs by Razborov and Rudich}

To prove our results, we use the notion of \emph{natural proofs} introduced by Razborov and Rudich \cite{RR94}. Consider a Boolean function defined on other Boolean functions, or equivalently defined on their truth tables. Razborov and Rudich called a property (of Boolean functions) \emph{natural} if it satisfies certain `constructivity' and `largeness' conditions. Roughly speaking, for their proof, the constructivity condition requires that the property is decidable in $\poly(n)$ time when the $2^n$-sized truth table of an $n$-input Boolean function is given as input. The largeness condition requires that the property holds for a sufficiently large fraction of the set of all Boolean functions. Moreover, they say a property is useful against a complexity class $\Lambda$ if every sequence of Boolean functions having the property (infinitely often) defines a language outside of $\Lambda$. A \emph{natural proof} is a proof that establishes that a certain language lies outside of $\Lambda$. This can be formally defined as (following the notation of Chow~\cite{Cho11}),

\begin{defn}[$\Upsilon$-natural property with density $\delta$, Definition~3 in \cite{Cho11}]
\label{defn:UpsilonNaturalProperty}
Let $\Upsilon$ be a complexity class and let $\delta \coloneqq (\delta_n)_{n \in \mathbb{N}}$ denote a sequence of positive real numbers. A property $D\coloneqq (D_n)_{n \in \mathbb{N}}$ is $\Upsilon$-natural with density $\delta \coloneqq (\delta_n)_{n \in \mathbb{N}}$ if the following statements hold:
\begin{enumerate}
    \item largeness: $\exists n', \forall n>n'$, $|D_n| \geq 2^{2^n} \cdot \delta_n$, and
    \item constructivity: the problem of determining whether a Boolean function $f_n \in D_n$, when given as input the truth table of $f_n$ on $n$ variables, is decidable in $\Upsilon$. 
\end{enumerate}  
\end{defn}
Note that the largeness is captured by the density function $(\delta_n)_{n \in \mathbb{N}}$ and constructivity is captured by the complexity class $\Upsilon$.
Additionally, this $\Upsilon$-natural property is useful against a complexity class $\Lambda$ if the following statement is satisfied.

\begin{defn}[Useful property, Definition~4 in \cite{Cho11}]
\label{defn:Quasi-UsefulProperty}
Let $\Lambda$ denote a complexity class. A property $D \coloneqq (D_n)$ is useful against complexity class $\Lambda$ if $\forall (f_n)$, we have $(f_n) \notin \Lambda$ whenever $(f_n)$ is a sequence of Boolean functions satisfying $f_n \in D_n$ for infinitely many $n$.
\end{defn}

Using these definitions, Razborov and Rudich prove the following.

\begin{theorem}[Main theorem of Razborov and Rudich as stated in \cite{Cho11}]
\label{thm:RazRudAsToldByChow}
If there exists a $2^{k^\epsilon}$-hard pseudorandom generator in $\Size{k^c}$, then there is no $\Ppoly$-constructible natural property with density~$1/2^{n^d}$ that is useful against the complexity class $\Size{n^e}$ whenever $e > 1 + cd/\epsilon$.
\end{theorem}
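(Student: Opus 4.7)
The plan is to derive a contradiction by showing that any such natural property $D$ would break the assumed pseudorandom generator. Assume toward contradiction that $D = (D_n)_{n \in \mathbb{N}}$ is $\Ppoly$-constructible, has density $1/2^{n^d}$, and is useful against $\Size{n^e}$ for some $e > 1 + cd/\epsilon$. Let $G : \{0,1\}^k \to \{0,1\}^{2k}$ denote the assumed $2^{k^\epsilon}$-hard PRG computable by circuits of size $k^c$.

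First I would apply the Goldreich--Goldwasser--Micali construction to obtain from $G$ a pseudorandom function family $\{f_s : \{0,1\}^n \to \{0,1\}\}_{s \in \{0,1\}^k}$: $f_s(x)$ is defined by walking down a depth-$n$ binary tree, at step $i$ applying $G$ to the current seed and keeping the left or right half of the output according to $x_i$. For each fixed $s$, the function $f_s$ has a circuit of size $O(n \cdot k^c)$. Choose $k$ so that $n \cdot k^c \le n^e$, equivalently $k \le n^{(e-1)/c}$; then every $f_s$ lies in $\Size{n^e}$, and by usefulness of $D$ we have $f_s \notin D_n$ for all $s$. Together with the largeness condition $\Pr_f[f \in D_n] \ge 2^{-n^d}$ for uniformly random $f$, this makes the $\Ppoly$-constructor of $D_n$ a size-$\poly(2^n)$ circuit that distinguishes the truth table of $f_s$ (for random $s$) from the truth table of a uniform function with advantage at least $2^{-n^d}$.

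Next, a standard hybrid argument across the $n$ levels of the GGM tree, together with an internal hybrid over the at most $2^n$ PRG invocations at a single level, converts this distinguisher into a circuit of size $\poly(2^n) \cdot k^c$ that distinguishes $G(U_k)$ from $U_{2k}$ with advantage $\Omega(2^{-n^d}/(n \cdot 2^n))$. To contradict $2^{k^\epsilon}$-hardness of $G$, I need both $\poly(2^n) \cdot k^c \le 2^{k^\epsilon}$ and $2^{-n^d}/(n \cdot 2^n) > 2^{-k^\epsilon}$; for large $n$ the latter is the binding constraint and forces $k \ge n^{d/\epsilon}$ up to constants. Combined with the usefulness constraint $k \le n^{(e-1)/c}$, a valid $k$ exists precisely when $d/\epsilon \le (e-1)/c$, i.e., $e \ge 1 + cd/\epsilon$, so the strict hypothesis $e > 1 + cd/\epsilon$ leaves just enough room for the polynomial slack.

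I expect the main subtlety to lie in the hybrid argument itself --- specifically, handling the two simultaneous hybrids (over the $n$ levels and over the $2^i$ seeds at level $i$) carefully enough to confirm that the associated $\poly(n) \cdot 2^n$ loss does not erode the clean exponent bound, and verifying that the combined circuit built from the GGM evaluator and the $\Ppoly$-constructor still fits inside $2^{k^\epsilon}$. Once these bookkeeping steps are discharged, the theorem reduces to the arithmetic inequality $e > 1 + cd/\epsilon$ identified above.
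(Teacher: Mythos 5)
Your proposal is correct and follows exactly the argument the paper sketches for this cited result (the paper does not prove \Cref{thm:RazRudAsToldByChow} in detail, only outlining the ``constructor-as-distinguisher'' idea after the theorem statement): you use the largeness and usefulness conditions to turn the $\Ppoly$-constructor into a distinguisher between truth tables of GGM pseudorandom functions and truly random functions, then push the advantage back to the generator via the standard level-by-level hybrid and balance $k \le n^{(e-1)/c}$ against $k \gtrsim n^{d/\epsilon}$ to recover $e > 1 + cd/\epsilon$. This is the standard Razborov--Rudich proof and matches the paper's intended route.
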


Broadly speaking, their proof idea is to use the \emph{algorithm} that computes the $\Ppoly$-natural property of density $(\delta_n=\frac{1}{2^{n^d}})$ that is useful against $\Size{n^e}$ (if such a property exists) as a \emph{distinguisher} to distinguish truth tables of functions with $\poly(n)$-size circuits from truth tables of truly random functions. As this algorithm computes a natural property of density $(\delta_n)$, this means the distinguisher (which is the algorithm itself) can now distinguish a $\delta_n$ fraction of truly random functions. As $\delta_n$ here is quite large (for all $n$), this translates to a statistical test with a significant advantage to distinguish pseudorandom functions from truly random ($n$-input) functions. This in turn breaks the assumption about the existence of pseudorandom generators.

\subsubsection{Our results}

We use techniques similar to those of Razborov and Rudich to relate compression obliviousness of $\propertyParity$ (and $\propertyMajority$) with the existence of quantum-secure pseudorandom functions. We divide our contributions into two broad categories: a part that uses natural proofs, and a part that uses techniques from Fourier analysis of Boolean functions. 

\paragraph{Part 1: quantum natural proofs} We use the phrase \emph{quantum natural proofs} to indicate that the proof uses natural properties computable in a complexity class capturing quantum computations. In our case, it is sufficient to consider the complexity class $\BQTIME{\linear}$. Let $\propertyP=(\propertyP_n)_{n \in \mathbb{N}}$ be the property whose compression obliviousness, with respect to $\mathsf{M}$, we want to study. Additionally, suppose that $\propertyP$ satisfies the following two conditions.
\begin{itemize}
    \item \label{item:PropertyInDtime} The property $\propertyP$ is computable in $\DTIME{N}$; where $N$ is the length of the input.
    \item \label{item:PropertyHasHardness} For all $S \subseteq \{0,1\}^N$ of size $|S| \geq 2^N\left(1-\frac{1}{N^{2}}\right)$, the $\Q(\propertyP_n|_{S})$, i.e., the quantum query complexity to compute $\propertyP$ on strings in set $S$, is $\Omega(N)$.
\end{itemize}
We can show that $\propertyP$ has to be compression oblivious with respect to $\mathsf{M}$ unless a type of quantum secure pseudorandom functions doesn't exist.

Towards contradiction, let us assume that $\propertyP$ is \emph{not} compression oblivious for $\mathsf{M}$. This means there exists a quantum algorithm that computes $\propertyP$ on truth tables of functions corresponding to $\mathsf{M}$ in $N^{1-\alpha}$ time for some constant $\alpha>0$. Let $\mathcal{A}$ denote this algorithm. Without loss of generality, let us assume $\mathcal{A}$ is a $1/10$-bounded-error quantum algorithm. Let $\mathcal{B}$ denote the algorithm that computes $\propertyP$ in $\DTIME{N}$ as promised earlier. Using $\mathcal{A}$ and $\mathcal{B}$, we now construct an algorithm $\mathcal{D}$ that, on all inputs $z$, does the following:
\begin{enumerate}
    \item Run $\mathcal{A}$ on $z$ for $M=O(1)$ many times and note the outputs.
    \item Run $\mathcal{B}$ on $z$ once.
    \item \texttt{ACCEPT} if over $M$ runs of $\mathcal{A}$ on input $z$ the number of times $\mathcal{A}(z)=\mathcal{B}(z)$ is \emph{at most} $\frac{8}{10} \cdot M$, otherwise \texttt{REJECT}. (We need the success probability of $\mathcal{D}$ to be at least $2/3$ therefore choosing $M=O(1)$ suffices.)
\end{enumerate}
Let $z$ be a Boolean string of length $N=2^n$. Let $D_n$ denote the set of $N$-bit strings that $\mathcal{D}$ accepts with at least $2/3$ success probability. It is easy to see that $\mathcal{D}$ rejects strings for which $\mathcal{A}$ correctly computes $\propertyP$ (i.e., with success probability $9/10$). Recall that, correctness of $\mathcal{A}$ is only promised for strings that are truth tables of functions having $\poly(n)$-size circuits. There is no guarantee on how $\mathcal{A}$ behaves outside of this promise. But, we do know that $\propertyP$ has a quantum query complexity of $\Omega(N)$ on all sets of size at least $2^N \left(1-\frac{1}{N^{2}}\right)$. As $\mathcal{A}$ is a strictly-sublinear time algorithm (hence, also a strictly-sublinear query algorithm), it can at best correctly compute $\propertyP$ on at most $2^N \left(1-\frac{1}{N^{2}}\right)$ many strings outside of its promise. This, by construction, means $\mathcal{D}$ accepts at least $\frac{2^N}{N^2}$ many strings of length $N$ and therefore $D_n$ will contain these strings (or alternatively these functions). Now it is easy to see that $\mathcal{D}$ computes a natural property $(D_n)_{n \in \mathbb{N}}$, of density $(\delta_n=\frac{1}{2^{2n}})_{n \in \mathbb{N}}$, in $2\cdot 2^n$ quantum time. Moreover, this property $(D_n)_{n \in \mathbb{N}}$ is useful against complexity class $\mathsf{M}$.

Once we have the natural property which is useful against $\mathsf{M}$, we run a similar argument as done by Razborov and Rudich. As the algorithm $\mathcal{D}$ computes $(D_n)$, it can distinguish truth tables of functions with small circuits (i.e., circuits of $\polylog(n)$-depth and $\poly(n)$-size or simply $\poly(n)$-size) from truth tables of truly random functions with $\frac{1}{2^{2n}}$ advantage. But many believe that it is not possible to have quantum adversaries that are bounded by $2\cdot 2^n$ time and distinguish pseudorandom functions constructible in $\mathsf{M}$ with $\frac{1}{2^{2n}}$ advantage. Then it must be the case that $\propertyP$ is compression oblivious for $\mathsf{M}$.


\paragraph{Part 2: Fourier analysis of Boolean functions} As discussed above, our impossibility results hold for properties that are computable in $\DTIME{\linear}$ and that have high quantum query complexity even when the computation is restricted to strings from sets of size $2^N(1-\frac{1}{N^2})$. Both $\propertyParity$ and $\propertyMajority$ can be computed in $\DTIME{\linear}$. Moreover, using techniques from Fourier analysis, we are able to show that $\propertyParity$ and $\propertyMajority$ also require $\Omega(N)$ many quantum queries on sets of size $2^N(1-\frac{1}{N^2})$. Therefore, proving \Cref{thm:MainThmAsStatedInIntro}.
 
\subsection{Related work} 
\begin{itemize}
    \item Chen \textit{et al.}\ in \cite{CCKSS23QSETH} use the $\QSETH$ framework to conjecture quantum time lower bounds for variants of $\CNFSAT$ such as $\oplus\CNFSAT$ and $\#\CNFSAT$, etc and show lower bounds for several computational problems, such as variants of lattice problems, set-cover, hitting-set, etc., as implications.
    \item Arunachalam \textit{et al.}\ in \cite{AGGOS21} also discuss the notion of quantum natural properties but they do so to relate learning algorithms and circuit lower bounds. As one of the results, they show that containment of the Minimum Circuit Size Problem (MCSP) in \BQP{} implies new circuit lower bounds. This type of result was later extended by \cite{CCZZ22} to show that a few (more) quantum complexity classes can be separated if the Minimum Quantum Circuit Size Problem (MQCSP) is in \BQP{}. Despite the overlap in topic, both of these works turn out to be complementary to the current work. We use the notion of quantum natural properties to strengthen Buhrman \textit{et al.}'s \QSETH{} framework by grounding assumptions in cryptography.
    \item Contrary to Razborov and Rudich's impossibility result, Chow \cite{Cho11} proves the existence of a natural proof useful to imply circuit lower bounds as a consequence of the Exponential-Time Hypothesis (ETH). 
\end{itemize}

\subsection{Discussion and open questions}

In agreement with \cite{BPS21}, we also believe that the notion of \emph{compression-oblivious} properties is of independent interest. (See the blog post by Aviad Rubinstein also expressing interest \cite{Rub19Blog}.)
\begin{itemize}
    \item There may be other connections with topics in meta-complexity theory (as the one we witness in this paper) or say, connections with obfuscation as suspected by Aviad Rubinstein \cite{Rub19Blog} or connections with Kolmogorov complexity (as suggested by Harry Buhrman).
    \item Note that our results cannot be extended to show $\propertyParity$ and $\propertyMajority$ are compression oblivious for $\ACconstDepth$ circuits. This is because $\ACconstDepth$-constructible pseudorandom functions are not secure against $2^{\polylog(n)}$-time adversaries \cite{LMN93AC0Learnable}. Therefore, extending this result to $\propertyParity$ or $\propertyMajority$ for \acQSETH{}, i.e.\ $\CNF$ or $\DNF$ input, would be another step towards grounding the (necessary) assumption that such properties are compression oblivious.
    \item It is also worthwhile to study compression obliviousness of properties that don't meet the criteria discussed in our paper. There might be properties for which one may be able to construct natural proofs of slightly worse densities potentially implying some other complexity theoretic results as shown in \cite{Cho11}.
\end{itemize}

On the other front, our proof technique also allows us to, in a rather black-box way, check if a particular property is likely to be compression-oblivious. One could plausibly extend our results to a larger set of functions that have a similar structure, e.g., a natural candidate would be to show an equivalent statement for symmetric functions with non-negligible mass on high-degree Fourier coefficients. It is very likely that there are in fact many properties of the type we discuss in this paper; one could use techniques similar to ones used in \cite{Ambainis99-QueryCompForAlmostAllF} to formally estimate the number of such properties.

\subsection{Acknowledgements}
We are grateful to Harry Buhrman for many useful discussions, including those surrounding his original conception of the notion of compression oblivious properties, which already envisioned the type of connection to meta-complexity theory we present in this work.
We are also grateful to Ronald de Wolf for useful discussions and pointing out an ambiguity in an earlier draft.

\subsection{Structure of the paper}
In \Cref{sec:Preliminaries}, we present useful definitions and notations. In \Cref{sec:QSETHframework}, we formally define the $\QSETH$ framework. Our definitions omit an ambiguity from its original version presented in \cite{BPS21}. In \Cref{sec:COproperties}, we prove our main result - the complexity-theoretic evidence for compression obliviousness of $\propertyParity$ and $\propertyMajority$. A few proofs useful for this section are moved to \Cref{AppSec:ProofsRelevantToSupportingCOassumption}. Additionally, in \Cref{Appsec:OldObservationsAboutCO} we reprove some of the earlier results about compression-oblivious properties that were given in \cite{BPS21}; we do this to convince the readers of the correctness of our proposed definition of the $\QSETH$ framework.

\section{Preliminaries}
\label{sec:Preliminaries}

Most of the definitions and notions presented in this section are taken from \cite{AB09}, and we restate them here for completeness and ease of reading.

\subsection{Uniform models of computation}
\label{sec:UniformModel}

\begin{defn}[Function and language recognition]
\label{def:FunctionNlanguageRecognition}
Let $f: \{0,1\}^{*} \rightarrow \{0,1\}$ be a Boolean function. The language corresponding to $f$ is the set of strings $L_f$ on which $f$ evaluates to $1$, i.e., $\forall x \in \{0,1\}^*$,
\begin{equation*}
    x \in L_f \iff f(x)=1.
\end{equation*}
Additionally, we say a Turing machine $M$ \emph{decides} a language $L_f$ if for all $x \in \{0,1\}^*$, whenever $M$ is initialized to the start configuration on input $x$, the machine $M$ halts with $f(x)$ written on its output tape.
\end{defn}

\begin{defn}[$\textsc{DTIME}$, Definition~1.12 in \cite{AB09}]
Let $T: \mathbb{N} \rightarrow \mathbb{N}$ be a function. A language $L$ is in $\DTIME{T(n)}$ if there exists a Turing machine $M$, and there exists a constant $c$, such that $\forall n, \forall x \in \{0,1\}^n$, $M$ on input $x$ runs in $c\cdot T(n)$ steps and decides $L$. 
\end{defn} 

\begin{defn}[$\textsc{BQTIME}$, adapting Definition 10.9 in~\cite{AB09}]
Let $f: \{0,1\}^*\rightarrow \{0,1\}$ and $T: \mathbb{N} \rightarrow \mathbb{N}$ be a function. We say $f$ is computable in quantum $T(n)$-time, or (equivalently) say $L_f\in \BQTIME{T(n)}$, if there is a classical polynomial-time Turing machine that, for every $n\in\mathbb{N}$, on input $(1^n,1^{T(n)})$ outputs the descriptions of a $T(n)$-size quantum circuit,\footnote{\label{footnote:ckt_description}A few clarifications.\begin{itemize}
      \item A description of a $T$-size quantum circuit $F\coloneqq (F_1,\ldots,F_{T})$ is an ordered sequence of $T$ elements, where $\forall i \in [T]$ each element $F_i=(G_i,w_i)$ specifies a quantum elementary gate $G_i \in \{\texttt{CNOT}\} \cup \{~\text{set of all $1$-qubit operations}~\}$ and the label(s) of the wire(s) $w_i$ on which gate $G_i$ will be applied.
      \item The action ``applying an elementary quantum operation $F_i=(G_i,w_i)$'' means applying the elementary gate $G_i$ on wire specified by location(s) $w_i$.
      \item The size of the quantum circuit, on the other hand, is the number of elements in its description.
  \end{itemize}} let us denote it by $F \coloneqq (F_1,\ldots,F_{T(n)})$, such that for every $x\in\{0,1\}^n$, the output of the following process is $f(x)$ with probability at least $2/3$:
\begin{enumerate}
    \item Initialize an $m$-qubit quantum state $\ket{x}\ket{0^{m-n}}$, where $m\leq T(n)$.
    \item Apply one after the other $T(n)$ elementary quantum operations $F_1,\ldots,F_{T(n)}$ to the state.\footnote{See \cref{footnote:ckt_description}.}
    \item Measure the last qubit in the computational basis and output the measurement outcome. 
\end{enumerate}
\end{defn}
A language $L_f$ corresponding to a Boolean function $f:\{0,1\}^*\rightarrow \{0,1\}$ is in BQP if there exists a polynomial $p$ such that $f$ is computable in quantum $p(n)$-time, or equivalently, $L_f\in\BQTIME{p(n)}$.

\subsection{Non-uniform model of computation - Boolean circuits and circuit families}
\label{sec:BooleanCircuitsNcircuitFamily}

\begin{defn}[Boolean circuits, Definition~6.1 in \cite{AB09}]
\label{sec:BooleanCircuits}
For every $n \in \mathbb{N}$, an $n$-input, single-output Boolean circuit is a directed acyclic graph with $n$ sources (i.e., vertices with no incoming edges but could have an unbounded number of outgoing edges) and one sink (i.e., vertex with no outgoing edges); let us denote the sink vertex by $s$. All non-source vertices are called gates and are labeled with one of $\land$, $\lor$ or $\neg$ (the logical operations AND, OR and NOT, respectively). The vertices labeled with $\land$ and $\lor$ have fan-in equal to $2$ and those labeled with $\neg$ gate have fan-in $1$. The source vertices are labeled with indices $i \in [n]$. We use $\texttt{label}(v)$ to denote the label of a vertex~$v$. The circuit-size of $\circuit{C}$, denoted by $|\circuit{C}|$, is the total number of vertices in it.

If $\circuit{C}$ is a Boolean circuit and $x \in \{0,1\}^n$ is some input, then the output of $\circuit{C}$ on $x$, denoted by $\circuit{C}(x)$, is defined in the natural way. More formally, for every vertex $v$ of $\circuit{C}$ we define a $\texttt{val}(v)$ as follows: if $\texttt{label}(v)=i$ then $\texttt{val}(v)=x_i$ and otherwise $\texttt{val}(v)$ is defined recursively by applying the logical operation of $v$ labeled by $\texttt{label}(v)$ on the values of vertices connected to $v$. The output $\circuit{C}(x)$ is the value of the sink vertex $\texttt{val}(s)$.
\end{defn}

In any situation, if we require that the Boolean circuits are only allowed to have $\land$ and $\lor$ gates with fan-in $2$, as opposed to these gates having unbounded fan-in, then we explicitly mention it.

\begin{defn}[Circuit family and language recognition, Definition~6.2 in \cite{AB09}]
\label{def:CircuitFamily}
Let $T: \mathbb{N} \rightarrow \mathbb{N}$ be a function. A \emph{$T(n)$-size circuit family} is a sequence $\{C_n\}_{n \in \mathbb{N}}$ of Boolean circuits where for every $n$, $C_n$ has $n$ inputs and a single output and its circuit size $|C_n| \leq T(n)$.
\end{defn}

\begin{defn}[\textsf{SIZE}]
Let $T: \mathbb{N} \rightarrow \mathbb{N}$ be a function. We say a language $L \in \Size{T}$ if there exists a $T(n)$-size circuit family $\{C_n\}_{n \in \mathbb{N}}$ such that $\forall n$, $C_n$ consists of $\land$ and $\lor$ gates with fan-in $2$ and $\neg$ gates of fan-in $1$ and, moreover, $\forall x \in \{0,1\}^n, x \in L \iff C_n(x)=1$.
\end{defn}

\begin{defn}[Complexity class]
\label{def:ComplexityClass} 
A complexity class is a set of languages. 
\end{defn}

\begin{defn}[Circuit family corresponding to a language] Let $L$ be a language (as in \cref{def:FunctionNlanguageRecognition}). We say a family of circuits $\{C_n\}_{n \in \mathbb{N}}$ corresponds to language $L$, if $\forall n$, $\forall x \in \{0,1\}^n$, we have $x \in L \iff C_n(x)=1$. 
\end{defn}

\begin{defn}[$\Ppoly$, Definition~6.5 in \cite{AB09}]
    $\Ppoly$ is the class of languages that are decidable by polynomial-sized circuit families, or equivalently, $\Ppoly=\cup_{c,d}\Size{c \cdot n^d}=\cup_{c,d}\SizeArg{c}{d}$.
\end{defn}


\begin{defn}[$\AC$, Adapting Definition~6.25 in \cite{AB09}]
\label{def:AC0circuits}
A language $L$ is in $\AC$ if there exists a polynomial $p:\mathbb{N} \rightarrow \mathbb{N}$ and there exists a $p(n)$-size circuit family $\{C_n\}_{n \in \mathbb{N}}$ corresponding to $L$ where $\forall n$, circuit $C_n$ uses $\neg$, $\land$ and $\lor$ gates, with latter two allowed to have unbounded fan-in, and has depth at most $2$.

\end{defn}


\begin{defn}[$\TypeO$ complexity class]
\label{def:SizeableComplexityClass} 
We say a complexity class is of $\TypeO$ if it is any of the following complexity classes:
\begin{enumerate}
    \item For every $d \in \mathbb{N}$ and $k \in \mathbb{N}$, we include the class $\ACarg{d}{k}$; we say a language $L \in \ACarg{d}{k}$ if there exists a family of Boolean circuits $\{C_n\}_{n \in \mathbb{N}}$ corresponding to $L$ such that $\forall n$, $C_n$ uses $\neg$, $\land$ and $\lor$ gates, with latter two allowed to have unbounded fan-in, and has depth at most $d$ and circuit size $|C_n| \leq n^k$. 
    \item For every $d \in \mathbb{N}$ and $k \in \mathbb{N}$, we include the class $\NCarg{d}{k}$; we say a language $L \in \NCarg{d}{k}$ if there exists a family of Boolean circuits $\{C_n\}_{n \in \mathbb{N}}$ corresponding to $L$ such that $\forall n$, $C_n$ has depth at most $\log^d n$ and circuit size $|C_n| \leq n^k$. 
    \item For every $d \in \mathbb{N}$ and $k \in \mathbb{N}$, we include the class $\SizeArg{d}{k}$; we say a language $L \in \SizeArg{d}{k}$ if there exists a family of Boolean circuits $\{C_n\}_{n \in \mathbb{N}}$ corresponding to $L$ such that $\forall n$, $C_n$ has circuit size $|C_n| \leq d \cdot n^k$.
\end{enumerate}
\end{defn}

\subsection{Truth tables of circuits, languages and a few other useful notations}

A truth table is binary string capturing the outputs of a Boolean function. If a function is defined on $n$ binary inputs then its corresponding truth table is a Boolean string in $\{0,1\}^{2^n}$. The notion of a truth table can also be extended to capture the membership of strings in a language by fixing the length of the strings. We formally define these notions as follows.

\begin{defn}[Truth tables] 
\label{def:Truthtables}
For any $n \in \mathbb{N}$, let $f_n : \{0,1\}^n \rightarrow \{0,1\}$ be a Boolean function. Then the truth table of $f_n$, denoted by $\truthtable(f_n)$, is a $2^n$-length binary string constructed in the following way:
\begin{equation*}
    \truthtable(f_n)=\concatenate_{a \in \{0,1\}^n} f_n(a).
\end{equation*}
Here $f_n(a)$ denotes the output $f_n$ on input $a$. And for the concatenation, the strings $a \in \{0,1\}^n$ are taken in lexicographic order.

For a language $L \subseteq \{0,1\}^*$ the notion of truth table is as follows: for any $n \in \mathbb{N}$, the truth table of $L$ at $n$, denoted by $\truthtable(L, n)$, is a $2^n$-length binary vector that can be obtained by concatenating the outputs $b_a \coloneqq [a \in L]$ for all $a \in \{0,1\}^n$ taken in lexicographic order.  
\end{defn}

\paragraph{Other useful notations.}
Let $n$ be an integer. Let $f_n:\{0,1\}^n \rightarrow \{0,1\}$ be a Boolean function on $n$ input variables. 
\begin{itemize}
    \item We use $\setOfAllStrings_n$ to denote the set of \textit{all} Boolean functions defined on $n$ input variables; We represent $\setOfAllStrings_n = \{ \truthtable(f_n) \mid f_n \text{ is a Boolean function on }n \text{ variables} \}$ as the set of truth tables of all functions $f_n:\{0,1\}^n \rightarrow \{0,1\}$, implying $\setOfAllStrings_n = \{0,1\}^{2^n}$ and $|\setOfAllStrings_n|=2^{2^n}$. We also use $N=2^n$.
    \item We overload the notation of $\truthtable(\cdot)$ on sets of Boolean functions to denote the following: if $S$ is a set of Boolean functions on $n$ variables then $\truthtable(S)= \{\truthtable(f_n) \mid f_n \in S \}$.
    \item Let $\Lambda$ be a complexity class. Let $L \in \Lambda$. Furthermore, let $f_n:{0,1}^n \rightarrow \{0,1\}$ be a Boolean function such that $\truthtable(f_n)=\truthtable(L,n)$ then we say $f_n$ \emph{is a sub-element of} $\Lambda$.
    \item We sometimes also overload the notation of $\truthtable(\cdot)$ on complexity classes to denote the following: if $\Lambda$ is a complexity class then $\truthtable(\Lambda,n)= \{\truthtable(f_n) \mid f_n \textnormal{ sub-element in } \Lambda \}$ or equivalently $\truthtable(\Lambda,n)= \{\truthtable(L,n) \mid L \in \Lambda \}$. Furthermore, we use $\truthtable(\Lambda)=\bigcup_{n \in \mathbb{N}} \truthtable(\Lambda,n)$.
\end{itemize}

\section{The QSETH framework}
\label{sec:QSETHframework}

\subsection{Definitions}

\begin{defn}[Property]
\label{def:BooleanProperty}
A Boolean property (or just property) is a sequence $\propertyP \coloneqq\big( \propertyP_n\big)_{n\in\mathbb{N}}$ where each $\propertyP_n$ is a set of Boolean functions defined on $n$ variables.
\end{defn}

Alternatively, one can view $\propertyP_n$ as a set of $2^n$-length Boolean strings that are truth tables of functions $f_n \in \propertyP_n$, or equivalently view it as a function $\propertyP_n: \{0,1\}^{2^n} \rightarrow \{0,1\}$ such that $\propertyP_n(\truthtable(f_n))=1$ if and only if $f_n \in \propertyP_n$; see \Cref{def:Truthtables} for the definition of a truth table. Also note that it is possible for a property $\propertyP$ to be \emph{partial}. In such a case, for every $n$, there could be some functions $f_n$ for which we don't care whether or not they belong to $\propertyP_n$.\footnote{For example, both the properties $\mathrm{PP}_{\mathrm{edit}}$ and $\mathrm{PP}_{\mathrm{lcs}}$ considered in \cite{BPS21} are partial.}

\paragraph{Quantum query complexity of $\propertyP$} The (bounded-error) quantum query complexity is defined only in a non-uniform setting; therefore, we define this notion for $\propertyP_n$ for every $n \in \mathbb{N}$.  A quantum query
algorithm $\mathcal{B}$ for $\propertyP_n: \{0,1\}^{2^n} \rightarrow \{0,1\}$, on an input $x \in \{0,1\}^{2^n}$ begins in a fixed initial state $\ket{\psi_0}$, applies a sequence of unitaries $U_0, O_x, U_1, O_x, \ldots, U_T$, and performs a measurement whose outcome is denoted by $z$. Here, the initial state $\ket{\psi_0}$ and the unitaries $U_0, U_1, \ldots, U_T$ are independent of the input $x$. The unitary $O_x$ represents the ``query'' operation, and maps $\ket{i}\ket{b}$ to $\ket{i}\ket{b + x_i \textnormal{ mod } 2}$ for all $i \in [2^n]-1$. 
We say that $\mathcal{B}$ is a $1/3$-bounded-error algorithm computing $\propertyP_n$ if for all $x$ in the domain of $\propertyP_n$, the success probability of outputting $z=\propertyP_n(x)$ is at least $2/3$. Let cost$(\mathcal{B})$ denote the number of queries $\mathcal{B}$ makes to $O_x$ throughout the algorithm. The $1/3$-bounded-error quantum query complexity of $\propertyP_n$, denoted by $\Q_{1/3}(\propertyP_n)$, is defined as 
$\Q_{1/3}(\propertyP_n)=\min\{\text{cost}(\mathcal{B}):\mathcal{B} \text{ computes } \propertyP_n \text{ with error probability } \leq 1/3\}$.\footnote{Note that if $\propertyParity \coloneqq (\propertyParity_n)_{n \in \mathbb{N}}$ is the property we are considering, under our definition of $\propertyParity_n$ the value of $\Q(\propertyParity_{n})$ is equal to $\Omega(2^{n})$ and not $\Omega(n)$. For the property $\propertyOR \coloneqq (\propertyOR_n)_{n \in \mathbb{N}}$ then $Q(\propertyOR_{n})=\Omega(2^{\frac{n}{2}})$.}

\paragraph{Algorithms computing a property $\propertyP$ in black-/white-box setting} We say a (bounded-error quantum) algorithm $\mathcal{A}$ computes a property $\propertyP$ (as in \Cref{def:BooleanProperty}) in the \emph{black-box} setting if for all $n\in \mathbb{N}$, for all $f_n$ given its truth table as input, i.e., a $2^n$-length Boolean string, $\mathcal{A}$ makes (quantum) queries to this input and (with success probability at least $2/3$) determines whether or not $f_n \in \propertyP_n$. An algorithm computing $\propertyP$ in the \emph{white-box} setting gets a circuit description of $f_n$, which is possibly a more succinct description of $f_n$'s truth table, and the algorithm has to determine whether or not $f_n \in \propertyP_n$. We emphasize that, in both the black-/white-box setting, we care about the \emph{time} that algorithm $\mathcal{A}$ takes when deciding whether or not $f_n \in \propertyP_n$.
\newline

We now present the definition of compression-oblivious properties. In simple terms, these are the properties whose quantum query complexity is a lower bound for the time complexity to compute the property even for inputs that are ``compressible'' as a truth table of small formulas (or circuits). Formally,

\begin{defn}[$\Gamma_{d,k}$-compression-oblivious properties]
\label{def:CompressionOblivious}
Let $d,k \in \mathbb{N}$. Let $\Gamma_{d,k}$ denote a $\TypeO$ complexity class as stated in \cref{def:SizeableComplexityClass} parameterized by $d$ and $k$. We say a property $\propertyP$ is \emph{$\Gamma_{d,k}$-compression-oblivious}, denoted by $\propertyP \in \mathcal{CO}(\Gamma_{d,k})$, if for every constant $\delta>0$, for every quantum algorithm $\mathcal{A}$ that computes $\propertyP$ in the \textbf{black-box} setting, $\forall n' \in \mathbb{N}, \exists n \geq n'$ and $\exists$ a set $L=\{L^1, L^2, \ldots\} \subseteq \Gamma_{d,k}$ of `hard languages', such that $\forall$ circuit families $\{C^1_{n''}\}_{n'' \in \mathbb{N}}$ corresponding to $L^1$, $\forall$ circuit families $\{C^2_{n''}\}_{n'' \in \mathbb{N}}$ corresponding to $L^2$, $\ldots$, the algorithm $\mathcal{A}$ uses at least $\Q_{1/3}(\propertyP_{n})^{1-\delta}$ quantum time on at least one of the inputs in $\{C^i_n\}_{i \in [|L|]}$.\footnote{While the idea of compression-oblivious properties can be easily expressed informally, describing the notion formally requires discussing intricate details to bridge the connections between uniform (e.g., algorithm $\mathcal{A}$ computing $\propertyP$) and non-uniform models of computations (e.g., query complexity $\Q(\propertyP_n)$).}
\end{defn}


\begin{defn}[compression-oblivious properties w.r.t.\  $\AC$, $\NC$, $\Ppoly$]
\label{def:COforGeneralComplexityClass}
We say a property $\propertyP : \{0,1\}^* \rightarrow \{0,1\}$ is
\begin{enumerate}
    \item $\propertyP \in \mathcal{CO}(\AC)$ if $d=2$ and $\exists k \in \mathbb{N}$ such that $\propertyP \in \mathcal{CO}(\ACarg{d}{k})$, 
    \item $\propertyP \in \mathcal{CO}(\ACconstDepth)$ if $\exists d \in \mathbb{N}, \exists k \in \mathbb{N}$ such that $\propertyP \in \mathcal{CO}(\ACarg{d}{k})$, 
    \item $\propertyP \in \mathcal{CO}(\NC)$ if $\exists d \in \mathbb{N}, \exists k \in \mathbb{N}$ such that $\propertyP \in \mathcal{CO}(\NCarg{d}{k})$, and
    \item $\propertyP \in \mathcal{CO}(\Ppoly)$ if $\exists d \in \mathbb{N}, \exists k \in \mathbb{N}$ such that $\propertyP \in \mathcal{CO}(\SizeArg{d}{k})$.
\end{enumerate}
    
\end{defn}

The original definition of compression-oblivious properties, as presented in Section~2.2 of \cite{BPS21}, does not clearly enforce a unique ordering of the quantifiers, which can potentially lead to undesirable consequences. Hence, we make the effort of presenting a more formal definition of $\Gamma$-compression-oblivious properties in  \Cref{def:CompressionOblivious,def:COforGeneralComplexityClass} (with the approval of the authors \cite{BPS21}). 

Having formally defined $\Gamma$-compression-oblivious properties for all $\Gamma \in \{\AC, \ACconstDepth, \NC, \Ppoly \}$ we now state the $\Gamma$-QSETH conjecture. Earlier in this section, we discussed the difference between computing a Boolean property in a black-box setting versus computing the same property in a white-box setting. The compression-oblivious properties are those properties for which its corresponding quantum query complexity lower bounds the time complexity in the \emph{black-box} setting. For such properties, Buhrman \textit{et al.}\ conjectured that the time complexity of computing these properties in the \emph{white-box} setting is as bad as it is in the black-box setting. This assumes that the succinct description of the Boolean functions doesn't help in computing the properties more efficiently. (This is similar to the \textit{Black-Box Hypothesis (BBH)} stated in \cite{IKKMR17CircuitHelp}.) More formally,

\begin{conjecture}[$\Gamma$-$\QSETH$, a more formal presentation of Conjecture~5 in \cite{BPS21}]
Let $\Gamma \in \{\AC, \ACconstDepth, \NC, \Ppoly \}$ be a complexity class (as in \cref{def:ComplexityClass}). Let $\propertyP$ be a \emph{$\Gamma$-compression-oblivious} property. The $\Gamma$-$\QSETH$ conjecture states that $\exists d, \exists k \in \mathbb{N}$, such that for every constant $\delta>0$, for every quantum algorithm $\mathcal{A}$ that computes $\propertyP$ in the \textbf{white-box} setting, $\forall n' \in \mathbb{N},\exists n \geq n'$, $\exists$ a set $L=\{L^1, L^2, \ldots\} \subseteq \Gamma_{d,k}$, $\forall $ circuit families $\{C^1_{n''}\}_{n'' \in \mathbb{N}}$ corresponding to $L^1$, $\forall$ circuit families $\{C^2_{n''}\}_{n'' \in \mathbb{N}}$ corresponding to $L^2$, $\ldots$, the algorithm $\mathcal{A}$ uses at least $\Q_{1/3}(\propertyP_{n})^{1-\delta}$ quantum time on at least one of the inputs in $\{C^i_n\}_{i \in [|L|]}$. Note that $\Gamma_{d,k}$ is a $\TypeO$ complexity class corresponding to $\Gamma$.
\end{conjecture}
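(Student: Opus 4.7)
The plan is to reduce the conjecture to two components: (i) the compression obliviousness of $\propertyP$ with respect to $\Gamma_{d,k}$, which already supplies the analogous time lower bound in the \emph{black-box} setting on a suitable family of hard circuits from $\Gamma_{d,k}$; and (ii) a Black-Box-Hypothesis-style statement asserting that, for such properties, every fast \emph{white-box} quantum algorithm can be converted into a comparably fast black-box quantum algorithm operating on the truth table of its circuit input. Steps (i) and (ii) together precisely yield the conjecture, since a would-be white-box speedup gets transported into a black-box speedup on the same hard set, contradicting compression obliviousness.

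Concretely, I would fix the constants $d$ and $k$ supplied by $\propertyP \in \mathcal{CO}(\Gamma_{d,k})$ and suppose toward contradiction that there exists a quantum algorithm $\mathcal{A}'$ computing $\propertyP$ in the white-box setting with worst-case runtime $T < \Q_{1/3}(\propertyP_n)^{1-\delta}$ on \emph{every} circuit family drawn from $\Gamma_{d,k}$. Using $\mathcal{A}'$ as a subroutine, the aim is to construct a black-box algorithm $\mathcal{A}$ that, on query access to the truth table $\truthtable(C^i_n)$ of some $C^i_n$ from the hard family $L=\{L^1, L^2, \ldots\}$ guaranteed by \Cref{def:CompressionOblivious}, simulates $\mathcal{A}'$ while charging each evaluation of the circuit description to a single oracle query. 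If the simulation runs in time $\tilde{O}(T)$, compression obliviousness is violated on the very same hard set $L$, producing the contradiction and hence the stated lower bound on $\mathcal{A}'$.

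The hard part will be step (ii): a white-box algorithm can in principle exploit the syntactic structure of the circuit description in ways that pure truth-table access cannot mimic. Natural attempts at a black-box conversion, such as having $\mathcal{A}$ first learn a succinct representation of $C^i_n$ from a few queries and then feed it to $\mathcal{A}'$, either require too many queries or presuppose efficient quantum learnability of the underlying circuit class, neither of which is available for generic $\Gamma_{d,k}$. Ruling out such structural speedups is precisely the content of a quantum Black-Box Hypothesis in the sense of \cite{IKKMR17CircuitHelp}, and appears to be as hard to establish unconditionally as major circuit lower-bound separations.

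Accordingly, I view the realistic scope of a proof within this paper as handling (i) rigorously, using the quantum natural proofs approach sketched in \Cref{sec:OurResults} together with quantum-secure PRFs and the Fourier-analytic query lower bounds for $\propertyParity$ and $\propertyMajority$, while leaving (ii) as the clean black-box-versus-white-box assumption that the conjecture itself encapsulates. The decomposition above isolates precisely where a further BBH-style hypothesis must enter, and suggests that any unconditional proof of $\Gamma$-$\QSETH$ would essentially double as a proof of quantum BBH for $\Gamma_{d,k}$ circuits — an interpretation that also clarifies why the conjecture is stated, rather than proved, in \cite{BPS21}.
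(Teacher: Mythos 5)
This statement is a conjecture, and the paper offers no proof of it; it is posited as the framework's central hypothesis, motivated by exactly the two ingredients you identify (compression obliviousness supplying the black-box lower bound, plus a Black-Box-Hypothesis-style assumption that the circuit description does not help, as the paper itself notes with reference to \cite{IKKMR17CircuitHelp}). Your analysis correctly recognizes that no unconditional proof is available and arrives at the same decomposition the paper uses to justify stating it as a conjecture, so there is nothing to correct.
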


\subsection{Relevance}

Note that a basic version of the $\QSETH$ conjecture and its implications has already been discussed in complementary works \cite{BasicQSETH-Aaronson-2020, BPS21}, and other variants of $\QSETH$ are discussed more recently in \cite{CCKSS23QSETH} and in Chen's PhD thesis \cite{Che24Thesis}. For example,

\begin{conjecture}[Basic-$\QSETH$, Conjecture~8.3 in \cite{Che24Thesis}]
\label{conj:BasicQSETHfromYanlinThesis}
For each constant $\delta >0$, there exists $c>0$ such that there is no bounded-error quantum algorithm that solves $\CNFSAT$ (even restricted to formulas with at most $cn^2$ many clauses) in $O(2^{\frac{n(1-\delta)}{2}})$ time.    
\end{conjecture}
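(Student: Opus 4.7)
The plan is to derive \Cref{conj:BasicQSETHfromYanlinThesis} as a conditional consequence of the broader $\AC$-$\QSETH$ framework, instantiated with the property $\propertyOR$. Since the statement is a quantum analogue of the classical $\SETH$, an unconditional proof is effectively out of reach, so the strategy is to reduce it to two already-identified ingredients: compression obliviousness of $\propertyOR$ with respect to CNF-formula inputs, and a black-box/white-box lifting hypothesis that converts the associated quantum query lower bound into a quantum time lower bound in the succinct-representation setting.

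First, I would invoke the observation from \cite{BPS21} that $\propertyOR \in \mathcal{CO}(\AC)$, so $\propertyOR$ is compression oblivious with respect to polynomial-size $\AC$ circuits, which subsume CNF formulas with $O(n^2)$ clauses. Intuitively, this holds because a sub-$\sqrt{N}$-time black-box quantum algorithm on truth tables of such CNFs could be adapted, via embedding an arbitrary $2^n$-bit input into a slightly larger CNF, into a sub-$\sqrt{N}$-query algorithm for general $\propertyOR$, contradicting the tight Grover lower bound $\Q_{1/3}(\propertyOR_n) = \Omega(2^{n/2})$. Second, I would appeal to the $\AC$-$\QSETH$ conjecture itself, which asserts that white-box time (receiving a CNF description with at most $c \cdot n^2$ clauses) is, up to the slackness parameter $\delta$, no better than the corresponding black-box time. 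Chaining these two ingredients gives exactly the $\Omega(2^{n(1-\delta)/2})$ lower bound for $\CNFSAT$ claimed in the conjecture.

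The hard part is the second step, the black-box/white-box lifting captured by the $\QSETH$ itself. An unconditional proof would plausibly entail a major circuit-size separation on the scale of $\P$ versus $\PSPACE$, since it essentially says the succinct CNF representation conveys no algorithmic advantage beyond oracle access to the truth table; this is the well-known obstacle emphasized throughout \cite{BPS21}. The contribution of this paper is precisely to ground the \emph{first} ingredient --- compression obliviousness, for harder properties such as $\propertyParity$ and $\propertyMajority$ --- on standard cryptographic assumptions (quantum-secure $\PRF$s). An analogous cryptographic grounding of the white-box lifting step, which would upgrade \Cref{conj:BasicQSETHfromYanlinThesis} from a standalone assumption to a consequence of PRF security, remains the main open direction, and any serious attack on the conjecture should probably begin there.
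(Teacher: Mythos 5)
Your proposal matches the paper's treatment: the statement is a conjecture that the paper does not (and cannot) prove unconditionally, but derives as \Cref{cor:AC2QSETHimpliesConjectureInYanlinThesis} from exactly the two ingredients you identify, namely $\propertyOR \in \mathcal{CO}(\AC)$ (reproved in \Cref{Appsec:OldObservationsAboutCO} via a hard set of singleton languages and the adversary bound) together with the $\AC$-$\QSETH$ white-box assumption instantiated at $d=2$, $k=3$. Your framing of the white-box lifting step as the irreducibly conjectural part is also consistent with the paper's discussion.
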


\begin{conjecture}[$\oplus\QSETH$, Conjecture~8.9 in \cite{Che24Thesis}]
\label{conj:ParityQSETHfromYanlinThesis}
For each constant $\delta >0$, there exists $c>0$ such that there is no bounded-error quantum algorithm that solves $\oplus\CNFSAT$ (even restricted to formulas with at most $cn^2$ many clauses) in $O(2^{n(1-\delta)})$ time.    
\end{conjecture}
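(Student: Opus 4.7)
The plan is to derive $\oplus\QSETH$ as a corollary of the $\Gamma$-$\QSETH$ conjecture instantiated at $\Gamma = \AC$, combined with two ingredients: (i) a quantum query lower bound for $\propertyParity$ on $2^n$-bit inputs, and (ii) compression-obliviousness of $\propertyParity$ for that circuit class. Since $\CNF$ formulas on $n$ variables with $O(n^2)$ clauses sit inside $\ACarg{2}{k}$ for a suitable constant $k$, invoking $\AC$-$\QSETH$ on the property $\propertyParity$ directly gives: for every $\delta>0$, no bounded-error quantum algorithm decides $\oplus\CNFSAT$ (restricted to such formulas) in time $\Q_{1/3}(\propertyParity_n)^{1-\delta}$.

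First I would establish $\Q_{1/3}(\propertyParity_n) = \Omega(2^n)$ via the polynomial method of Beals--Buhrman--Cleve--Mosca--de Wolf: parity on $N$ bits has quantum query complexity $\Omega(N)$, and here $N = 2^n$. This converts the $\AC$-$\QSETH$ bound into the desired $2^{n(1-\delta)}$ time lower bound for $\oplus\CNFSAT$.

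Second I would argue that $\propertyParity$ is $\AC$-compression-oblivious. This is the paper's main technical contribution (\Cref{thm:MainThmAsStatedInIntro}) and proceeds through quantum natural proofs: any strictly sub-linear time quantum algorithm computing $\propertyParity$ on $\CNF$ truth tables, combined with a $\DTIME{\linear}$ verifier via a majority-of-$O(1)$-trials wrapper, yields a quantum distinguisher that accepts a natural property of density at least $1/N^2$. Since $\propertyParity$ additionally has query complexity $\Omega(N)$ when restricted to any subset of size $\geq 2^N(1-1/N^2)$ (the Fourier-analytic input of Part~2 of the paper), this distinguisher breaks any quantum-secure $\AC$-constructible PRF of the relevant type.

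The principal obstacle is precisely this second step, which cannot be discharged unconditionally: by \cite{BPS21}, unconditional compression-obliviousness of $\propertyParity$ would already separate $\P$ from $\PSPACE$. Worse, for $\AC$ the natural-proofs route is itself blocked, since the Linial--Mansour--Nisan algorithm breaks $\AC$-constructible PRFs in $2^{\polylog(n)}$ time, so the hypothesis of the main theorem fails at this exact class. The conjecture must therefore be supported either by a plausible strengthening, e.g.\ by assuming quantum-secure PRFs in $\NC$ or $\Ppoly$ and coupling with $\NC$- or $\Ppoly$-$\QSETH$, or by sharpening the natural-proofs argument to accommodate the lack of $\AC$-PRFs; a fully unconditional proof for $\CNF$ inputs remains genuinely open, as the authors themselves flag in the discussion section.
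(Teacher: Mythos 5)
Your derivation matches the paper's own treatment: the statement is a conjecture, and the paper's Corollary~\ref{cor:AC2QSETHimpliesConjectureInYanlinThesis} obtains it exactly as you propose, from $\AC$-$\QSETH$ (with $d=2$, $k=3$) together with the \emph{assumed} hypothesis $\propertyParity \in \mathcal{CO}(\AC)$, using $\Q_{1/3}(\propertyParity_n)=\Omega(2^n)$. You also correctly identify that the compression-obliviousness step cannot be discharged at $\AC$ by the paper's PRF-based natural-proofs argument (the Linial--Mansour--Nisan obstruction), which is precisely why the paper leaves it as an explicit additional assumption and flags the $\ACconstDepth$ case as open in its discussion section.
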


\begin{conjecture}[$\#\QSETH$, Conjecture~8.8 in \cite{Che24Thesis}]
\label{conj:CountingQSETHfromYanlinThesis}
For each constant $\delta >0$, there exists $c>0$ such that there is no bounded-error quantum algorithm that solves $\#\CNFSAT$ (even restricted to formulas with at most $cn^2$ many clauses) in $O(2^{n(1-\delta)})$ time.    
\end{conjecture}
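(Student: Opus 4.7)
The plan is to derive \cref{conj:CountingQSETHfromYanlinThesis} ($\#\QSETH$) as a consequence of the $\Gamma$-$\QSETH$ framework applied to the $\propertyCount$ property (counting the number of ones in the truth table), once $\propertyCount$ has been shown to be compression oblivious for a suitable $\TypeO$ class $\Gamma$ into which CNF formulas of at most $cn^2$ clauses embed. Concretely, if $\phi$ is such a CNF and $f_\phi:\{0,1\}^n\to\{0,1\}$ is its Boolean function, then $\#\CNFSAT(\phi)$ equals $\propertyCount$ evaluated on the truth table of $f_\phi$, so any sub-$2^{n(1-\delta)}$-time quantum algorithm for $\#\CNFSAT$ contradicts, via $\Gamma$-$\QSETH$, compression obliviousness of $\propertyCount$.

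First, I would verify the two hypotheses of the informal main theorem (\cref{thm:MainThmAsStatedInIntro}) for $\propertyCount$. Condition (i) --- computability in $\DTIME{N}$ --- is immediate via a single linear sweep of the input truth table. Condition (ii) --- that $\Q(\propertyCount_n|_S)=\Omega(N)$ for every $S\subseteq\{0,1\}^N$ with $|S|\geq 2^N(1-N^{-2})$ --- is the substantive step. Since $\propertyCount$ is integer-valued, I would reduce to computing its least significant bit, which is exactly $\propertyParity$, and invoke the Fourier-analytic lower bound announced in Part 2 of \cref{sec:OurResults}; alternatively one can argue directly by the polynomial method, since $\propertyCount$, viewed as a multilinear polynomial on the $N$ truth-table bits, has full degree $N$ and cannot be approximated by low-degree polynomials even after discarding an $N^{-2}$ fraction of inputs.

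Second, with the two hypotheses in hand, the quantum natural proofs argument of Part 1 of \cref{sec:OurResults} applies verbatim. A hypothetical $N^{1-\alpha}$-time bounded-error quantum algorithm $\mathcal{A}$ computing $\propertyCount$ correctly on every truth table of a $\Gamma$-circuit, combined with the linear-time deterministic reference algorithm $\mathcal{B}$, compiles into a distinguisher $\mathcal{D}$ that rejects every truth table of a $\Gamma$-circuit with high probability, accepts an $\Omega(N^{-2})$ fraction of all $N$-bit strings, and runs in $O(N)$ quantum time. Such a $\mathcal{D}$ is precisely a statistical test with advantage $\Omega(N^{-2})$ against any $\Gamma$-constructible quantum-secure PRF, contradicting the subexponential PRF assumption and thereby establishing compression obliviousness of $\propertyCount$. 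Applying $\Gamma$-$\QSETH$ then delivers $\#\QSETH$.

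The main obstacle I anticipate is Condition (ii) on structured subsets: the Fourier route for $\propertyParity$ is clean because $\propertyParity$ is the top-degree character, but for $\propertyCount$ one must either carefully reduce to $\propertyParity$ while controlling how the ``bad'' $N^{-2}$-fraction of inputs propagates through the reduction, or execute a direct approximate-degree argument that is robust to small symmetric differences in the input domain. A secondary subtlety is the choice of $\Gamma$: CNFs of quadratic size sit naturally in $\ACarg{2}{2}$, but since $\ACconstDepth$-constructible PRFs fail against quasi-polynomial adversaries by \cite{LMN93AC0Learnable}, one must quantify over a richer class (for instance $\NCarg{d}{k}$ with $d\geq 2$), as flagged in the open-questions discussion.
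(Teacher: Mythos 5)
There is a genuine gap, and it is precisely the one you flag at the end but then wave away. The statement is about $\#\CNFSAT$ on formulas with $O(n^2)$ clauses, i.e.\ about depth-2 circuits; to get it from the framework you need $\Gamma$-$\QSETH$ for $\Gamma=\AC$ (the paper uses $d=2$, $k=3$), and hence you need the relevant property to be compression oblivious \emph{with respect to $\AC$}. Your route to compression obliviousness is the quantum-natural-proofs/PRF argument of \Cref{thm:LinearPviolatesQPRFassumption}, but that argument only yields $\mathcal{CO}(\Ppoly)$ (or $\mathcal{CO}(\NC^1)$, $\mathcal{CO}(\NC^2)$ under LWE-type assumptions), and the paper explicitly notes it \emph{cannot} be pushed down to $\ACconstDepth$ because $\ACconstDepth$-constructible PRFs are broken by quasi-polynomial-time adversaries \cite{LMN93AC0Learnable}. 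Moving to a richer class $\NCarg{d}{k}$, as you suggest, does not repair this: $\NC$-$\QSETH$ for a property compression oblivious w.r.t.\ $\NC$ produces hard instances that are $\NC$ circuits, not CNF formulas, so it gives a lower bound for counting satisfying assignments of $\NC$ circuits rather than for $\#\CNFSAT$. Note also that \Cref{thm:CompressionObliviousProperties} only transfers compression obliviousness \emph{upward} ($\mathcal{CO}(\zeta)\subseteq\mathcal{CO}(\gamma)$ for $\zeta\subseteq\gamma$), so establishing $\mathcal{CO}(\Ppoly)$ says nothing about $\mathcal{CO}(\AC)$. A secondary issue is that $\propertyCount$ is not a Boolean property in the sense of \Cref{def:BooleanProperty}, so $\Q_{1/3}(\propertyCount_n)$ and the framework's quantifiers are not defined for it as stated; you would have to Booleanize it, and your own reduction to the least significant bit shows this collapses to $\propertyParity$ anyway.

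The paper's actual derivation (\Cref{cor:AC2QSETHimpliesConjectureInYanlinThesis}) is much shorter and makes the dependence explicit: it \emph{assumes} $\propertyParity\in\mathcal{CO}(\AC)$ outright (this is exactly the assumption the PRF machinery cannot discharge for depth-2 circuits, which the paper lists as an open problem), then invokes $\AC$-$\QSETH$ with $d=2$, $k=3$ to get the $2^{n(1-\delta)}$ lower bound for $\oplus\CNFSAT$, and finally observes that any algorithm for $\#\CNFSAT$ computes the parity of the count, so \Cref{conj:CountingQSETHfromYanlinThesis} follows. If you want to keep your natural-proofs argument in the picture, it belongs to the separate claim that $\propertyParity\in\mathcal{CO}(\Ppoly)$ under \Cref{conj:QSecurePRF}; it is evidence for, not a proof of, the $\AC$-level assumption that the counting conjecture actually needs.
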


We are able to get these conjectures as a corollary to our QSETH framework. More precisely,

\begin{corollary}
\label{cor:AC2QSETHimpliesConjectureInYanlinThesis}
The $\AC$-$\QSETH$ implies \Cref{conj:BasicQSETHfromYanlinThesis}. Additionally, if we assume that $\propertyParity \in \mathcal{CO}(\AC)$ then $\AC$-$\QSETH$ implies \Cref{conj:ParityQSETHfromYanlinThesis,conj:CountingQSETHfromYanlinThesis}.
\end{corollary}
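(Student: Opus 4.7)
The plan is to instantiate the $\AC$-$\QSETH$ conjecture with an appropriately chosen compression-oblivious property $\propertyP$ for each of the target conjectures, and to translate the resulting white-box time lower bound into the claimed $\CNFSAT$-style lower bound. The key bridging observation is that a CNF formula with $m$ clauses on $n$ variables is naturally a depth-$2$ AND-OR circuit of size $m+1$, hence a language decidable by a family in $\ACarg{2}{k}$ for a suitable constant $k$, so that deciding $\CNFSAT$ and $\ParityCNFSAT$ on such a formula amounts to evaluating $\propertyOR$ and $\propertyParity$, respectively, on its $2^n$-bit truth table.

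For \Cref{conj:BasicQSETHfromYanlinThesis} I would take $\propertyP = \propertyOR$. It is shown in \cite{BPS21} that $\propertyOR \in \mathcal{CO}(\AC)$, so the $\AC$-$\QSETH$ conjecture applies. Grover's bound gives $\Q_{1/3}(\propertyOR_n) = \Theta(2^{n/2})$; combining this with the $\AC$-$\QSETH$ rules out every bounded-error quantum algorithm running in time $o(2^{n(1-\delta)/2})$ on the guaranteed family of depth-$2$ hard instances. Re-expressing those instances as CNF formulas with at most $cn^2$ clauses, for an appropriate constant $c$, yields \Cref{conj:BasicQSETHfromYanlinThesis}.

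For \Cref{conj:ParityQSETHfromYanlinThesis} I would take $\propertyP = \propertyParity$ and invoke the hypothesis $\propertyParity \in \mathcal{CO}(\AC)$. The standard polynomial-method lower bound gives $\Q_{1/3}(\propertyParity_n) = \Theta(2^n)$, so the $\AC$-$\QSETH$ rules out $o(2^{n(1-\delta)})$-time quantum algorithms on the same class of hard CNFs, giving \Cref{conj:ParityQSETHfromYanlinThesis}. \Cref{conj:CountingQSETHfromYanlinThesis} then follows as a black-box consequence: any $O(2^{n(1-\delta)})$-time algorithm for $\CountingCNFSAT$ immediately solves $\ParityCNFSAT$ in the same time, by reducing the count modulo $2$, so the parity lower bound carries over verbatim.

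The step that requires the most care is reconciling the clause-count of the hard instances: the $\AC$-$\QSETH$ only promises hardness for some $\ACarg{2}{k}$ family, corresponding to CNFs with up to $n^k$ clauses for the existentially-quantified $k$, whereas the target conjectures insist on the sparser ``at most $cn^2$ clauses'' restriction. When $k=2$ the conversion is immediate; for larger $k$ one must either appeal to the specific $k$ inherited from the compression-obliviousness of $\propertyOR$ (respectively $\propertyParity$), or invoke a quantum sparsification-style argument to drop the clause count to $O(n^2)$ while preserving the query-to-time lower bound. This is the point at which a formal write-up of the corollary would need the most attention.
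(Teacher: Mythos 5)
Your proposal is correct and follows essentially the same route as the paper, whose entire proof is to cite $\propertyOR \in \mathcal{CO}(\AC)$ (reproved in \Cref{Appsec:OldObservationsAboutCO}), assume $\propertyParity \in \mathcal{CO}(\AC)$ for the other two conjectures, and instantiate $\AC$-$\QSETH$ with $d=2$, $k=3$. The clause-count reconciliation you flag as delicate is handled in the paper simply by that choice of $k=3$ (so that $cn^2$-clause CNFs, counting literal occurrences, fit inside $\ACarg{2}{3}$), with no further elaboration.
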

\begin{proof}
The proof of $\propertyOR \in \mathcal{CO}(\AC)$ was first given in \cite{BPS21} (we also present the proof in \Cref{Appsec:OldObservationsAboutCO}). Thereafter, setting $d=2$ and $k=3$ for $\AC$-$\QSETH$ conjecture implies \Cref{conj:BasicQSETHfromYanlinThesis,conj:ParityQSETHfromYanlinThesis,conj:CountingQSETHfromYanlinThesis}.
\end{proof}

Therefore, all the $\QSETH$-based lower bounds presented in Chen's PhD thesis \cite{Che24Thesis} immediately follow from our $\QSETH$ framework.

\section{When certain properties are not $\Ppoly$-compression-oblivious}
\label{sec:COproperties}


In this section, we show that it is reasonable to conjecture that there exist $d',k' \in \mathbb{N}$ and $d'',k'' \in \mathbb{N}$ for which properties such as $\propertyParity$ and $\propertyMajority$ are $\SizeArg{d'}{k'}$-compression-oblivious and $\SizeArg{d''}{k''}$-compression-oblivious, respectively. We do so by relating this conjecture to topics in circuit complexity and pseudorandomness. In fact, we are able to show similar consequences for properties that satisfy certain constraints. $\propertyParity$ and $\propertyMajority$ happen to be such properties.

Following the notation about natural proofs from \cite{RR94} and \cite{Cho11}, we now define the following.

\begin{defn}[$\Upsilon$-natural property with density $\delta$]
\label{defn:UpsilonNaturalProperty}
Let $\Upsilon$ be a complexity class (as in \cref{def:ComplexityClass}) and let $\delta \coloneqq (\delta_n)_{n \in \mathbb{N}}$ denote a sequence of positive real numbers. A property $D\coloneqq (D_n)_{n \in \mathbb{N}}$ (as in \cref{def:BooleanProperty}) is $\Upsilon$-natural with density $\delta \coloneqq (\delta_n)_{n \in \mathbb{N}}$ if the following statements hold:
\begin{enumerate}
    \item largeness: $\exists n', \forall n>n'$, $|D_n| \geq 2^{2^n} \cdot \delta_n$, and
    \item constructivity: the problem of determining whether a Boolean function $f_n \in D_n$, when given as input the truth table of $f_n$ on $n$ variables, is decidable in $\Upsilon$, which in more formal terms translates to saying $L_{D} \in \Upsilon$ where $L_{D}\coloneqq \bigcup_{n \in \mathbb{N}} \{\truthtable(f_n) \mid f_n \in D_n \}$.
\end{enumerate}  
\end{defn}

\begin{defn}[Quasi-useful property, Definition~5 in \cite{Cho11}]\footnote{Note that Razborov and Rudich use the notion of \emph{useful} property which is defined for infinitely many $n$. However, for our results the notion of \emph{quasi-useful} property (defined for all sufficiently large $n$) makes more sense so we use this instead.}
\label{defn:Quasi-UsefulProperty}
Let $\Lambda$ denotes a complexity class. A property $D \coloneqq (D_n)$, or equivalently the corresponding language $L_{D}\coloneqq \bigcup_{n \in \mathbb{N}} \{\truthtable(g_n) \mid g_n \in D_n \}$, is quasi-useful against complexity class $\Lambda$ if $\forall (f_n)$, we have $(f_n) \notin \Lambda$ whenever $(f_n)$ is sequence of Boolean functions satisfying $f_n \in D_n$ for all sufficiently large $n$.
\end{defn}

Our goal is to realize the chain of arguments as illustrated in \Cref{fig:flowsforCO}.

\begin{figure}[h]
\centering
\begin{tikzpicture}[>={Classical TikZ Rightarrow[]}]
\node[fill=white,
      thick,
      align=left,
      draw,
      minimum height=2cm,
      minimum width=2.5cm,
      label=south: $^*$for certain $\propertyP$s 
] at (0,0) (B) {P $\notin \mathcal{CO}(\Lambda)$} ;

\node[fill=white,
      thick,
      align=left,
      draw,
      minimum height=2cm,
      minimum width=3.5cm,
] at (6.3,0) (A) {$\exists \text{ } \BQTIME{2 \cdot 2^{n}}\text{-natural }$ \\ $ D_n$ with density 
$\delta_n=\frac{1}{2^{2n}}$\\$ \text{quasi-useful against } \Lambda $};
\node[fill=white,
      thick,
      align=left,
      draw,
      minimum height=2cm,
      minimum width=3.5cm,
] at (13,0) (C) {$\nexists \text{ quantum-secure } $ \\ $\Lambda\text{-constructible PRFs}$};
\node[fill=white,
      thick,
      align=left,
      draw,
      minimum height=2cm,
      minimum width=2.5cm, 
] at (2,-3.5) (D) { $\propertyMajority \text{ or } \propertyParity\notin \mathcal{CO}(\Lambda)$} ;
\node[fill=white,
      thick,
      align=left,
      draw,
      minimum height=2cm,
      minimum width=2.5cm, 
] at (10,-3.5) (E) {$\nexists \text{ quantum-secure } $ \\ $\Lambda\text{-constructible PRFs}$} ;
\node at (6.3,-2.5) (F){};
\draw[-latex] (B.east)-- (2.5,0) |- (A.west) node[midway, below] {\Cref{thm:LinearTimePropertyNotinFixedpCOimpliesQuasiUsefulProperty}};
\draw[-latex] (B.east)-- (2.5,0) |- (A.west) node[midway, above] {(A)};
\draw[-latex] (A.east)-- (10,0) |- (C.west)node[midway,below] {\cref{thm:QuasiusefulAgainstPpolyImpliesNoQsecurePRFs}};
\draw[-latex] (A.east)-- (10,0) |- (C.west)node[midway,above] {(B)};
\draw[-latex] (D.east)-- (6,-3.5) |- (E.west)node[midway,above] {};
\draw[double,->,double distance=2pt,=> angle 90] (6.3,-1.4)-- (6.3,-1.9) -| (F)node[midway,left] {(C)};
\draw[double,->,double distance=2pt,=> angle 90] (6.3,-1.4)-- (6.3,-1.9) -| (F)node[midway,right] {\cref{thm:ParityMajorityviolateQPRFassumptionIfNotCO}};
\end{tikzpicture}
\caption{Chain of arguments presented in this section. We discuss the implications when a certain class of properties $\propertyP$ are not $\Lambda$-compression-oblivious in links (A) and (B). In link (C) we show that $\propertyMajority$ and $\propertyParity$ belong to this class of properties.} 
\label{fig:flowsforCO}
\end{figure}
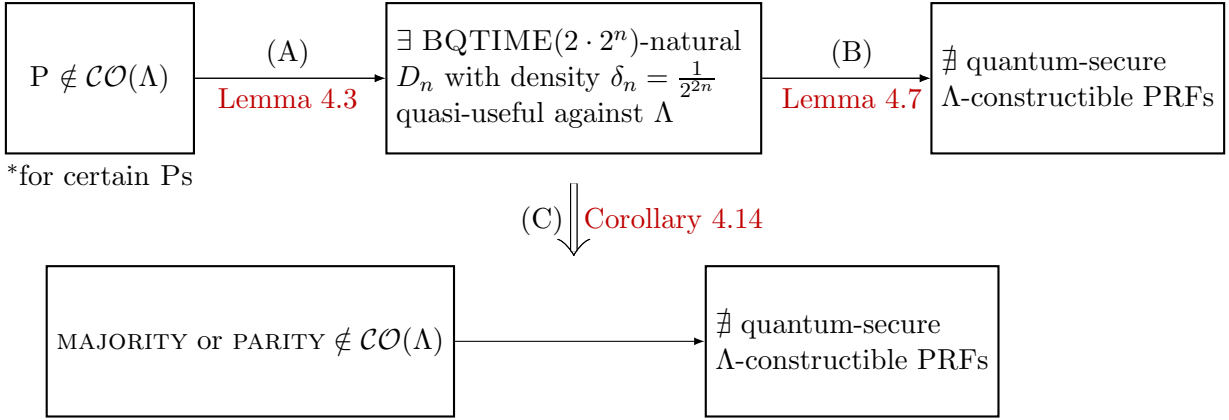

\paragraph{Realising link (A)} The proof idea is as follows. Let $\Lambda=\SizeArg{d}{k}$. Our choice of $d, k$ will be clear later. If $\propertyParity \notin \mathcal{CO}(\Lambda)$ (or say $\propertyMajority \notin \mathcal{CO}(\Lambda)$ may be for another choice of $d, k$) then (from the definition of compression-oblivious properties) there exists a quantum algorithm that computes the $\propertyParity$ (or $\propertyMajority$) of truth tables of circuits $C$ that are sub-elements of $\Lambda$ in time sublinear in the size of the input $\truthtable(C)$. Using this algorithm and the linear time deterministic algorithm to compute $\propertyParity$ (or $\propertyMajority$) on all strings of length $|\truthtable(C)|$, we will construct a $\BQTIME{\linear}$-natural property that is quasi-useful against complexity class $\Lambda$. Interestingly enough, the idea also generalizes to other simple properties that satisfy certain requirements; we formalize that in the statement of \cref{thm:LinearTimePropertyNotinFixedpCOimpliesQuasiUsefulProperty}.

\begin{lemma}
\label{thm:LinearTimePropertyNotinFixedpCOimpliesQuasiUsefulProperty}
Let $d,k \in \mathbb{N}$ and $\Lambda=\SizeArg{d}{k}$. If a property $\propertyP:\{0,1\}^* \rightarrow \{0,1\}$, or alternatively denoted as $(\propertyP_n)_{n \in \mathbb{N}}$ (as in \cref{def:BooleanProperty}), satisfies the following conditions:
\begin{enumerate}
    \item \label{item:1} $\propertyP \notin \mathcal{CO}(\Lambda)$,
    \item \label{item:2} $\propertyP$ on inputs of length $N$ is computable in $\DTIME{N}$, which more formally means $L_{\propertyP}\coloneqq \bigcup_{n \in \mathbb{N}} \{\truthtable(f_n) \mid f_n \in \propertyP_n \} \in \DTIME{N}$, and
    \item \label{item:3} $\forall S \subseteq \{0,1\}^N$ of size $|S|\geq 2^N(1 - \frac{1}{N^2})$, any $1/3$-bounded-error quantum query algorithm that computes $\propertyP_{\log N}$ on each $x \in S$ requires $\Omega(N)$ queries,
\end{enumerate}
 then there exists a $\BQTIME{2 \cdot N}$-natural property $(\propertyD_n)$ with density $\delta_n = \frac{1}{N^2}$ and this natural property is quasi-useful against complexity class $\Lambda$.
\end{lemma}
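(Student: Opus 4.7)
The plan is to convert the assumed sublinear-time quantum algorithm for $\propertyP$ on compressible inputs into a statistical test that separates circuit-constructible truth tables from generic ones. By assumption (1), there is a constant $\alpha>0$ and a bounded-error quantum algorithm $\mathcal{A}$ that computes $\propertyP_n$ on $\truthtable(C_n)$ for every $n$-input circuit $C_n\in\Lambda$ in time $\Q(\propertyP_n)^{1-\alpha}=O(N^{1-\alpha})$, where $\Q(\propertyP_n)=\Omega(N)$ follows from assumption (3). I would first amplify $\mathcal{A}$ via constantly many independent repetitions with majority voting so that its error probability on promise inputs drops to at most $1/10$; this preserves the $o(N)$ runtime. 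Let $\mathcal{B}$ denote the deterministic linear-time algorithm for $\propertyP$ guaranteed by assumption (2).

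Next, I construct the distinguisher $\mathcal{D}$: on input $z\in\{0,1\}^N$, run the amplified $\mathcal{A}(z)$ independently $M$ times for a sufficiently large constant $M$, run $\mathcal{B}(z)$ once, and accept iff the number of runs with $\mathcal{A}(z)=\mathcal{B}(z)$ is at most $8M/10$. Set $D_n\coloneqq \{f_n : \mathcal{D}(\truthtable(f_n)) \text{ accepts with probability at least } 2/3\}$. The constructivity requirement is immediate: the total quantum time is $O(N)+O(M\cdot N^{1-\alpha})\leq 2N$ for large $n$, placing the decision procedure for $(D_n)$ in $\BQTIME{2\cdot N}$. For quasi-usefulness against $\Lambda$, whenever $(f_n)\in\Lambda$ and $n$ is sufficiently large, each of the $M$ runs agrees with $\mathcal{B}(\truthtable(f_n))=\propertyP_n(\truthtable(f_n))$ with probability at least $9/10$; a Chernoff bound then shows that at most $M/10$ disagreements occur with probability at least $2/3$, so $\mathcal{D}$ rejects and $f_n\notin D_n$.

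The main hurdle is the largeness condition. I would argue as follows: let $T\subseteq\{0,1\}^N$ be the set of inputs on which the amplified $\mathcal{A}$ outputs $\propertyP_n$ with error at most $1/3$. Viewing $\mathcal{A}$ restricted to $T$ as a $1/3$-bounded-error quantum query algorithm that uses only $o(N)$ queries, assumption (3) forces $|T|<2^N(1-1/N^2)$, so $|T^c|>2^N/N^2$. For each $z\in T^c$, the expected number of agreements with $\mathcal{B}(z)=\propertyP_n(z)$ over $M$ runs is strictly less than $2M/3<8M/10$, and a Chernoff bound exploiting the positive gap $8/10-2/3$ guarantees that $\mathcal{D}$ accepts with probability at least $2/3$ when $M$ is a sufficiently large constant. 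Hence $|D_n|\geq 2^N/N^2$, giving density $\delta_n=1/N^2$. The delicate point is the contrapositive application of assumption (3): one must recognise that the sublinear $\mathcal{A}$, treated as a query algorithm on its own success set, cannot succeed on a $(1-1/N^2)$-fraction of inputs, and the quantitative threshold $1-1/N^2$ appearing in (3) is precisely what dictates the density of the resulting natural property.
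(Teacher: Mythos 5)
Your proposal is correct and follows essentially the same route as the paper: the same distinguisher $\mathcal{D}$ (run $\mathcal{A}$ a constant number of times, compare against the deterministic linear-time $\mathcal{B}$, accept iff agreements are at most $8M/10$), the same use of condition (3) to bound the size of the set where $\mathcal{A}$ succeeds and hence lower-bound $|D_n|$, and the same constructivity and quasi-usefulness checks. Your treatment of largeness is in fact slightly more careful than the paper's, since you make explicit the Chernoff argument exploiting the gap between the $2/3$ agreement rate on $T^c$ and the $8/10$ acceptance threshold.
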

\begin{proof}
Let $d,k \in \mathbb{N}$ and let $\Lambda=\SizeArg{d}{k}$. From \cref{def:CompressionOblivious} we say, $\propertyP \notin \mathcal{CO}(\Lambda)$ if and only if $\exists \delta>0$, $\exists$ algorithm $\mathcal{A}$ computing property $\propertyP$ such that $\exists n' \in \mathbb{N}$ and $\forall n \geq n'$, $\forall L=\{L^1, L^2, \ldots \} \subseteq \Lambda$, $\mathcal{A}$ uses at most $Q_{1/3}(\propertyP_n)^{1-\delta}$ quantum time on all inputs in $\{\truthtable(L^i,n)\}_{i\in [|L|]}$.

Then, let $n \geq n'$, let $N=2^n$ and let $L$ be the largest subset of $\Lambda$, i.e., the set $\Lambda$ itself, $\mathcal{A}$ computes the property in at most $Q_{1/3}(\propertyP_n)^{1-\delta}$ time for all binary strings in $\truthtable(\Lambda)=\{\truthtable(L^i, n)\}_{i \in [|L|]}$. Furthermore, we have $|\truthtable(\Lambda)| \leq 2^{d' \cdot n^k}$ for some $d'>d$; this is because all the languages $L^i \in \Lambda$ which means for every $L^i$ there must exist a circuit family $\{C^i_n\}_{n \in \mathbb{N}}$ with $|C_n| \leq d\cdot n^k$ and the number of such circuits is at most $2^{d' \cdot n^k}$ which means the number of \emph{unique} $2^n$-length truth tables that $\Lambda$ can hold is at most $2^{d' \cdot n^k}$. W.l.o.g., let us assume that $\mathcal{A}$ computes $\propertyP$ with success probability at least $9/10$ (as we can boost the probability of success by running the $1/3$-bounded-error algorithm $\mathcal{A}$ and then take majority voting). For all $n > n'$, for all $x \in \truthtable(\Lambda,n)$, the algorithm $\mathcal{A}$ computes $\propertyP(x)$ in $\Q_{1/10}(\propertyP_{\log |x|})^{1-\delta}\leq |x|^{1-\delta}$ time for some constant $\delta>0$. 

In addition, let $\mathcal{B}$ denote a deterministic (quantum or classical) algorithm that computes $\propertyP$ on all $y \in \{0,1\}^*$ in $\Theta(|y|)$ time --- upper and lower bound follow from \cref{item:2} and \Cref{item:3} of the theorem statement, respectively. Using $\mathcal{A}$ and $\mathcal{B}$ we now construct Algorithm $\mathcal{D}$ that on an input $z \in \{0,1\}^N$ does the following:
\begin{enumerate}
    \item Run $\mathcal{A}$ on $z$ for $M=O(1)$ times and note the outputs.
    \item Run $\mathcal{B}$ on $z$ once.
    \item \texttt{Accept} if over $M$ runs of $\mathcal{A}$ on input $z$ the number of times $\mathcal{A}(z)=\mathcal{B}(z)$ is \emph{at most} $\frac{8}{10} \cdot M$, otherwise \texttt{reject}. We need the success probability of $\mathcal{D}$ to be at least $2/3$ therefore choosing $M=O(1)$ suffices. 
\end{enumerate}  
Notice that Algorithm $\mathcal{D}$ (with high probability) computes a property that is quasi-useful (\Cref{defn:Quasi-UsefulProperty}) against complexity class $\Lambda$. This is true because $\mathcal{B}(z)=\propertyP(z)$ for all $z$ and whenever $z \in \truthtable(\Lambda)$ the $\probabilityOf[\mathcal{A}(z)=\propertyP(z)] \geq 0.9$, which means the outputs $\mathcal{A}(z)$ and $\mathcal{B}(z)$ are equal more than $9/10$ of the $M$ times.  Therefore, if $z \in \truthtable(\Lambda)$ then $\mathcal{D}$ rejects $z$. 

What is left is to argue that Algorithm $\mathcal{D}$ computes a $\BQTIME{2\cdot N}$-natural property that satisfies both the \emph{largeness} and \emph{contructivity} conditions as mentioned in \Cref{defn:UpsilonNaturalProperty}. Let $z \in F_n$ and let $D_n=\{z \mid |z|=2^n \text{ and Algorithm } \mathcal{D} \text{ accepts } z \}$. 
\begin{itemize}
    \item Algorithm $\mathcal{D}$ on any $z$ of size $N=2^n$ can in at most $2 \cdot N$ quantum time decide whether or not $z \in D_n$; hence satisfying the \emph{constructivity} condition. 
    \item We argue about the largeness condition of $|D_n|\geq 2^N \cdot \delta_n$ for $\delta_n=\frac{1}{N^2}$ in the following way. Consider the set $F_n$, i.e., the set of truth tables of all Boolean functions defined on $n$ input variables. For a string $z \in F_n$, the algorithm $\mathcal{D}$ rejects $z$ only if $\probabilityOf[\mathcal{A}(z)=\propertyP(z)]> 8/10$. In other words, $\mathcal{D}$ rejects $z$ whenever $z \in F_n \setminus D_n$; see \Cref{fig:distinguisher}. The runtime of Algorithm $\mathcal{A}$ on $N$-length inputs is $O(N^{1-\beta})$ time (for a constant $\beta>0$), and we know from \cref{item:3} in the statement of the theorem that $\mathcal{A}$ cannot compute $\propertyP$ correctly (with success probability greater than $2/3$) on sets bigger than $2^N(1-\frac{1}{N^2})$. Hence, $|F_n \setminus D_n| \leq 2^N \cdot (1 - \frac{1}{N^2})$, which means $|D_n| \geq \frac{2^N}{N^2}$. 
\end{itemize}
This concludes the proof of $\cref{thm:LinearTimePropertyNotinFixedpCOimpliesQuasiUsefulProperty}$.
\end{proof}

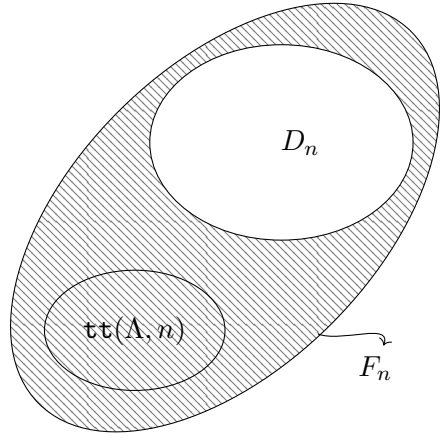
\begin{figure}
    \centering
\begin{tikzpicture}
\draw[rotate=45, pattern=north west lines, pattern color=gray] (0,0) ellipse (3.5cm and 2cm);
\draw[fill=white] (0.75,1) ellipse (1.75cm and 1.3cm);
\draw[pattern color=gray] (-1.2,-1.5) ellipse (1.2cm and 0.8cm);
\node at (2,-2) (A) {$F_n$};
\node at (-1.2,-1.5) (B) {$\truthtable(\Lambda,n)$};
\node at (1,1) (C) {$D_n$};
\node at (1.1,-1.5) (D) {};
\draw[->]   (D) to[out=-20,in=70] (A);
\end{tikzpicture}
    \caption{A pictorial representation of the fraction of $2^n$-length strings that the algorithm $\mathcal{D}$ accepts. Recall that $F_n$ denotes the full set $\{0,1\}^{2^n}$; hence, $|F_n|=2^{2^n}$. The set $\truthtable(\Lambda,n)$ denotes the set of strings $\{\truthtable(L,n) \mid L \in \Lambda \}$. Algorithm $\mathcal{A}$ is guaranteed to accept all strings from $\truthtable(\Lambda,n)$ with success probability at least $9/10$ and no guarantee whatsoever on strings in $F_n \setminus \truthtable(\Lambda,n)$. The set $D_n$ denotes the set of $2^n$-length strings that Algorithm $\mathcal{D}$ accepts.}
    \label{fig:distinguisher}
\end{figure}

\paragraph{Realising link (B)}
We will now show the cryptographic implications when properties like $\propertyParity$ and $\propertyMajority$ are not compression-oblivious. In link (A) we showed that, if either of these properties is not compression-oblivious with respect to complexity class $\Lambda=\SizeArg{d}{k}$ for some constants $d, k \in \mathbb{N}$, then we can construct a $\BQTIME{2 \cdot 2^n}$-natural property useful against $\Lambda$. Any quantum algorithm computing this natural property can now be used to distinguish pseudorandom functions constructible in $\Lambda=\SizeArg{d}{k}$ from truly random functions. However, such linear-time (linear in the length of the truth table of the functions) quantum distinguishers are not believed to exist. Therefore, it is plausible that a quantum algorithm computing this natural property in the specified time does not exist.

Let us recall the standard definition of pseudorandom functions~\cite{GGM86}.

\begin{defn}[Keyspace]
\label{def:Keyspace}
Let $K:\mathbb{N} \rightarrow \mathbb{N}$ be a function. A key space $\mathcal{K}$ generated by the function $K$ is a set of sets defined as $\mathcal{K}=\{ \{0,1\}^{K(n)} \}_{n \in \mathbb{N}}$ and we use $\mathcal{K}(n)$ to denote the set $\{0,1\}^{K(n)}$.

\end{defn}

Note that, for our results, the function $K$ is always chosen to be a polynomial.

\begin{defn}[Pseudorandom functions (PRFs)]
A set $F = \set{ f_k: \zo^n\to \zo }_{n\in\mathbb{N}, k\in \mathcal{K}(n),}$ is called a family of pseudorandom functions with key space $\mathcal{K}$ generated by function $K$ if for all probabilistic polynomial time (p.p.t.) adversary $A$, there is a negligible function $\epsilon$ such that for all $n\in\mathbb{N}$, 
\begin{equation*}
    |\Pr_{k}[A^{f_k(\cdot)}(1^n)=1]-\Pr_{\TR}[A^{\TR(\cdot)}(1^n)=1]| < \eta(n),
\end{equation*}
where the probability is also over the random string used by $A$, $\TR$ denotes a truly random function from $\zo^n$ to $\zo$.

Sometimes we may strengthen or weaken the definition by letting the adversary's running time be $T(n)$-bounded, and letting the distinguishing probability to be less than some specific function $\eta(n)$. In this case, we call it a $(K, T, \eta)$-$\PRF$. Additionally, if the family $F$, which can be interpreted as a families of circuits, corresponds to languages in $\Lambda$, then we call it as $\Lambda$-constructible $(K, T, \eta)$-$\PRF$s.
\end{defn}

Pseudorandom functions can also be defined against quantum adversaries making (classical or) quantum queries; see \cite{Zhandry12} for example. Formally stated as follows. 

\begin{defn}[Quantum-secure PRFs]
\label{def:QuantumSecurePRF}
A set $F = \set{ f_k: \zo^n\to \zo }_{k\in \mathcal{K}(n), n\in\mathbb{N}}$ is called a family of quantum-secure pseudorandom functions with key space $\mathcal{K}$ generated by a function $K$ if for all polynomial time (polynomial in the length of its input) quantum adversary $A$ making quantum queries, there is a negligible function $\epsilon$ such that for all $n\in\mathbb{N}$, 
\begin{equation*}
    |\Pr_{k}[A^{f_k(\cdot)}(1^n)=1]-\Pr_{\TR}[A^{\TR(\cdot)}(1^n)=1]| < \eta(n),
\end{equation*}
where the probability is also over the random string used by $A$, $\TR$ denotes a truly random function from $\zo^n$ to $\zo$.   


Sometimes (just as we do for $\PRF$s) we may strengthen or weaken the definition by letting the adversary's running time be $T(n)$-bounded, and letting the distinguishing probability governed by some other function $\eta(n)$. In this case, we denote them by $(K, T, \eta)$-quantum-secure-$\PRF$s. Additionally, if the family $F$, which can be interpreted as a families of circuits, correspond to languages in $\Lambda$, then we call it as $\Lambda$-constructible $(K, T, \eta)$-quantum-secure-$\PRF$s.
\end{defn}


Using this notion of quantum-secure pseudorandom functions, we are able to prove the following.

\begin{lemma}
\label{thm:QuasiusefulAgainstPpolyImpliesNoQsecurePRFs}
Let $d, k \in \mathbb{N}$. If there exists a $\BQTIME{2\cdot 2^n}$-natural property of density $\delta_n \geq \frac{1}{2^{2n}}$ that is quasi-useful against the complexity class $\Lambda=\SizeArg{d}{k}$ then there doesn't exist any $\Lambda$-constructible $(d \cdot n^k, 2\cdot 2^n, \frac{1}{2^{2n}})$-quantum-secure-$\PRF$; the size of the truth table of function $f_n$ is $2^n$.
\end{lemma}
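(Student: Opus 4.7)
The plan is a quantum analogue of the Razborov--Rudich argument: turn the algorithm that decides the natural property into a statistical test distinguishing the PRF from a truly random function. Assume for contradiction that a $\Lambda$-constructible $(d\cdot n^k,\, 2\cdot 2^n,\, 1/2^{2n})$-quantum-secure-$\PRF$ family $\{f_k\}_{k\in\mathcal{K}(n),\,n\in\mathbb{N}}$ exists, and let $\mathcal{D}$ denote the $\BQTIME{2\cdot 2^n}$ algorithm that decides the natural property $(D_n)_{n\in\mathbb{N}}$.

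From $\mathcal{D}$ I build a PRF distinguisher $\mathcal{A}$: on input $1^n$ with oracle access to $g:\{0,1\}^n\to\{0,1\}$, the algorithm queries $g$ on all $2^n$ inputs to assemble its truth table $t(g)\in\{0,1\}^{2^n}$, and then runs a (lightly amplified) copy of $\mathcal{D}$ on $t(g)$ and returns its answer. Since $\mathcal{D}$ already operates in time $2\cdot 2^n$ on inputs of length $2^n$, the whole procedure runs in time $\mathrm{poly}(n)\cdot 2^n$, matching the PRF time budget $T=2\cdot 2^n$ up to a polynomial overhead absorbed in the constants.

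For the analysis I compare the two oracle worlds. When $g$ is truly random, $t(g)$ is uniformly distributed in $\{0,1\}^{2^n}$, so the largeness condition gives $\Pr[t(g)\in D_n]\geq \delta_n \geq 1/2^{2n}$, and after amplifying $\mathcal{D}$'s error well below $\delta_n$ we get $\Pr[\mathcal{A}^{\TR}(1^n)=1]\geq \delta_n(1-o(1))$. When $g=f_k$ with $k$ uniformly random, I claim that for infinitely many $n$ the set $\{f_{k,n}:k\in\mathcal{K}(n)\}$ is disjoint from $D_n$. Indeed, otherwise for all but finitely many $n$ some key $k_n\in\mathcal{K}(n)$ would satisfy $f_{k_n,n}\in D_n$; since each $f_{k,n}$ has a circuit of size $d\cdot n^k$, the sequence $(f_{k_n,n})_{n\in\mathbb{N}}$ lies in $\Lambda=\SizeArg{d}{k}$ while belonging to $D_n$ for all sufficiently large $n$, contradicting quasi-usefulness of $D$ against $\Lambda$. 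For each such $n$ we then have $\Pr[\mathcal{A}^{f_k}(1^n)=1]=o(\delta_n)$, so the distinguishing advantage is at least $\delta_n(1-o(1))\geq 1/2^{2n}=\eta(n)$, contradicting quantum security of the PRF family.

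I expect two steps to require the most care. The first is the diagonalization converting quasi-usefulness (a statement quantified over sequences of functions) into the per-$n$ statement ``for infinitely many $n$, no PRF key lands in $D_n$''; the subtle point is that quasi-usefulness only forbids sequences lying in $D_n$ \emph{eventually}, so I need to argue carefully that non-emptiness of $\{f_{k,n}\}_k\cap D_n$ for all large $n$ is already enough to construct a bad sequence inside $\Lambda$. The second, more quantitative concern is controlling the bounded two-sided error of $\mathcal{D}$: the raw distinguishing gap between uniform and PRF inputs is only $\Theta(\delta_n)$ and can be swamped if $\mathcal{D}$ behaves differently on pseudorandom versus random non-$D_n$ strings, so I will drive $\mathcal{D}$'s error to $o(\delta_n)$ using standard BQP majority amplification, absorbing the resulting $O(\log(1/\delta_n))=O(n)$ overhead into the time bound.
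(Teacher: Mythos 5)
Your construction is the same Razborov--Rudich-style distinguisher the paper uses: run the algorithm deciding the natural property on the queried truth table, use largeness to lower-bound the acceptance probability on a truly random oracle, and use (quasi-)usefulness to upper-bound it on the PRF. In fact you are more careful than the paper's own proof in two places. First, the paper simply asserts that the algorithm rejects every string of $\truthtable(\Lambda,n)$ for all sufficiently large $n$, which does not literally follow from quasi-usefulness (that definition only constrains sequences lying in $D_n$ for \emph{all} large $n$); your diagonalization correctly extracts the weaker but sufficient conclusion that for infinitely many $n$ the PRF truth tables all avoid $D_n$, and infinitely many $n$ is enough to contradict the security definition. Second, you correctly flag that a bounded-error decider for $D_n$ does not automatically yield advantage $\Theta(\delta_n)$, since its acceptance probabilities on pseudorandom inputs and on random non-$D_n$ inputs can differ by a constant and swamp $\delta_n$; the paper runs $\mathcal{C}$ once and asserts the advantage bound without addressing this.

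The one concrete problem is quantitative: your fix for the error issue does not fit the stated resource bound. Driving the error of $\mathcal{D}$ below $\delta_n=2^{-2n}$ by majority voting needs $\Theta(n)$ repetitions, so your adversary runs in time $\Theta(n\cdot 2^n)$ (plus the $2^n$ oracle queries), whereas the lemma asserts the nonexistence of PRFs secure against adversaries running in time exactly $2\cdot 2^n$. That factor of $n$ cannot be ``absorbed into the constants'': $2\cdot 2^n$ is a concrete function, not an asymptotic class, so as written you prove the nonexistence of $(d\cdot n^k,\,O(n\cdot 2^n),\,\tfrac{1}{2^{2n}})$-quantum-secure PRFs, a weaker statement. (A related, smaller issue: after amplification your advantage is $\delta_n(1-o(1))$, which falls just short of $\tfrac{1}{2^{2n}}$ when $\delta_n=\tfrac{1}{2^{2n}}$ exactly.) To match the lemma verbatim you would either need the hypothesis to hand you a decider whose error is already $o(\delta_n)$, or the time parameter in the conclusion must be relaxed to $O(n\cdot 2^n)$. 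Since the paper's proof silently ignores the error issue altogether, your version is arguably the more honest one, but you should state explicitly which parameterization you are actually establishing.
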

\begin{proof}
Let $N=2^n$ be the length of the truth table of functions on $n$ variables. Let $\propertyP$ be a $\BQTIME{2 \cdot N}$-natural property of density $\delta_n \geq \frac{1}{2^{2n}}$ that is quasi-useful against the complexity class $\SizeArg{d}{k}$. Let $\mathcal{C}$ be the algorithm that computes this property $\propertyP$. Because $\mathcal{C}$ computes $\propertyP$, for any (sufficiently large) value of $n$ the following statements hold.
\begin{itemize}
    \item Algorithm $\mathcal{C}$ rejects $z$ whenever $z$ is a string from $\truthtable(\SizeArg{d}{k},n)$; this is because $\propertyP$ is quasi-useful against $\SizeArg{d}{k}$.
    \item There are at least $\frac{1}{2^{2n}} \cdot 2^{2^n}$ number of functions in $F_n$ that $\mathcal{C}$ accepts; this is because the $\propertyP$ has density $\delta_n \geq \frac{1}{2^{2n}}$.
    \item Furthermore, $\mathcal{C}$ runs in $\BQTIME{2 \cdot 2^n}$.
\end{itemize}
Combining these observations we can see that 
    \begin{equation*}
         |\probabilityOf[\mathcal{C}(x)=1]-\probabilityOf[\mathcal{C}(y)=1]| \geq \frac{1}{2^{2n}}
    \end{equation*}
when $x$ is chosen at random from $\truthtable(\SizeArg{d}{k},n)$ and $y$ is chosen at random from $F_n$. This means $\mathcal{C}$, despite being an algorithm running in $\BQTIME{2 \cdot N}$ time, can distinguish $\SizeArg{d}{k}$-constructible pseudorandom functions from the set of all random functions. Hence, showing that $\SizeArg{d}{k}$-constructible $(d \cdot n^k,2\cdot 2^n, \frac{1}{2^{2n}})$-quantum-secure-$\PRF$s do not exist.
\end{proof}

We could also state our assumption about the existence of quantum-secure $\PRF$s as a conjecture.

\begin{conjecture} 
\label{conj:QSecurePRF}
There exists $d,k \in \mathbb{N}$ for which $\SizeArg{d}{k}$-constructible $(d\cdot n^k,2\cdot 2^n, \frac{1}{2^{2n}})$-quantum-secure-$\PRF$s exist.
\end{conjecture}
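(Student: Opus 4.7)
The plan is to derive the conjectured PRFs from a standard subexponentially-hard quantum cryptographic assumption (for instance, subexponential hardness of LWE against quantum adversaries) via a generic construction whose circuit complexity and concrete security can both be tracked explicitly. The target is to exhibit, for some constants $d$ and $k$, a PRF family with key length $d\cdot n^k$, evaluation circuits inside $\SizeArg{d}{k}$, and distinguishing advantage below $1/2^{2n}$ against every $2\cdot 2^n$-time quantum adversary.

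First I would fix a length-doubling PRG $G$ whose evaluation sits in $\SizeArg{d_0}{k_0}$ for some constants $d_0, k_0$ and that, under a suitable subexponential quantum hardness assumption, resists all quantum adversaries of running time $2^{\Omega(n)}$ with distinguishing advantage at most $2^{-\Omega(n)}$. Lattice-based candidates with polynomial-size evaluation circuits (for example LWE-based PRGs with a carefully chosen modulus) furnish such a $G$. Second, I would plug $G$ into the Goldreich-Goldwasser-Micali tree construction to obtain a PRF family $\{f_s\}_{s\in\mathcal{K}(n)}$ in which each evaluation $f_s(x)$ consists of $n$ sequential applications of $G$. This makes every $f_s$ computable by a circuit of size $n\cdot \poly(|s|) = \poly(n)$, and hence realizable inside $\SizeArg{d}{k}$ for an appropriate choice of constants.

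The main work, and where I expect most of the difficulty, is the concrete security accounting. Zhandry's analysis of GGM under quantum queries reduces PRF distinguishing advantage $\eta$ against a $T$-query quantum adversary to PRG distinguishing advantage roughly $\eta/\poly(T,n)$ against a quantum adversary of running time essentially $T$. To meet the target $(d\cdot n^k,\; 2\cdot 2^n,\; 1/2^{2n})$ from \Cref{conj:QSecurePRF}, the underlying PRG must withstand quantum distinguishers running in time $\mathcal{O}(2^n \cdot \poly(n))$ with advantage $2^{-2n}/\poly(n)$. This is a quantitatively strong but plausible strengthening of the base assumption, compatible with standard conjectures about subexponential quantum hardness of lattice problems once the modulus and dimension are scaled accordingly.

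The hard part will be making the parameter calibration self-consistent: enlarging the key to strengthen security inflates the evaluation circuit, which in turn forces larger $d$ and $k$ on the circuit-class side, while the target advantage $1/2^{2n}$ is sensitive to the polynomial GGM loss and to the tightness of Zhandry's quantum-query reduction. The proposal would therefore proceed by first committing to constants $d$ and $k$ large enough to absorb every GGM-induced polynomial overhead, and only then fixing the subexponential hardness assumption at the precise level needed to close the $\poly(n)$ gap between base-PRG security and the target PRF security. Since the conjecture is really a statement that \emph{some} such $(d,k)$ exist, one has flexibility in trading these constants against the strength of the underlying assumption, and the obstacle is essentially to certify that the resulting joint set of constraints is satisfiable by at least one known quantum-hard candidate.
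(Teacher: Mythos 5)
This statement is a \emph{conjecture} in the paper, not a theorem: it plays exactly the role that the ``$2^{k^\epsilon}$-hard pseudorandom generator'' assumption plays in Razborov--Rudich, i.e.\ it is the underlying cryptographic hardness assumption from which compression-obliviousness is then derived. The paper offers no proof of it and only gives informal supporting evidence (the remark following \Cref{thm:LinearPviolatesQPRFassumption} notes that quantum-hard LWE yields quantum-secure $\PRF$s in $\NC^2$ via \cite{DBLP:conf/eurocrypt/BanerjeePR12}). Your proposal does not prove the conjecture either; it reduces it to a different unproven assumption (subexponential quantum hardness of LWE with explicitly calibrated parameters). That is a legitimate and worthwhile plausibility argument, very much in the spirit of the paper's own remark, but you should present it as a conditional reduction between conjectures, not as a proof of the statement.

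Even as a conditional argument, your concrete security accounting has a quantitative gap. You claim Zhandry's analysis of GGM loses only a factor $\poly(T,n)$ and conclude that the base PRG needs advantage $2^{-2n}/\poly(n)$. But here the adversary runs in time $2\cdot 2^n$ and may therefore make up to $\Theta(2^n)$ quantum queries, so the GGM/small-range-distribution loss, being polynomial in the \emph{number of queries}, is $\poly(2^n)=2^{\Theta(n)}$, not $\poly(n)$. The base PRG must consequently resist $2^{\Theta(n)}$-time quantum distinguishers with advantage $2^{-cn}$ for a constant $c$ strictly larger than $2$. This is still compatible with a subexponential LWE assumption once the seed length is taken to be $n^{k}$ with $k$ large enough that $2^{(n^k)^{\epsilon}} \gg 2^{cn}$, and the conjecture only asserts existence of \emph{some} $(d,k)$, so the calibration can be closed; but as written your parameter bookkeeping understates the required base hardness by an exponential factor, and this is precisely the step you flagged as ``the main work,'' so it needs to be done correctly.
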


Using \Cref{thm:LinearTimePropertyNotinFixedpCOimpliesQuasiUsefulProperty} and \cref{thm:QuasiusefulAgainstPpolyImpliesNoQsecurePRFs} we can now claim the following statement.

\begin{theorem}
\label{thm:LinearPviolatesQPRFassumption}
Let $\propertyP:\{0,1\}^{*} \rightarrow \{0,1\}$ be the Boolean property satisfying \cref{item:2,item:3} stated in \cref{thm:LinearTimePropertyNotinFixedpCOimpliesQuasiUsefulProperty}. Then property $\propertyP \in \mathcal{CO}(\Ppoly)$ unless \cref{conj:QSecurePRF} is false. 
\end{theorem}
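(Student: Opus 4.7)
The plan is to chain together the two preceding lemmas — Lemma~\ref{thm:LinearTimePropertyNotinFixedpCOimpliesQuasiUsefulProperty} (link A in Figure~\ref{fig:flowsforCO}) and Lemma~\ref{thm:QuasiusefulAgainstPpolyImpliesNoQsecurePRFs} (link B) — via a contrapositive argument against Conjecture~\ref{conj:QSecurePRF}. I would start by assuming $\propertyP \notin \mathcal{CO}(\Ppoly)$ and unfolding Definition~\ref{def:COforGeneralComplexityClass}, which converts this negation into the strong statement: for every $d, k \in \mathbb{N}$, we have $\propertyP \notin \mathcal{CO}(\SizeArg{d}{k})$.

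Next, I would fix an arbitrary pair $(d, k)$ and set $\Lambda = \SizeArg{d}{k}$. The three hypotheses of Lemma~\ref{thm:LinearTimePropertyNotinFixedpCOimpliesQuasiUsefulProperty} are then all in force: its condition (1) is exactly the unfolded statement above, while (2) and (3) are given to us by assumption on $\propertyP$. Applying that lemma produces a $\BQTIME{2 \cdot N}$-natural property of density $\delta_n = 1/N^2 = 1/2^{2n}$ that is quasi-useful against $\SizeArg{d}{k}$, where $N = 2^n$.

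This conclusion is precisely the hypothesis of Lemma~\ref{thm:QuasiusefulAgainstPpolyImpliesNoQsecurePRFs}, so I would feed it in to conclude that no $\SizeArg{d}{k}$-constructible $(d \cdot n^k,\, 2 \cdot 2^n,\, 1/2^{2n})$-quantum-secure-$\PRF$ exists. Since $(d, k)$ was arbitrary, this non-existence holds for \emph{every} $(d, k)$, which directly negates the existential statement in Conjecture~\ref{conj:QSecurePRF}.

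The argument is essentially bookkeeping, so I do not anticipate any real mathematical obstacle beyond verifying that the three parameter regimes line up — in particular, confirming that the density $1/2^{2n}$, the runtime $2 \cdot 2^n$, and the circuit-size bound $d \cdot n^k$ match up between the output of Lemma~\ref{thm:LinearTimePropertyNotinFixedpCOimpliesQuasiUsefulProperty} and the input expected by Lemma~\ref{thm:QuasiusefulAgainstPpolyImpliesNoQsecurePRFs}. The subtlest point is the quantifier alignment: negating $\mathcal{CO}(\Ppoly)$ yields a statement universally quantified over $(d,k)$, which is exactly the strength needed to contradict the existential claim of the $\PRF$ conjecture. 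Once that is in place, the chain $\lnot \mathcal{CO}(\Ppoly) \Rightarrow \text{natural property} \Rightarrow \lnot \text{quantum-secure }\PRF$ closes cleanly.
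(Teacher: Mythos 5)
Your proposal is correct and matches the paper's proof essentially verbatim: both negate $\mathcal{CO}(\Ppoly)$ via Definition~\ref{def:COforGeneralComplexityClass} to obtain $\propertyP \notin \mathcal{CO}(\SizeArg{d}{k})$ for all $d,k$, then chain Lemma~\ref{thm:LinearTimePropertyNotinFixedpCOimpliesQuasiUsefulProperty} and Lemma~\ref{thm:QuasiusefulAgainstPpolyImpliesNoQsecurePRFs} for each $(d,k)$ to rule out every $\SizeArg{d}{k}$-constructible quantum-secure PRF, falsifying Conjecture~\ref{conj:QSecurePRF}. The quantifier alignment you flag is indeed the only point of substance, and you handle it exactly as the paper does.
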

\begin{proof}
If $\propertyP \notin \mathcal{CO}(\Ppoly)$ then $\forall d, \forall k \in \mathbb{N}$ we have $\propertyP \notin \mathcal{CO}(\SizeArg{d}{k})$; see \Cref{def:COforGeneralComplexityClass}. For every value of $d,k$ we will then be able to construct a $\BQTIME{2\cdot 2^n}$-natural property that will be quasi-useful against $\SizeArg{d}{k}$. This means for all values of $d,k$ we will no longer have $\SizeArg{d}{k}$-constructible $(d\cdot n^k, 2\cdot2^n, \frac{1}{2^{2n}})$-quantum-secure-$\PRF$s. Hence, falsifying \Cref{conj:QSecurePRF}.
\end{proof}

\begin{remark}



Although not explicitly used in this paper, we would also like to mention that under the assumption that the learning with errors (LWE) conjecture holds against quantum algorithms, there exist quantum-secure $\PRF$s evaluatable by $\NC^2$ circuits ($\NC^1$ if we assume the quantum hardness of RingLWE)~\cite{DBLP:conf/eurocrypt/BanerjeePR12}. This suggests that under the quantum hardness of LWE (resp. RingLWE), $\propertyParity$ and $\propertyMajority$ are  compression obliviousness with respect to the complexity class $\NC^2$ (resp. $\NC^1$).
\end{remark}



\begin{remark}[Strengthening \cref{conj:QSecurePRF} with OWFs] First, recall the definition of one-way functions.
\begin{defn}[One-way functions (OWFs)]
A function $g: \zo^*\to\zo^*$ is called a $(T, \eta)$-one-way function if there is a polynomial $p$ such that for all adversary $A$ running in time $T$, for all $n\in\mathbb{N}$, 
\begin{equation*}
    \Pr_{x\la \zo^{p(n)}}[A(1^n, g(x))=x' \text{ such that } g(x') = g(x)] < \eta(n).
\end{equation*}
\end{defn}    

For example, when $T = 2^n$, we can set the length of the domain and range to be $p(n) = n^2$, therefore a $T = 2^n$ adversary will not trivially break such an OWF.

It is known~\cite{HILL99,GGM86} that $(K, T, \eta)$-PRF exists if $(\poly(T), \poly(\eta))$-one-way function exists. Furthermore, the same result holds against quantum adversaries even if the adversary for the PRF is allowed to make superposition queries~\cite{Zhandry12}. Therefore, we can say that $(2^n, 2^{-\omega(n)})-$PRF against quantum adversaries exists as long as $(\poly(2^n), \poly(2^{-\omega(n)}))$-one-way function against quantum adversaries exists. Therefore, the following conjecture implies \Cref{conj:QSecurePRF}.
\begin{conjecture}
\label{conj:QSecureOWFs}
Quantum secure $(\poly(2^n),\poly(2^{-\omega(n)}))$-one-way functions exist.
\end{conjecture}
\end{remark}

\paragraph{Realising implication (C)}
Earlier in this section we discussed that the arguments presented work for properties that satisfy \cref{item:1,item:2,item:3} in the statement of \cref{thm:LinearTimePropertyNotinFixedpCOimpliesQuasiUsefulProperty}. For properties like $\propertyParity$ or $\propertyMajority$ it is trivial to see that they admit a linear time (quantum and classical) deterministic algorithm. However, it is not immediately clear if the query complexity of computing these properties is `high' on sets $S \subset \setOfAllStrings_n$ that are not the full set $\setOfAllStrings_n$ but still are relatively `large'.  Using techniques from Fourier analysis of Boolean properties, we can show that this is indeed the case with $\propertyParity$ and $\propertyMajority$. In fact, we are able to say something more. More precisely, we show that the following statement holds.

\begin{lemma}
\label{thm:WhichPropertiesAreHardEvenForSmallerSets}
Let $N$ be an integer and let $S \in \{0,1\}^N$ be a subset of size $s \geq 2^N\left(1-\frac{1}{N^{2}}\right)$. Let $\propertyP:\{0,1\}^N \rightarrow \{-1,1\}$ be a Boolean property with $Q(P)=\Omega(N)$ and the $N$-degree Fourier coefficient $|\fourierCoefficients{P}{[N]}| \geq \sqrt{\frac{1}{2N}}$.\footnote{Note that we use these constants so that \cref{thm:WhichPropertiesAreHardEvenForSmallerSets} can be almost directly used for $\propertyParity$ and $\propertyMajority$. However, one can modify these constants to accommodate other properties as well.} Then every quantum algorithm that, with error probability $\epsilon \leq \frac{1}{2\cdot N^{2}}$, computes $\propertyP$ on each $x \in S$ uses $\Omega(N)$ queries.    
\end{lemma}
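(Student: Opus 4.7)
The plan is to apply the polynomial method of Beals, Buhrman, Cleve, Mosca, and de Wolf: any quantum algorithm that uses $T$ queries to an input $x \in \{0,1\}^N$ has acceptance probability equal to a multilinear polynomial $p(x)$ of total degree at most $2T$. From such an algorithm with error bound $\epsilon \leq 1/(2N^2)$ on $S$, I would form $q(x) := 2p(x) - 1$, a polynomial of degree at most $2T$ taking values in $[-1,1]$, satisfying $|q(x) - P(x)| \leq 2\epsilon \leq 1/N^2$ for every $x \in S$. The goal is then to show $2T \geq N$, which yields $T = \Omega(N)$.

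To extract a lower bound on $T$, I will pair the $L^2$-approximation bound with the Fourier hypothesis $|\fourierCoefficients{P}{[N]}| \geq \sqrt{1/(2N)}$ via Parseval's identity, which gives $\sum_{x \in \{0,1\}^N} (P(x) - q(x))^2 = 2^N \sum_{A \subseteq [N]} (\fourierCoefficients{P}{A} - \fourierCoefficients{q}{A})^2$. The key observation is that, if $2T < N$, then $\fourierCoefficients{q}{[N]} = 0$ (since $q$ has degree at most $2T$), and so the right-hand side is at least $2^N |\fourierCoefficients{P}{[N]}|^2 \geq 2^N/(2N)$.

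For the matching upper bound I will split the sum over $x$ into two pieces. On $S$, the pointwise bound $|P(x)-q(x)| \leq 1/N^2$ yields a contribution of at most $|S|/N^4 \leq 2^N/N^4$; on the complement $\{0,1\}^N \setminus S$, which has cardinality at most $2^N/N^2$, the crude bound $|P(x)-q(x)| \leq 2$ yields a contribution of at most $4 \cdot 2^N/N^2$. Together the left-hand side is $O(2^N/N^2)$, which contradicts the lower bound $2^N/(2N)$ for all sufficiently large $N$. Hence $2T \geq N$, giving the desired $T = \Omega(N)$.

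The main obstacle is handling the complement $\{0,1\}^N \setminus S$, where the algorithm's output is unconstrained and $q$ could deviate arbitrarily within $[-1,1]$ from $P$. The quantitative balance works out because the density hypothesis $|S| \geq 2^N(1 - 1/N^2)$ bounds this unconstrained slack by $O(2^N/N^2)$, which is of lower order than the Fourier lower bound of order $2^N/N$, while the error hypothesis $\epsilon \leq 1/(2N^2)$ ensures the contribution from $S$ itself is only $O(2^N/N^4)$. This careful matching of the three parameters, namely the density of $S$, the error $\epsilon$, and the Fourier mass at level $N$, is what drives the argument.
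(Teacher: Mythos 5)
Your proposal is correct and follows essentially the same route as the paper: both arguments apply the polynomial method, observe that a degree-$<N$ polynomial has vanishing top Fourier coefficient, and derive a contradiction with the hypothesis $|\fourierCoefficients{P}{[N]}|\geq\sqrt{1/(2N)}$ by splitting the sum over $S$ and its small complement. The only cosmetic difference is that you bound the $L^2$ distance $\norm{P-q}_2^2$ via Parseval directly, whereas the paper bounds the correlation $\sum_B \fourierCoefficients{Q}{B}\fourierCoefficients{P}{B}$ using Cauchy--Schwarz; the two are interchangeable here and your version is, if anything, slightly more streamlined.
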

\begin{proof} Suppose there exists a $T$-query algorithm, let us denote by $\mathcal{A}_S$, to compute property $\propertyP$ on all $x \in S$ with worst-case error probability $\epsilon \leq \frac{1}{2 \cdot N^{2}}$. W.l.o.g, we can assume that the output of $\mathcal{A}_S$ is in $\{-1,1\}$. Let $R(x)$ denote the probability that $\mathcal{A}_S$ outputs $-1$; then we can immediately see that $\forall x \in S$,
\begin{equation}
\label{eq:AlgorithmAgreesWithPAtManyPlaces}
    (1-2R(x))\cdot \propertyP(x) \geq (1-2\epsilon).\footnote{Since $R(x)$ is the probability that $\mathcal{A}_S$ outputs $-1$ and since the error probability of the algorithm is at most $\epsilon$, we know that (a) if $x\in P^{-1}(-1)$, then $R(x)\geq 1-\epsilon$, and (b) if $x\in P^{-1}(1)$, then $R(x)\leq \epsilon$. If this is the case (a), we obtain $(1-2R(x))\cdot P(x)=(1-2R(x))\cdot (-1)\geq 1-2\epsilon$. Moreover, even in case (b), we obtain $(1-2R(x))\cdot P(x)=(1-2R(x))\cdot (+1)\geq (1-2\epsilon)$. Hence \Cref{eq:AlgorithmAgreesWithPAtManyPlaces} holds.}
\end{equation}
We will now estimate the degree of the polynomial $R(x)$. Observe that $\text{deg}(R(x))=\text{deg}(1-2R(x))$. Let $Q(x)=(1-2R(x))$ and let $\epsilon'=(1-2\epsilon)$. Let the Fourier expansions of $Q(x)=\sum_{B \subseteq [N]}$ $\fourierCoefficients{Q}{B} \chi_{B}(x)$ and $\propertyP(x)=\sum_{B \subseteq [N]}$ $\fourierCoefficients{P}{B} \chi_{B}(x)$, respectively. 

Towards a contradiction, suppose that $\text{deg}(Q(x))<N$ then $\fourierCoefficients{Q}{[N]}=0$, which implies 
\begin{align*}
    \left(\sum_{B \in 2^{[N]}} \fourierCoefficients{Q}{B}\cdot \fourierCoefficients{P}{B}\right)^2 & \overset{(1)}{=} \left(\sum_{B \in 2^{[N]} \setminus [N]} \fourierCoefficients{Q}{B}\cdot \fourierCoefficients{P}{B}\right)^2\\ & \overset{(2)}{\leq} \left(\sum_{B \in 2^{[N]} \setminus [N]} \fourierCoefficients{Q}{B}^2\right)\cdot \left(\sum_{B \in 2^{[N]}\setminus [N]} \fourierCoefficients{P}{B}^2\right)\\ & \overset{(3)}{\leq} 1-\frac{1}{2N};
\end{align*}
$(1)$ using the assumption that $\fourierCoefficients{Q}{[N]}=0$, $(2)$ is argued using Cauchy-Schwarz inequality and $(3)$ comes by using Parseval's theorem (on Boolean output and bounded functions). However, from \Cref{eq:AlgorithmAgreesWithPAtManyPlaces} we have that $\forall x \in S, Q(x)\cdot \propertyP(x) \geq \epsilon'$ which means
\allowdisplaybreaks
\begin{align*}
\left(\sum_{B \in 2^{[N]}} \fourierCoefficients{Q}{B}\cdot \fourierCoefficients{P}{B}\right)^2 &= \left(\frac{1}{2^N}\sum_{x \in \{0,1\}^N} Q(x)\cdot \propertyP(x) \right)^2 \\ &= \frac{1}{2^N} \left( \sum_{x \in S}  Q(x)\cdot \propertyP(x) + \sum_{x \in \{0,1\}^N \setminus S} Q(x)\cdot \propertyP(x) \right)^2 \\
& \geq \left( \epsilon' \cdot \frac{|S|}{2^N} + \frac{1}{2^N}\sum_{x \in \{0,1\}^N \setminus S} Q(x)\cdot \propertyP(x)\right)^2 \\
& \geq \left(\epsilon' \cdot \frac{|S|}{2^N} - \frac{2^N -|S|}{2^N}\right)^2 \\
& \geq \left(\left(1-\frac{1}{N^{2}}\right) \cdot \frac{|S|}{2^N} - \frac{2^N -|S|}{2^N}\right)^2 \\ & \geq \left(1-\frac{3}{N^{2}}\right)^2 \\
& > 1-\frac{6}{N^{2}},
\end{align*}
leading to a contradiction; therefore, $\text{deg}(Q(x)) = \text{deg}(R(x))=N$; using polynomial method \cite{BBCMW01} this
implies $T=\Omega(N)$.
\end{proof}

Having stated \cref{thm:WhichPropertiesAreHardEvenForSmallerSets} we can now immediately prove similar query lower bounds for $\propertyParity$ and a \emph{variant} of $\propertyMajority$ as their $N$-degree Fourier coefficients are $1$ and at least $\sqrt{\frac{2}{\pi N}}$, respectively \cite{Ryan21Fourier}. Moreover, this kind of result can be extended to show similar lower bounds for all Boolean properties $\propertyP$ that have a non-negligible Fourier-mass of high degree coefficients. In fact, for $\propertyParity$ we can show that similar lower bounds hold even on smaller sets as well. Also, we can prove such query lower bounds for $\propertyMajority$ and $\propertyStrictMajority$. More precisely, 

\begin{claim}[see \cref{Appthm:ParityIsHardEvenForSmallerSets} in \cref{AppSec:ProofsRelevantToSupportingCOassumption} for proof]
\label{thm:parityIsHardEvenForSmallerSets}
 Let $S\in \{0,1\}^N$ be a subset of size $s\geq 0.8\cdot2^N$. Every quantum algorithm that, with success probability at least $\geq 2/3$, computes $\propertyParity$ on each $x \in S$ uses $\Omega(N)$ queries.
 \end{claim}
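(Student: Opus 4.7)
The plan is to run the same polynomial-method argument as in \cref{thm:WhichPropertiesAreHardEvenForSmallerSets}, but to exploit the fact that the Fourier expansion of $\propertyParity$ collapses to a single nonzero coefficient $\fourierCoefficients{P}{[N]} = 1$. This collapse removes the Cauchy--Schwarz step entirely, and it affords enough slack to absorb both the larger error probability $1/3$ and the much larger allowed complement (up to $20\%$ of the Boolean cube, rather than the $1/N^{2}$ fraction tolerated in the main lemma).

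Concretely, I would let $\mathcal{A}_S$ be any $T$-query quantum algorithm computing $\propertyParity$ on every $x \in S$ with error at most $\epsilon = 1/3$, and set $Q(x) \coloneqq 1 - 2\Pr[\mathcal{A}_S(x) = -1]$. By the polynomial method, $Q$ is a real multilinear polynomial of degree at most $2T$ satisfying $|Q(x)| \leq 1$ for every $x \in \{0,1\}^N$, and the usual case analysis (separately for $\propertyParity(x) = \pm 1$) gives $Q(x) \cdot \propertyParity(x) \geq 1 - 2\epsilon = 1/3$ for every $x \in S$, while the trivial bound $Q(x) \cdot \propertyParity(x) \geq -1$ holds on the complement.

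The next step is to lower-bound the inner product $\langle Q, \propertyParity \rangle = \tfrac{1}{2^N} \sum_x Q(x)\propertyParity(x)$ by splitting according to whether $x \in S$: the good part contributes at least $\tfrac{1}{3} \cdot \tfrac{|S|}{2^N} \geq \tfrac{1}{3} \cdot 0.8 = \tfrac{4}{15}$, while the complement contributes at least $-\tfrac{2^N - |S|}{2^N} \geq -\tfrac{1}{5}$, yielding $\langle Q, \propertyParity \rangle \geq \tfrac{4}{15} - \tfrac{1}{5} = \tfrac{1}{15} > 0$. Since $\propertyParity = \chi_{[N]}$, Parseval identifies this inner product with the single Fourier coefficient $\fourierCoefficients{Q}{[N]}$, so $\fourierCoefficients{Q}{[N]} \neq 0$, which forces $\deg(Q) \geq N$ and hence $T \geq N/2 = \Omega(N)$.

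I do not expect any serious obstacle: the only care required is checking that the specific constants $|S|/2^N \geq 0.8$ and $\epsilon = 1/3$ leave a strictly positive margin after the $0.2$ penalty from the complement, and the arithmetic above confirms this with room to spare. This is also exactly the reason the lemma must be stated separately from \cref{thm:WhichPropertiesAreHardEvenForSmallerSets}: the general statement pays a Cauchy--Schwarz penalty that is too costly once $|S|$ is only a constant fraction of $2^N$, whereas $\propertyParity$ survives because its Fourier mass is concentrated entirely at the top level.
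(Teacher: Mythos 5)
Your proposal is correct and matches the paper's proof of \cref{Appthm:ParityIsHardEvenForSmallerSets} essentially verbatim: both arguments identify $\E_x[(1-2P(x))(-1)^{|x|}]$ with the top Fourier coefficient, split the sum over $S$ and its complement to get the margin $\tfrac{0.8}{3}-0.2=\tfrac{1}{15}>0$, and conclude $\deg \geq N$ and $T\geq N/2$ via the polynomial method. No discrepancies to report.
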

 
\noindent and,

\begin{claim}[see \cref{Appthm:MajorityAndStrictIsHardEvenForSmallerSets} in \cref{AppSec:ProofsRelevantToSupportingCOassumption} for proof]
\label{thm:majorityAndStrictIsHardEvenForSmallerSets}
Let $S \in \{0,1\}^N$ be a subset of size $s \geq 2^N\left(1-\frac{1}{2 \cdot N^2}\right)$. Every quantum algorithm that, with success probability $(1-\epsilon) \geq (1-\frac{1}{2\cdot N^2})$, computes $\propertyMajority$ ($\propertyStrictMajority$) on each $x \in S$ uses $\Omega(N)$ queries.
\end{claim}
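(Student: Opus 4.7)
The plan is to reduce \cref{thm:majorityAndStrictIsHardEvenForSmallerSets} to the general statement in \cref{thm:WhichPropertiesAreHardEvenForSmallerSets} by verifying that $\propertyMajority$ (and $\propertyStrictMajority$), viewed as functions $\{0,1\}^N \to \{-1,1\}$, satisfy the two hypotheses of that lemma. The size and error-probability conditions match immediately: since $1 - \tfrac{1}{2N^2} > 1 - \tfrac{1}{N^2}$, every subset $S$ of size at least $2^N\bigl(1-\tfrac{1}{2N^2}\bigr)$ is in particular of size at least $2^N\bigl(1-\tfrac{1}{N^2}\bigr)$, and the admissible error probability $\epsilon \le \tfrac{1}{2N^2}$ is exactly the one permitted by \cref{thm:WhichPropertiesAreHardEvenForSmallerSets}.

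It then remains to verify the two intrinsic properties of the function. For the total quantum query complexity, $Q(\propertyMajority) = \Omega(N)$ follows from the polynomial method of \cite{BBCMW01}, since the approximate degree of majority is $\Theta(N)$; the same bound holds for $\propertyStrictMajority$ by essentially the same argument. For the top-degree Fourier coefficient, I would invoke the standard symmetric-function computation (see for instance the treatment in \cite{Ryan21Fourier}) which yields $|\fourierCoefficients{P}{[N]}| = \Theta\bigl(\sqrt{2/(\pi N)}\bigr)$ for the symmetric majority function; this comfortably exceeds the required threshold $\sqrt{1/(2N)}$ for all sufficiently large $N$. The function $\propertyStrictMajority$ differs from $\propertyMajority$ at most on the single balanced input when $N$ is even, which perturbs only the constant Fourier coefficient $\fourierCoefficients{P}{\emptyset}$ by an amount $O(1/2^N)$ while leaving $|\fourierCoefficients{P}{[N]}|$ asymptotically unchanged.

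With both hypotheses of \cref{thm:WhichPropertiesAreHardEvenForSmallerSets} established, the lemma yields directly the $\Omega(N)$ query lower bound for any algorithm that computes $\propertyMajority$ or $\propertyStrictMajority$ correctly on every $x \in S$ with error at most $\tfrac{1}{2N^2}$, as claimed. The only step that requires care is the precise estimate of $|\fourierCoefficients{P}{[N]}|$ for $\propertyStrictMajority$ in the even-$N$ case, where the function is not fully symmetric at the balanced layer; I expect this to be a routine one-point perturbation of the Stirling-type estimate used for $\propertyMajority$, and in particular not to cause the bound to drop below $\sqrt{1/(2N)}$.
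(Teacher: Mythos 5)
Your high-level plan --- feed $\propertyMajority$ and $\propertyStrictMajority$ directly into \cref{thm:WhichPropertiesAreHardEvenForSmallerSets} --- differs from the paper's route: the paper only verifies the top Fourier coefficient for $\propertyOddMajority$ on \emph{odd} $N$ (\cref{thm:FourierCoefficientsOfMajority}, via Theorem~5.19 of \cite{Ryan21Fourier}, which is stated for odd arity), and handles even $N$ by a restriction argument, fixing the last input bit to $0$ (resp.\ $1$) so that computing $\propertyMajority$ (resp.\ $\propertyStrictMajority$) on a large subset of $\{0,1\}^N$ embeds $\propertyOddMajority$ on a large subset of $\{0,1\}^{N-1}$ (\cref{Appthm:EvenMajorityIsAlsoHardOnSmallSets,Appthm:EvenStrictMajorityIsAlsoHardOnSmallSets}). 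Your direct approach can be made to work, but precisely in the step you flag as ``requiring care'' your argument is wrong as stated. For even $N$, $\propertyMajority$ and $\propertyStrictMajority$ do not differ on ``the single balanced input'': they differ on the entire middle layer, i.e.\ on all $\binom{N}{N/2}$ inputs of Hamming weight exactly $N/2$. Flipping the function on that layer changes \emph{every} Fourier coefficient (not just $\fourierCoefficients{P}{\emptyset}$), and it changes the top coefficient by $2\binom{N}{N/2}/2^N = \Theta(1/\sqrt{N})$ --- the same order as the coefficient itself, not $O(1/2^N)$. So the claim that $|\fourierCoefficients{P}{[N]}|$ is ``asymptotically unchanged'' does not follow from your perturbation argument.

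The correct computation for even $N$ is that in $\fourierCoefficients{P}{[N]} = 2^{-N}\sum_{k}(-1)^k\binom{N}{k}P_k$ the contributions of the layers $k\neq N/2$ cancel pairwise under $k\mapsto N-k$ (using $\binom{N}{k}=\binom{N}{N-k}$, $(-1)^k=(-1)^{N-k}$, and $P_k=-P_{N-k}$ off the middle layer), so the entire top coefficient comes from the middle layer: $|\fourierCoefficients{P}{[N]}| = \binom{N}{N/2}/2^N \geq 1/\sqrt{2N}$ for \emph{both} variants (they differ only in the sign of the middle-layer value, hence only in the sign of the coefficient). With that computation in place your reduction to \cref{thm:WhichPropertiesAreHardEvenForSmallerSets} goes through for all $N$; also note that citing Theorem~5.19 of \cite{Ryan21Fourier} covers only the odd case, so the even case needs this separate (easy) calculation or the paper's restriction trick. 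Finally, the hypothesis $Q(\propertyP)=\Omega(N)$ of \cref{thm:WhichPropertiesAreHardEvenForSmallerSets} is not actually used in its proof, so your appeal to approximate degree there is harmless but unnecessary.
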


The result proved in \cref{thm:LinearTimePropertyNotinFixedpCOimpliesQuasiUsefulProperty} when combined with the results stated in \Cref{thm:parityIsHardEvenForSmallerSets,thm:majorityAndStrictIsHardEvenForSmallerSets} along with \cref{thm:LinearPviolatesQPRFassumption} lead us to the following result.

\begin{corollary}
\label{thm:ParityMajorityviolateQPRFassumptionIfNotCO}
Let $\Lambda=\Ppoly$ be the complexity class. The properties $\propertyParity \in \mathcal{CO}(\Lambda)$, $\propertyMajority \in \mathcal{CO}(\Lambda)$ and $\propertyStrictMajority \in \mathcal{CO}(\Lambda)$ unless \cref{conj:QSecurePRF} is false.
\end{corollary}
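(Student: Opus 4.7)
The plan is to apply Theorem~\ref{thm:LinearPviolatesQPRFassumption} to each of the three properties $\propertyParity$, $\propertyMajority$, and $\propertyStrictMajority$ in turn. Recall that Theorem~\ref{thm:LinearPviolatesQPRFassumption} asserts that a property $\propertyP$ lies in $\mathcal{CO}(\Ppoly)$ unless Conjecture~\ref{conj:QSecurePRF} fails, provided $\propertyP$ satisfies conditions~\ref{item:2} and~\ref{item:3} of Lemma~\ref{thm:LinearTimePropertyNotinFixedpCOimpliesQuasiUsefulProperty}. So my task reduces to verifying those two conditions for each of the three properties.

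For condition~\ref{item:2}, which asks that $\propertyP$ be computable in $\DTIME{N}$ on inputs of length $N$, each of $\propertyParity$, $\propertyMajority$, and $\propertyStrictMajority$ is witnessed by the obvious single-pass linear scan over the input bit string that maintains, respectively, a running XOR or a running count of $1$s; this is classical and the relevant observation is merely stated (see the discussion just before the statement of Lemma~\ref{thm:WhichPropertiesAreHardEvenForSmallerSets}). Hence condition~\ref{item:2} is immediate for all three properties.

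For condition~\ref{item:3}, which asks that any $1/3$-bounded-error quantum algorithm correctly computing $\propertyP$ on all inputs from every subset $S \subseteq \{0,1\}^N$ with $|S| \geq 2^N(1 - 1/N^2)$ requires $\Omega(N)$ queries, I simply invoke the relevant claims already proved in this section. Specifically, Claim~\ref{thm:parityIsHardEvenForSmallerSets} handles $\propertyParity$ (in fact giving a stronger statement, since it applies to sets of size $\geq 0.8 \cdot 2^N$, well below our $2^N(1-1/N^2)$ threshold), and Claim~\ref{thm:majorityAndStrictIsHardEvenForSmallerSets} handles both $\propertyMajority$ and $\propertyStrictMajority$ on sets of size $\geq 2^N(1-1/(2N^2))$, which certainly dominates sets of size $\geq 2^N(1 - 1/N^2)$ for all sufficiently large $N$. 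The high-success-probability form used in those claims can be replaced by the $1/3$-bounded-error form used in Lemma~\ref{thm:LinearTimePropertyNotinFixedpCOimpliesQuasiUsefulProperty} by a standard constant-overhead amplification, which does not change the $\Omega(N)$ lower bound.

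With both conditions verified for each of the three properties, applying Theorem~\ref{thm:LinearPviolatesQPRFassumption} three times yields the desired conclusion: each of $\propertyParity$, $\propertyMajority$, and $\propertyStrictMajority$ lies in $\mathcal{CO}(\Ppoly)$ unless Conjecture~\ref{conj:QSecurePRF} is falsified. There is no substantial obstacle here since all the work has been done in the earlier lemmas and claims; the corollary is essentially a bookkeeping assembly of those pieces. The only minor point to be careful about is matching the quantitative parameters (the factor $1/N^2$ in the set-size bound and the error-probability regime) between the claims and the hypotheses of Lemma~\ref{thm:LinearTimePropertyNotinFixedpCOimpliesQuasiUsefulProperty}, which the above observations resolve.
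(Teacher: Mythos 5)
Your assembly is exactly the paper's: the paper ``proves'' this corollary in a single sentence by citing Lemma~\ref{thm:LinearTimePropertyNotinFixedpCOimpliesQuasiUsefulProperty}, Claims~\ref{thm:parityIsHardEvenForSmallerSets} and~\ref{thm:majorityAndStrictIsHardEvenForSmallerSets}, and Theorem~\ref{thm:LinearPviolatesQPRFassumption}, and for $\propertyParity$ your verification of conditions~\ref{item:2} and~\ref{item:3} matches cleanly. For $\propertyMajority$ and $\propertyStrictMajority$, however, two of your quantitative justifications are wrong as stated. First, Claim~\ref{thm:majorityAndStrictIsHardEvenForSmallerSets} applies to sets of size at least $2^N(1-\frac{1}{2N^2})$, which is a \emph{more} restrictive hypothesis than $|S|\geq 2^N(1-\frac{1}{N^2})$: the family of sets it covers is a strict subfamily of the one quantified over in condition~\ref{item:3}, not a superfamily, so the claim does not ``dominate'' what is required --- a set of size exactly $2^N(1-\frac{1}{N^2})$ is simply not covered. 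Second, amplifying a $1/3$-bounded-error algorithm to error at most $\frac{1}{2N^2}$ costs $\Theta(\log N)$ repetitions, not $O(1)$, so this route only yields an $\Omega(N/\log N)$ lower bound for $1/3$-error algorithms rather than $\Omega(N)$. Neither slip is fatal to the corollary: the only use of condition~\ref{item:3} inside the proof of Lemma~\ref{thm:LinearTimePropertyNotinFixedpCOimpliesQuasiUsefulProperty} is to conclude that the $N^{1-\alpha}$-query algorithm $\mathcal{A}$ cannot be correct on any set of the stated size, and both $\Omega(N/\log N)$ versus $N^{1-\alpha}$ and the constant $\frac{1}{2N^2}$ versus $\frac{1}{N^2}$ (which merely halves the resulting density $\delta_n$) are harmless after propagating constants through Lemma~\ref{thm:QuasiusefulAgainstPpolyImpliesNoQsecurePRFs}. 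But as written your verification of condition~\ref{item:3} for the majority properties does not literally go through; you should either invoke condition~\ref{item:3} with the weaker constants or note explicitly that the lemma tolerates them.
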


\bibliographystyle{alpha}
\bibliography{Lattice.bib}

\appendix

\section{Observations on compression-oblivious properties made in \cite{BPS21}}
\label{Appsec:OldObservationsAboutCO}

To ensure that our detailed definition of compression oblivious is valid, and is what the authors of \cite{BPS21} had in mind, we reprove their results about compression oblivious properties again here. The proof ideas are almost the same, the extra details in our proofs are to make their proofs consistent with our proposed definitions of compression oblivious properties.

\begin{lemma}[Example~6 in \cite{BPS21}]
\label{thm:ORisCOforAC}
Let $\propertyOR:\{0,1\}^* \rightarrow \{0,1\}$ be a Boolean property such that $\forall x \in \{0,1\}^*$ the $\propertyOR(x)=1$ if and only if $|x|\geq 1$. Then $\propertyOR \in \mathcal{CO}(\AC)$. Here $|x|$ denotes the hamming weight of $x$, i.e., the number of $1$s in $x$.
\end{lemma}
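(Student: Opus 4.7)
The plan is to produce, for any black-box quantum algorithm $\mathcal{A}$ for $\propertyOR$ and any $\delta > 0$, an explicit family of Grover-style hard instances whose truth tables coincide with truth tables of extremely small CNF formulas. Since these formulas sit in $\ACarg{2}{k}$ for a small constant $k$, the $\Omega(\sqrt{N})$ quantum query lower bound for unstructured search will transfer directly into a time lower bound of $\Q_{1/3}(\propertyOR_n)^{1-\delta}$ on at least one input, because $\Q_{1/3}(\propertyOR_n) = \Theta(\sqrt{N})$ and the quantum time of $\mathcal{A}$ dominates its query count.

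Concretely, I would fix $k = 2$, and for any prescribed $n' \in \mathbb{N}$ take $n \geq n'$ sufficiently large. Define a set of $N+1$ languages $L = \{L^0, L^1, \ldots, L^N\}$ as follows. Set $L^0 \coloneqq \emptyset$, realized by the constant-$0$ circuit at every length. For each $i \in [N]$ with $n$-bit binary encoding $i_1 \cdots i_n$, let $L^i$ be the singleton language $\{i_1 \cdots i_n\}$ at length $n$ and empty at all other lengths; the length-$n$ circuit $C^i_n$ is a single $\land$ gate over $n$ (possibly negated) literals, which has depth $2$ and circuit size $O(n) \leq n^2$, and the circuit at any other length is the constant-$0$ circuit. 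Each $L^i$ therefore lies in $\ACarg{2}{2}$, and by construction $\truthtable(L^0, n) = 0^N$ while $\truthtable(L^i, n) = e_i$ for $i \in [N]$.

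The query lower bound now finishes the argument: any algorithm that, in the black-box setting, computes $\propertyOR$ correctly with error at most $1/3$ on all of the $N+1$ inputs $\{\truthtable(C^i_n)\}_{i=0}^{N}$ must in particular distinguish $0^N$ from every unit vector $e_i$, which by the BBBV lower bound for unstructured search requires $\Omega(\sqrt{N})$ quantum queries on at least one of them. Since quantum time dominates query count and $\Q_{1/3}(\propertyOR_n) = \Theta(\sqrt{N})$, $\mathcal{A}$ uses at least $\Q_{1/3}(\propertyOR_n)^{1-\delta}$ time on at least one of these truth tables for every constant $\delta > 0$ and all sufficiently large $n$, which is exactly the condition of \Cref{def:CompressionOblivious}.

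I do not foresee a substantive obstacle: the lemma essentially observes that Grover's canonical hard inputs already admit depth-$2$, linear-size circuit descriptions. The only care required is bookkeeping — aligning the quantifier order in \Cref{def:CompressionOblivious} (fixing $\mathcal{A}$, $\delta$, and $n'$ before producing $n$ and $L$), and extending each singleton to an infinite language (declaring it empty at input lengths other than $n$) so that the resulting family of circuits is formally an element of $\ACarg{2}{2}$ under the nonuniform circuit-family definition.
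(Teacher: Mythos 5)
Your proposal is correct and follows essentially the same route as the paper's proof: both take the hard set to be the empty language plus the singleton languages whose length-$n$ truth tables are the unit vectors, realize each by a single $\land$ gate of fan-in $n$ (hence in $\ACarg{2}{2}$), and invoke the $\Omega(\sqrt{N})$ quantum query lower bound for search, with time dominating queries. The only cosmetic difference is that the paper distributes strings of every length $k \le n$ among the languages rather than declaring them empty off length $n$; both handle the bookkeeping equally well.
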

The proof idea is to construct a hard set of languages for every $n \in \mathbb{N}$ that makes it difficult to compute $\propertyOR_n$ in a black-box way when given access to the set of $n$-input circuits each corresponding to a language in this hard set of languages. 
\begin{proof}
To prove that $\propertyOR \in \mathcal{CO}(\AC)$, we show that there exists a $k' \in \mathbb{N}$ such that $\propertyOR \in \mathcal{CO}(\ACarg{2}{k'})$. For every value of $n \in \mathbb{N}$ we will exhibit a set $L = \{L^0, L^1, \ldots \} \subseteq \ACarg{2}{k'}$ whose respective truth tables, i.e., the set $\{ \truthtable(L^i,n)\}_{i \in [|L|]}$, form a hard set for computing $\propertyOR_n$ on them. In fact, the value of $k'$ in our construction will be $2$.

\paragraph{The construction for $L$.} Let $n \in \mathbb{N}$. Let $L^0=\emptyset$. We will construct the languages $L^1, L^2, \ldots$ so that each language $L^i$ contains at most one Boolean string of length $k$.
\begin{enumerate}
    \item For every $k \in [n]$, starting with $k=1$ in an increasing order, go over all the strings $z \in \{0,1\}^k$ in a lexicographical order. Place the string $z$ in set $L^i$ (for an $i>0$) only if no $L^j$ with $j \leq i$ contains the string $z$. Once the placement of $z$ is decided, set $z$ to be the next string in the lexicographic ordering of $\{0,1\}^k$. 
    \item For $k> n$, we put no $k$-length binary strings in any of these $L^i$s.
\end{enumerate}

\paragraph{All these $L^i$s are in $\ACarg{2}{2}$.} For every $i$, let $\{C^i_n\}_{n \in \mathbb{N}}$ denote a family of circuits corresponding to $L^i \in L$. We will now show that there exist families such that $|C^i_n| <n^2$ for all $n \in \mathbb{N}$. 
\begin{enumerate}
    \item It is easy to see that the $\{C^0_n\}_{n \in \mathbb{N}}$ can be constructed using circuits of constant size as $L^0$ has no accepting strings of any length. Therefore, for all $n \in \mathbb{N}$, $|C^0_n|<n^2$.
    \item Now for the other $L^i$s:
    \begin{enumerate}
        \item For an $k <=n$, for any $L^i$ with $i>0$, the corresponding $C^i_k$ accepts exactly one $k$-length string and such a circuit can be constructed using one $\land$ gate with $k$-fanin. Therefore, $|C^i_k|=k+1<k^2$. (For computing the size, we count the number of vertices in the circuit.)
        \item For an $k >n$, $C^i_k$ can be constructed using constant size circuits as there are no accepting $k$-length strings for any $L^i$.
    \end{enumerate}
    Therefore, for all $n\in \mathbb{N}$ and for all $i \in [|L|]$ we have that $|C^i_n| < n^2$, which means $L \subseteq \ACarg{2}{2}$.
\end{enumerate}

\paragraph{Invoking the quantum query lower bound for $\propertyOR_n$.} What remains to show is that no quantum algorithm can compute $\propertyOR_n$ on the strings $\{ \truthtable(L^i, n)\}_{i \in [|L|]}$ in $\Q_{1/3}(\propertyOR_n)^{1-\delta}$ time for any constant $\delta>0$. Towards contradiction, say that such a $\delta$ existed. Then this algorithm can compute $\propertyOR_n$ on $2^n$-length strings that have hamming weight at most $1$ in $2^{\frac{n(1-\delta)}{2}}$ time, hence in $2^{\frac{n(1-\delta)}{2}}$ queries. This is not possible because using the quantum adversary method \cite{Ambainis02-QuantumAdversaryMethod} one can show that any quantum algorithm computing $\propertyOR_n$ on these strings requires at least $c \cdot 2^{\frac{n}{2}}$ many queries.

Combining all these observations, we can conclude that $\propertyOR \in \mathcal{CO(\ACarg{2}{2})}$. Therefore, using \Cref{def:COforGeneralComplexityClass},
we get $\propertyOR \in \mathcal{CO(\AC)}$.
\end{proof}

One can use similar arguments to show that $\propertyAND$ is compression-oblivious with respect to $\ACconstDepth$.

\begin{corollary}[Example~6 in \cite{BPS21}]
Let $\propertyAND:\{0,1\}^* \rightarrow \{0,1\}$ be a Boolean property such that $\forall x \in \{0,1\}^*$ the $\propertyAND(x)=1$ if and only if $|x|=\mathsf{len}(x)$. Then $\propertyAND \in \mathcal{CO}(\AC)$. Here $\mathsf{len}(x)$ denotes the length of the binary string $x$.  
\end{corollary}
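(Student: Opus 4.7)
The plan is to dualize the argument for $\propertyOR$ given in \Cref{thm:ORisCOforAC}, using that computing $\propertyAND$ on a truth table is the same as computing $\propertyOR$ on its bitwise negation. Concretely, I will show $\propertyAND \in \mathcal{CO}(\ACarg{2}{2})$, which by \Cref{def:COforGeneralComplexityClass} yields $\propertyAND \in \mathcal{CO}(\AC)$. For each $n$, I would construct the hard set $L = \{L^0, L^1, \ldots\} \subseteq \ACarg{2}{2}$ as follows: set $L^0 = \{0,1\}^*$, and for each length $k \in [n]$ iterate through the strings $z \in \{0,1\}^k$ in lexicographic order, designating each $z$ to be the unique string of length $k$ that the next available language $L^i$ (with $i > 0$) does \emph{not} accept, exactly mirroring how strings were distributed across languages in the $\propertyOR$ construction. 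For lengths $k > n$, every $L^i$ accepts all strings. In particular $\truthtable(L^0,n) = 1^{2^n}$ and, for each $i > 0$, $\truthtable(L^i,n)$ is the all-ones string with a single bit flipped to $0$.

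Next, I would verify that each $L^i$ lies in $\ACarg{2}{2}$ by exhibiting circuit families of size bounded by $k^2$. The family $\{C^0_k\}_k$ consists of constant-size tautology circuits. For $i > 0$ and $k \leq n$, if $L^i$ excludes the string $a \in \{0,1\}^k$ at length $k$, then on $k$-bit inputs the language $L^i$ is computed by the single OR clause $\bigvee_{j:\, a_j=0} x_j \vee \bigvee_{j:\, a_j=1} \neg x_j$, a depth-$2$ circuit of size $O(k) < k^2$. For $k > n$, $C^i_k$ is again a constant-size tautology. Hence $|C^i_k| < k^2$ for all $i$ and all sufficiently large $k$, so every $L^i$ belongs to $\ACarg{2}{2}$.

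The final step is to invoke the quantum query lower bound. Any black-box quantum algorithm computing $\propertyAND$ on the set $\{\truthtable(L^i,n)\}_i$ must in particular distinguish the all-ones truth table $\truthtable(L^0,n)$ (on which $\propertyAND = 1$) from each single-zero truth table $\truthtable(L^i,n)$ with $i > 0$ (on which $\propertyAND = 0$). Applying the bit-flip map $x \mapsto \bar{x}$ reduces this task exactly to detecting a single $1$ in an otherwise all-zero string of length $2^n$, and the quantum adversary method~\cite{Ambainis02-QuantumAdversaryMethod} yields the familiar $\Omega(\sqrt{2^n})$ lower bound. Since $\Q_{1/3}(\propertyAND_n) = \Theta(\sqrt{2^n})$, no algorithm can compute $\propertyAND$ on these truth tables in $\Q_{1/3}(\propertyAND_n)^{1-\delta}$ time for any constant $\delta > 0$, which is the required bound. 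The proof is essentially a mechanical dualization of the $\propertyOR$ argument; the only thing to track carefully is that the circuits remain valid $\AC$ formulas at all lengths (including $k > n$, where a trivial tautology suffices) while still forcing the truth table at the target length $n$ to lie in the Hamming-weight-$(2^n-1)$ slice where the query lower bound applies.
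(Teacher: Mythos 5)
Your proposal is correct and is exactly the dualization the paper has in mind when it says "one can use similar arguments": the hard set consists of the all-accepting language plus languages each excluding a single string per length (realized by a single size-$O(k)$ OR clause), and the $\Omega(\sqrt{2^n})$ adversary bound for detecting a single $0$ in an all-ones truth table finishes the argument. No gaps.
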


\begin{lemma}[Fact~15 in \cite{BPS21}]
\label{thm:CompressionObliviousProperties} 
Let $\zeta$ and $\gamma$ be two complexity classes in $\{\AC, \ACconstDepth, \NC, \Ppoly \}$. Furthermore, let $\zeta \subseteq \gamma$, then for every property $\propertyP$, we have $\propertyP \in \mathcal{CO}(\gamma)$ whenever $\propertyP \in \mathcal{CO}(\zeta)$.   
\end{lemma}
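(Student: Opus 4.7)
The plan is a straightforward monotonicity argument based on the observation that the definition of $\Gamma_{d,k}$-compression-oblivious only uses $\Gamma_{d,k}$ to constrain \emph{which} languages may appear in the witnessing hard set $L=\{L^1,L^2,\ldots\}$; the inner universal quantifier ``for all circuit families corresponding to $L^i$'' makes no reference to the class at all. Hence, if a hard set $L$ witnesses $\propertyP \in \mathcal{CO}(\zeta_{d,k})$ and we can show $\zeta_{d,k} \subseteq \gamma_{d',k'}$ (as classes of languages), the \emph{same} $L$ witnesses $\propertyP \in \mathcal{CO}(\gamma_{d',k'})$ verbatim.

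Concretely, I would first unpack $\propertyP \in \mathcal{CO}(\zeta)$ via \Cref{def:COforGeneralComplexityClass} to obtain parameters $d,k$ with $\propertyP \in \mathcal{CO}(\zeta_{d,k})$. Next, for each ordered pair $(\zeta,\gamma)$ among $\{\AC,\ACconstDepth,\NC,\Ppoly\}$ with $\zeta \subseteq \gamma$, I would exhibit explicit parameters $d',k'$ such that $\zeta_{d,k} \subseteq \gamma_{d',k'}$ at the level of languages. The required simulations are standard: $\ACarg{2}{k}\subseteq \ACarg{2}{k}$ for $\AC \subseteq \ACconstDepth$ is immediate; an $\ACarg{d}{k}$ circuit becomes an $\NCarg{d'}{k'}$ circuit by replacing each unbounded fan-in AND/OR gate with a balanced tree of bounded fan-in gates (blowing up depth by an $O(\log n)$ factor and size only polynomially); and an $\NCarg{d}{k}$ circuit trivially sits inside $\SizeArg{1}{k}$. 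The transitive closures handle $\AC \subseteq \NC$, $\AC \subseteq \Ppoly$, and $\ACconstDepth \subseteq \Ppoly$.

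Finally, I would observe that the hard set $L \subseteq \zeta_{d,k}$ witnessing $\propertyP \in \mathcal{CO}(\zeta_{d,k})$ satisfies $L \subseteq \gamma_{d',k'}$, and the remainder of the compression-oblivious statement (the quantification over quantum algorithms, over $n' \in \mathbb{N}$, over circuit families corresponding to each $L^i$, and the lower bound $\Q_{1/3}(\propertyP_n)^{1-\delta}$) is textually unchanged. Therefore $\propertyP \in \mathcal{CO}(\gamma_{d',k'})$, which by \Cref{def:COforGeneralComplexityClass} gives $\propertyP \in \mathcal{CO}(\gamma)$.

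The only mildly delicate point is ensuring that the parameter-level conversion in step two is \emph{uniform} across all languages $L^i$ in the hard set, so that a single pair $(d',k')$ works for the entire witnessing set simultaneously. This is immediate for each of the standard circuit simulations above, since the output depth and size depend only on the source parameters $d,k$ (and a global polynomial slack), not on the particular language being simulated; no individual-language surgery is required.
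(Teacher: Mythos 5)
Your proposal is correct and follows essentially the same route as the paper: both arguments observe that the class $\Gamma_{d,k}$ enters \Cref{def:CompressionOblivious} only through the containment $L \subseteq \Gamma_{d,k}$ of the witnessing hard set, and then use the standard parameterized inclusions $\ACarg{d}{k} \subseteq \NCarg{d'}{k'} \subseteq \SizeArg{d''}{k''}$ to conclude that the same hard set witnesses compression-obliviousness for the larger class. Your explicit remark that the inner quantifier over circuit families is class-independent, and that the parameter conversion is uniform over the hard set, only makes the paper's (terser) argument more precise.
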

\begin{proof}
Without loss of generality, let us assume that $\zeta$ is $\ACconstDepth$.
If a property $\propertyP \in \mathcal{CO}(\ACconstDepth)$ then there exists $d, k \in \mathbb{N}$ such that $\propertyP \in \mathcal{CO}(\ACarg{d}{k})$; see \Cref{def:COforGeneralComplexityClass}. From \Cref{def:SizeableComplexityClass} we know that for every value of $d, k$ we can constructively (by constructing explicit circuits) show that there exists $d',k'$ and $d'',k''$ such that $\ACarg{d}{k} \subseteq \NCarg{d'}{k'} \subseteq \SizeArg{d''}{k''}$. Therefore, any set of hard languages in $\ACconstDepth$ that witnesses that $\propertyP$ is compression-oblivious with respect to $\ACconstDepth$ will also witness that $\propertyP$ is compression-oblivious with respect to $\{\NC, \Ppoly \}$. Therefore, $\propertyP \in \mathcal{CO}(\gamma)$ for any $\gamma \in \{ \NC, \Ppoly\}$. We can make a similar argument for other combinations of $\zeta, \gamma \in \{\ACconstDepth, \NC, \Ppoly \}$ satisfying $\zeta \subseteq \gamma$.
\end{proof}

We saw that properties such as $\propertyOR$ and $\propertyAND$ are compression-oblivious with respect to $\ACconstDepth, \NC, \Ppoly$. One can also construct properties that are not compression-oblivious for any of these complexity classes.

\begin{lemma}[Example~7 in \cite{BPS21}] Consider the following Boolean property $\propertyP_{\mathtt{large-c}}(z)=(\propertyP_{\mathtt{large-c},n}(z))_{n \in \mathbb{N}}$ that for all $z \in \{0,1\}^{2^n}$,
\begin{equation*}
    \propertyP_{\mathtt{large-c},n}(z) = \propertyParity_{n}(z) \land [\text{$\nexists$ circuit $C$ on $n$ inputs of size less than $2^{\frac{n}{100}}$ s.t. $z = \truthtable(C)$}].
\end{equation*}
The property $\propertyP_{\mathtt{large-c}}$ is not compression-oblivious with respect to $\{\ACconstDepth, \NC, \Ppoly\}$.
\end{lemma}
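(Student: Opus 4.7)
The plan is to exhibit one trivial algorithm that simultaneously witnesses $\propertyP_{\mathtt{large-c}} \notin \mathcal{CO}(\SizeArg{d}{k})$ for every fixed $d,k \in \mathbb{N}$, and then invoke \Cref{def:COforGeneralComplexityClass} together with the contrapositive of \Cref{thm:CompressionObliviousProperties} (along with the containments $\ACconstDepth \subseteq \NC \subseteq \Ppoly$) to propagate this to non-CO against each of $\ACconstDepth$, $\NC$, $\Ppoly$.

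Fix arbitrary $d,k \in \mathbb{N}$, let $\Gamma_{d,k} = \SizeArg{d}{k}$, and take $\mathcal{A}$ to be the quantum algorithm that, on any input of length $2^n$, simply outputs $0$ in $O(1)$ time. For every $n$ large enough that $d \cdot n^k < 2^{n/100}$, every $n$-input circuit $C \in \SizeArg{d}{k}$ has size strictly less than $2^{n/100}$, so by the very definition of $\propertyP_{\mathtt{large-c},n}$ we have $\propertyP_{\mathtt{large-c},n}(\truthtable(C)) = 0$. Hence $\mathcal{A}$ is (trivially) correct on every truth table of every circuit drawn from any $L \subseteq \SizeArg{d}{k}$, once $n$ is sufficiently large.

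The only nontrivial ingredient will be a lower bound on $\Q_{1/3}(\propertyP_{\mathtt{large-c},n})$, and for this I would piggy-back on \Cref{thm:parityIsHardEvenForSmallerSets}. The key observation is that on every $z \in \{0,1\}^{2^n}$ that is \emph{not} the truth table of an $n$-input circuit of size less than $2^{n/100}$, the definition forces $\propertyP_{\mathtt{large-c},n}(z) = \propertyParity_n(z)$. A standard counting estimate, using that there are at most $2^{O(2^{n/100})}$ such ``small'' circuits, shows that the set of ``non-small-circuit'' truth tables has size at least $0.8 \cdot 2^{2^n}$ for all sufficiently large $n$. Any bounded-error quantum algorithm for $\propertyP_{\mathtt{large-c},n}$ therefore computes $\propertyParity$ correctly on a set of that size, and \Cref{thm:parityIsHardEvenForSmallerSets} then forces $\Omega(2^n)$ queries, yielding $\Q_{1/3}(\propertyP_{\mathtt{large-c},n}) = \Omega(2^n)$.

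Assembling the pieces, for every fixed $\delta \in (0,1)$ and every sufficiently large $n$ the $O(1)$ runtime of $\mathcal{A}$ is strictly smaller than $\Q_{1/3}(\propertyP_{\mathtt{large-c},n})^{1-\delta} = \Omega(2^{n(1-\delta)})$, while $\mathcal{A}$ remains correct on every truth table of every circuit in $\SizeArg{d}{k}$. This directly negates \Cref{def:CompressionOblivious} for $\Gamma_{d,k} = \SizeArg{d}{k}$; since $d,k$ were arbitrary, \Cref{def:COforGeneralComplexityClass} then gives $\propertyP_{\mathtt{large-c}} \notin \mathcal{CO}(\Ppoly)$, and the statements for $\NC$ and $\ACconstDepth$ follow from the contrapositive of \Cref{thm:CompressionObliviousProperties}. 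I do not anticipate any real obstacle: the only step with content is the query lower bound, and that is essentially immediate once the counting estimate on small-circuit truth tables is combined with \Cref{thm:parityIsHardEvenForSmallerSets}.
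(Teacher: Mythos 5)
Your proposal is correct and follows essentially the same route as the paper's proof: the constant-time always-output-$0$ algorithm is correct on all truth tables of polynomial-size circuits (since those force the second conjunct to be false), while $\Q_{1/3}(\propertyP_{\mathtt{large-c},n})=\Omega(2^n)$, which negates \Cref{def:CompressionOblivious} for every $\SizeArg{d}{k}$ and then propagates to $\NC$ and $\ACconstDepth$. Your only addition is to make the query lower bound rigorous via the counting estimate on small-circuit truth tables combined with \Cref{thm:parityIsHardEvenForSmallerSets}, a detail the paper merely asserts.
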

\begin{proof}
First, notice that the query complexity of $\propertyP_{\mathtt{large-c},n}$ is very high; this is because most strings are not a truth table of small circuits, the query complexity of this property is close to the query complexity of $\propertyParity_n$, i.e., $\Q_{1/3}(\propertyP_{\mathtt{large-c}})=\Omega(2^n)$. Moreover, there exists an algorithm $\mathcal{A}$ that can compute $\propertyP_{\mathtt{large-c}}$ in $0$ queries no matter the length of the input: $\mathcal{A}$ always outputs $0$. This fixes $\delta=1$. When $\Lambda=\{\ACconstDepth, \NC, \Ppoly\}$, $\forall n \in \mathbb{N}$, Algorithm $\mathcal{A}$ computes $\propertyP_{\mathtt{large-c}}$ on all strings in $\truthtable( \Lambda, n)$ in constant time. Hence, $\propertyP_{\mathtt{large-c}}$ is not compression-oblivious for $\Ppoly$ or any smaller class of representations.
\end{proof}

Now, moving on to more complexity-theoretic consequences.

\begin{theorem}[Theorem~9 in \cite{BPS21}] If there exists a property $\propertyP=(\propertyP_n)_{n \in \mathbb{N}}$ and constant $p>0$ s.t.\ $\Q_{1/3}(\propertyP_{n}) = N^{\frac{1}{2}+p}$ and $\propertyP \in \mathsf{polyL}(N)$ and $\propertyP \in \mathcal{CO}(\Ppoly)$, then $\P \neq \PSPACE$. Here $N=2^n$.\footnote{\label{footnote:SlightChangePropertyConditions} Buhrman \textit{et al.}, under their notion of compression oblivious properties proved this theorem for properties $\propertyP=(\propertyP_n)_{n \in \mathbb{N}}$ that are in $\mathsf{polyL}(N)$ and satisfy $\Q_{1/3}(\propertyP_{n}) = \widetilde{\omega}(\sqrt{N})$ but with our proposed definition of compression oblivious we need that the properties satisfy $\Q_{1/3}(\propertyP_{n}) = N^{\frac{1}{2}+p}$ for some constant $p>0$ that only depends on $\propertyP$.}
\end{theorem}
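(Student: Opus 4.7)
The plan is a proof by contradiction. I would assume $\P = \PSPACE$ and use all three hypotheses on $\propertyP$ in tandem to build a black-box classical algorithm for $\propertyP$ that is too fast to be compatible with $\propertyP \in \mathcal{CO}(\Ppoly)$.

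The first step is to convert $\propertyP \in \mathsf{polyL}(N)$ into a polynomial-time algorithm. Since $\mathsf{polyL}(N) \subseteq \PSPACE$ and $\P = \PSPACE$ by assumption, there exists a deterministic classical algorithm $\mathcal{A}$ that decides $\propertyP_n$ from its $N$-bit truth table in time $O(N^{c})$ for some constant $c$ depending only on $\propertyP$. This $\mathcal{A}$ trivially qualifies as a $1/3$-bounded-error quantum algorithm computing $\propertyP$ in the black-box setting, and its worst-case running time is $O(N^c)$ on every $N$-bit truth table -- including those of the hard polynomial-size circuit families considered in the definition of $\mathcal{CO}(\Ppoly)$.

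The second step is to confront $\mathcal{A}$ with the compression-obliviousness hypothesis. Applying $\propertyP \in \mathcal{CO}(\Ppoly)$ to $\mathcal{A}$, I obtain: for every constant $\delta > 0$ and for infinitely many $n$, there is a set of hard $\Ppoly$-languages whose corresponding truth tables force $\mathcal{A}$ to use at least $\Q_{1/3}(\propertyP_n)^{1-\delta} = N^{(1/2+p)(1-\delta)}$ quantum time on at least one such input. Comparing with the uniform bound $O(N^c)$ yields $c \geq (1/2+p)(1-\delta)$ for every $\delta > 0$, and the desired contradiction is closed by picking a small $\delta > 0$ with $(1/2+p)(1-\delta) > c$, which is possible precisely when $c < 1/2 + p$.

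The main obstacle, and the technical crux, is therefore the quantitative matching $c < 1/2 + p$. A generic invocation of $\P = \PSPACE$ only delivers $\propertyP \in \DTIME{N^c}$ for some polynomial degree with no a priori upper bound. The delicate step is to tighten this -- exploiting that $\propertyP$ sits in the strictly smaller class $\mathsf{polyL}(N) \subsetneq \PSPACE$, so its polyL-to-$\P$ simulation should have much lower overhead than a generic PSPACE reduction. A natural route is a padding argument that embeds $\propertyP$ into a PSPACE instance with controlled polynomial blow-up, after which the $\P = \PSPACE$ assumption yields a polynomial-time simulation with a sharp degree. This is exactly where the strengthened query lower bound $\Q_{1/3}(\propertyP_n) = N^{1/2+p}$, in place of the $\widetilde{\omega}(\sqrt{N})$ bound used in the original proof of \cite{BPS21} (noted in the footnote), provides the polynomial slack needed to absorb the overhead and close the argument.
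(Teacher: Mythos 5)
Your approach has a fatal quantitative gap that no padding argument can close, and it misses the central idea of the paper's proof. The algorithm you extract from $\mathsf{polyL}(N)\subseteq\PSPACE=\P$ is a uniform algorithm that reads its entire $N$-bit input, so its running time is at least $N$, i.e.\ $c\geq 1$. But $\Q_{1/3}(\propertyP_n)\leq N$ always (with $N$ queries one learns the whole truth table), so $N^{1/2+p}\leq N$ forces $p\leq 1/2$, and hence $\Q_{1/3}(\propertyP_n)^{1-\delta}=N^{(1/2+p)(1-\delta)}<N\leq O(N^c)$ for every $\delta>0$. In other words, your algorithm $\mathcal{A}$ is \emph{never} fast enough to witness $\propertyP\notin\mathcal{CO}(\Ppoly)$: the comparison you set up can only yield the vacuous conclusion $c\geq(1/2+p)(1-\delta)$, never a contradiction. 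The "technical crux" you defer to a padding argument -- getting $c<1/2+p$ -- is asking for a \emph{sublinear-time} algorithm, which cannot come from any generic simulation of a machine that reads all of its input.

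The paper's proof instead builds a genuinely sublinear black-box algorithm by exploiting the promise that matters for compression-obliviousness: the hard inputs are truth tables of circuits of size $d\cdot n^k$, so they come from a set of at most $2^{\poly(n)}=2^{\polylog(N)}$ candidates. The quantum oracle identification algorithm of Kothari identifies which candidate the input is using only $\widetilde{O}(\sqrt{N})$ \emph{queries}; it is query-efficient but not time-efficient, and $\P=\PSPACE$ is invoked precisely to make the inter-query computations (and the subsequent white-box evaluation of $\propertyP$, which is where $\propertyP\in\mathsf{polyL}(N)$ enters, since evaluating a $\polylog(N)$-space property on the truth table of an explicitly given small circuit is a $\poly(n)$-space computation) run in $\poly(n)$ time. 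This yields total time $c\cdot d\cdot n^k\cdot\sqrt{N}=\sqrt{N}\cdot\polylog(N)$ on the compressible inputs, which is below $N^{(1/2+p)(1-\delta)}$ for a suitable $\delta>0$ depending on $p$ -- and this is exactly why the strengthened hypothesis $\Q_{1/3}(\propertyP_n)=N^{1/2+p}$ with constant $p>0$ is needed. Your proposal never uses the restriction to compressible inputs, which is the entire point of the definition, so the argument as written does not go through.
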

Here $\mathsf{polyL}(N)$ is the same as $\mathsf{SPACE}(\polylog (N))$, i.e., class of properties $\propertyP$ (or the corresponding languange $L_{\propertyP}$) computable in $\polylog(N)$ amount of space. 
\begin{proof}
We prove the contrapositive of this statement. Let $d, k \in \mathbb{N}$ and let $\Lambda=\SizeArg{d}{k}$. Let $\propertyP=(\propertyP_n)_{n \in \mathbb{N}}$ be a property satisfying $\Q_{1/3}(\propertyP_{n}) = N^{\frac{1}{2}+p}$ and $\propertyP \in \mathsf{polyL}(N)$ and suppose that $\P =\PSPACE$ then we can show that  $\propertyP \notin \mathcal{CO}(\SizeArg{d}{k})$ for all $d, k \in \mathbb{N}$. Let us first fix a pair $d,k \in \mathbb{N}$, under the assumption of $\P = \PSPACE$ for this $\propertyP$ we show that there exists a $c,n' \in \mathbb{N}$ and we can construct an algorithm $\mathcal{A}$ such that $\forall n>n'$ for all inputs in $\{0,1\}^{N}$ the algorithm $\mathcal{A}$ computes $\propertyP_n$ in a black-box manner in at most $c \cdot d \cdot n^k \cdot \sqrt{N}$ time. (Algorithm $\mathcal{A}$ is basically the oracle identification algorithm of \cite{Kothari14} which can be made time efficient by using $\P=\PSPACE$.) As all the properties $\propertyP$ in consideration have their respective $\Q_{1/3}(\propertyP_{n}) = N^{\frac{1}{2}+O(1)}$ therefore for each property $\propertyP$ there is corresponding $\delta>0$ such that $c \cdot d \cdot n^k \cdot \sqrt{N} = \Q_{1/3}(\propertyP_n)^{1-\delta}$ for all $n > n'$. 

We can do this for every property $\propertyP = (\propertyP_n)_{n \in \mathbb{N}}$ satisfying $\Q_{1/3}(\propertyP_{n}) = N^{\frac{1}{2}+O(1)}$ and $\propertyP \in \mathsf{polyL}(N)$. The values of $c, n'$ and $\delta$ differ with each property but the algorithm $\mathcal{A}$ is the same. For the construction of Algorithm $\mathcal{A}$, we refer the reader to the proof of Theorem~9 in \cite{BPS21}. 

As we can do this for every pair $d, k \in \mathbb{N}$ we can (under the assumption of $\P=\PSPACE$) show that $\propertyP \notin \mathcal{CO}(\Ppoly)$. Hence, proved.
\end{proof}

\begin{theorem}[Theorem~10 in \cite{BPS21}] There exists an oracle relative to which the basic-\QSETH{} holds, but any property $\propertyP=(\propertyP_n)_{n \in \mathbb{N}}$ with $\propertyP \in \mathsf{polyL}(N)$ for which $\Q_{1/3}(\propertyP_{n}) = N^{\frac{1}{2}+O(1)}$ is not compression oblivious with respect to $\Ppoly$, i.e., $\propertyP \notin \mathcal{CO}(\Ppoly)$. Here $N=2^n$.\footnote{See \Cref{footnote:SlightChangePropertyConditions}.}
\end{theorem}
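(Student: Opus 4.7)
My plan is to construct a single oracle $O$ and then verify (a) basic-\QSETH{} holds relative to $O$, and (b) every property $\propertyP$ satisfying the hypothesis of the theorem lies outside $\mathcal{CO}(\Ppoly)$ relative to $O$. Claim (b) will follow by faithfully relativizing the proof of the previous theorem: the oracle identification algorithm (OIA) of Kothari~\cite{Kothari14} has quantum query complexity $O(\sqrt{N}\cdot\polylog(N))$ on truth tables $\truthtable(C)\in\{0,1\}^N$ coming from $\Ppoly$-circuits, and its sole barrier to being time-efficient is a per-round $\PSPACE(N)$ computation. An oracle satisfying the relativized equality $\P^O=\PSPACE^O$ collapses that step to $\poly(N)$ time and therefore yields a black-box algorithm running in $\sqrt{N}\cdot\polylog(N)$ quantum time on every compressed input; for $\propertyP$ with $\Q_{1/3}(\propertyP_n)=N^{1/2+p}$ and $p>0$ a constant, this comfortably beats the $\Q_{1/3}(\propertyP_n)^{1-\delta}$ threshold for some $\delta>0$, falsifying $\propertyP\in\mathcal{CO}(\Ppoly)$ relative to $O$.

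Concretely, I take $O$ to be a \emph{length-padded} $\PSPACE$-complete oracle: for each $m$, the answer to a $\PSPACE$-complete query on an input of length $m$ is stored at an $O$-address of length $L(m)$, where $L$ is chosen to grow quickly enough to decisively separate the OIA-use of $O$ from any potential $\CNFSAT$-use. Relative to such an $O$, every $\PSPACE(N)$-computation on an input of length $N$ (the scale at which OIA operates on a truth table of size $N=2^n$) can be resolved by a single $O$-query of length $L(\poly(N))$, so OIA becomes time-efficient in the desired sense and claim (b) follows.

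For (a), the crucial observation is that a quantum $\CNFSAT$ algorithm on $n$ variables running in time $T(n)=2^{n(1-\delta)/2}$ can write down, and therefore query, oracle inputs of length at most $T(n)$. Choosing $L$ to grow sufficiently fast (for example $L(m)=2^{4m}$), the $\CNFSAT$ algorithm cannot reach the $O$-addresses that store the useful $\PSPACE$-answers. Picking $O$ on the remaining short addresses either uniformly at random or by a standard adversarial diagonalization, one then recovers the $\Omega(2^{n/2})$ quantum query lower bound for $\CNFSAT$ via a relativized polynomial-method argument.

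The main obstacle, and the step requiring the most care, is making (a) fully rigorous: one must rule out the possibility that the restriction of $O$ to short addresses inadvertently encodes structure useful to a $\CNFSAT$ algorithm. This is handled by a hybrid argument that replaces $O$ on short addresses with an independent random function and shows, by averaging over this randomness, that no quantum algorithm making fewer than $c\cdot 2^{n/2}$ queries can solve $\CNFSAT$ relative to such a hybrid. Combining (a) and (b) then produces the desired oracle and completes the proof.
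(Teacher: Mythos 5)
Your proposal is an independent reconstruction, whereas the paper's proof of this statement simply builds the oracle \emph{exactly} as in the proof of Theorem~10 of \cite{BPS21} and only adjusts the hypothesis on $\Q_{1/3}(\propertyP_n)$ to fit the new definition of compression obliviousness. Measured on its own terms, your construction has a concrete quantitative failure in part (b). To conclude $\propertyP \notin \mathcal{CO}(\Ppoly)$ you must exhibit a black-box algorithm running in at most $\Q_{1/3}(\propertyP_n)^{1-\delta} \approx N^{(1/2+p)(1-\delta)}$ quantum time on the compressed inputs, and such an algorithm can only write down oracle queries of at most that length. With $L(m)=2^{4m}$, even a $\PSPACE$ query about an object of length $\poly(n)=\polylog(N)$ (let alone length $\poly(N)$, as you write) sits at an address of length $2^{4\poly(n)} \geq N^{4}$, which the oracle identification algorithm cannot afford to write; so with your example padding the collapse is unusable and claim (b) fails. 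The two requirements --- addresses long enough that a $2^{n(1-\delta)/2}$-time \CNFSAT{} solver on $n$ variables cannot reach the $\PSPACE$ answers relevant to its instance, yet short enough that the roughly $2^{n'/2}$-time identification algorithm at truth-table scale $N=2^{n'}$ can reach the $\PSPACE$ answers it needs --- pull in opposite directions, and checking that some $L$ satisfies both simultaneously (for every exponent $k$ of the class $\SizeArg{d}{k}$) is precisely the content of the theorem; the proposal does not carry this out.

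There is also a conceptual gap in part (a). If the \CNFSAT{} instances are ordinary formulas with no oracle gates, then any algorithm that ignores $O$ is still an algorithm relative to $O$, so proving that basic-\QSETH{} holds relative to $O$ would prove basic-\QSETH{} unconditionally. The hard satisfiability instances must therefore be embedded in the oracle itself (formulas whose satisfiability depends on the random portion of $O$), and the hybrid/polynomial-method argument must additionally rule out the $\PSPACE$ portion of $O$ leaking information about that random portion --- which in turn interacts with whether the $\Ppoly$ circuits appearing in part (b) are allowed oracle gates. Your write-up gestures at a hybrid argument over the short addresses but never states that the \CNFSAT{} instances must reference $O$; without that, the statement being proved in part (a) is unattainable by any oracle construction.
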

\begin{proof} The idea is to first prove this statement for the complexity class $\SizeArg{d}{k}$ for a fixed pair $d, k \in \mathbb{N}$ and then show that we can repeat the argument for every $d,k \in \mathbb{N}$. Having once chosen a pair $d,k$, we construct the oracle \emph{exactly} as it is constructed in the proof of Theorem~10 in \cite{BPS21}. The only part where our proof differs is in the assumption that $\Q_{1/3}(\propertyP_{n}) = N^{\frac{1}{2}+O(1)}$ instead of $\Q_{1/3}(\propertyP_{n}) = \widetilde{\omega}(\sqrt{N})$ as originally done in the statement of Theorem~10 in \cite{BPS21}; this change makes this theorem true with the definition of compression oblivious properties we propose in this paper. 
\end{proof}

\section{Proofs of \cref{thm:parityIsHardEvenForSmallerSets,thm:majorityAndStrictIsHardEvenForSmallerSets}}
\label{AppSec:ProofsRelevantToSupportingCOassumption}

\begin{lemma}
\label{Appthm:ParityIsHardEvenForSmallerSets}
Let $S\in \{0,1\}^N$ be a subset of size $s\geq 0.8\cdot2^N$. Every quantum algorithm that, with success probability at least $\geq 2/3$, computes $\propertyParity$ on each $x \in S$ uses $\Omega(N)$ queries.
\end{lemma}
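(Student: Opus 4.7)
The plan is to specialize the Fourier-analytic argument of \cref{thm:WhichPropertiesAreHardEvenForSmallerSets} to the case $\propertyP = \propertyParity$. The key structural fact is that $\fourierCoefficients{P}{[N]} = 1$ and $\fourierCoefficients{P}{B} = 0$ for every proper subset $B \subsetneq [N]$, i.e., parity has all its Fourier mass concentrated at the single top coefficient. Because of this, we can afford both a larger error probability ($1/3$ instead of $1/(2N^{2})$) and a noticeably smaller set $S$ ($0.8 \cdot 2^N$ instead of $(1 - 1/N^2) \cdot 2^N$) than the general statement demands.

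Suppose, toward a contradiction, that some $T$-query quantum algorithm $\mathcal{A}_S$ with output in $\{-1,1\}$ computes $\propertyParity$ on every $x \in S$ with success probability at least $2/3$. Let $R(x)$ denote the probability that $\mathcal{A}_S$ outputs $-1$, and set $Q(x) = 1 - 2R(x)$. By the polynomial method of \cite{BBCMW01}, $Q$ is a real multilinear polynomial of degree at most $2T$ that is bounded in $[-1,1]$ on $\{0,1\}^N$, and the success-probability hypothesis gives $Q(x) \cdot P(x) \geq 1 - 2(1/3) = 1/3$ for every $x \in S$.

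The crucial step is to force $\deg(Q) \geq N$ by showing $\fourierCoefficients{Q}{[N]} \neq 0$. Assume otherwise. Then Plancherel's identity yields
\begin{equation*}
    \frac{1}{2^N}\sum_{x \in \{0,1\}^N} Q(x)\cdot P(x) \;=\; \sum_{B \subseteq [N]} \fourierCoefficients{Q}{B}\cdot \fourierCoefficients{P}{B} \;=\; \fourierCoefficients{Q}{[N]}\cdot 1 \;=\; 0,
\end{equation*}
since $\fourierCoefficients{P}{B}$ vanishes for every $B \neq [N]$. On the other hand, splitting the expectation over $S$ and its complement, and using $Q(x)\cdot P(x) \geq 1/3$ on $S$ together with the trivial bound $Q(x)\cdot P(x) \geq -1$ elsewhere,
\begin{equation*}
    \frac{1}{2^N}\sum_{x \in \{0,1\}^N} Q(x)\cdot P(x) \;\geq\; \frac{1}{3}\cdot \frac{|S|}{2^N} - \frac{2^N - |S|}{2^N} \;\geq\; \frac{1}{3}\cdot 0.8 - 0.2 \;=\; \frac{1}{15} \;>\; 0,
\end{equation*}
a strict contradiction. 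Hence $\fourierCoefficients{Q}{[N]} \neq 0$, so $\deg(Q) = N$, and the polynomial-method lower bound yields $2T \geq N$, i.e.\ $T = \Omega(N)$.

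The only thing to watch is that the two sides of the contradiction actually conflict under the given hypotheses; the positive margin $\tfrac{1}{3}\cdot 0.8 - 0.2 = \tfrac{1}{15}$ is precisely the slack provided by $|S| \geq 0.8 \cdot 2^N$ and error $\leq 1/3$, and no further optimization is needed. In short, the whole argument is the specialization of \cref{thm:WhichPropertiesAreHardEvenForSmallerSets} to a property whose entire Fourier mass sits on the top coefficient, which is exactly what allows the hypotheses on $|S|$ and on the error probability to be loosened.
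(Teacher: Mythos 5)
Your proof is correct and is essentially the paper's own argument: both reduce to showing that the degree-$N$ Fourier coefficient of $1-2R(x)$ equals the correlation $\E_x[(1-2R(x))\cdot(-1)^{|x|}]$, lower-bound it by $\tfrac{1}{3}\cdot\tfrac{|S|}{2^N}-\tfrac{2^N-|S|}{2^N}\geq\tfrac{1}{15}>0$, and conclude $\deg\geq N$ via the polynomial method. Your phrasing as a contradiction through Plancherel is just a cosmetic repackaging of the paper's direct computation of that coefficient.
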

\begin{proof}
 Suppose there exists a $T$-query bounded-error quantum algorithm $\mathcal{A}_S$ to compute parity on each $x \in S$ with worst-case error probability $\leq 1/3$. Here, without loss of generality, we assume the output of $\mathcal{A}_S$ is $\in \{-1,1\}$, where $1$ means odd parity and $-1$ means even parity. Let $P(x)$ be the polynomial that represents $\mathcal{A}_S$'s probability of outputting $-1$ on input $x$. Since $\mathcal{A}_S$ has worst-case error probability at most $1/3$ on each $x\in S$, we can immediately see $(1-2P(x))\cdot (-1)^{|x|}\geq 1/3$ for every $x\in S$, where $|x|$ denotes the Hamming weight of $x$.

 Observe that $\text{deg}(P(x))=\text{deg}(1-2P(x))$ and  $\E_{x\in \{0,1\}^N}[(1-2P(x))(-1)^{|x|}]$ is the Fourier coefficient of the degree-$N$ term of $(1-2P(x))$. We obtain
 \begin{align*}
     \E_{x\in \{0,1\}^N}[(1-2P(x))(-1)^{|x|}] =& \frac{1}{2^N}\big(\sum_{x\in S}[(1-2P(x))(-1)^{|x|}]+\sum_{x\in \{0,1\}^N\setminus S}[(1-2P(x))(-1)^{|x|}]\big)\\
     \geq & \frac{1}{2^N}\cdot \big(s/3-(2^N-s) \big)>0,
 \end{align*}
 implying that $\text{deg}(P(x))\geq N$. 
 Then by polynomial method~\cite{BBCMW01},~\footnote{By the polynomial method we know if the amplitudes of the final state are degree-$N$ polynomials of $x$ then we need to make at least $T\geq N/2$ queries.} we immediately obtain $T\geq N/2$.
\end{proof}

\begin{lemma}[Theorem~5.19 in \cite{Ryan21Fourier}]
\label{thm:FourierCoefficientsOfMajority}
Let $N$ be an odd integer and let $\propertyOddMajority:\{0,1\}^N \rightarrow \{0,1\}$ be defined as 
\begin{equation*}
\propertyOddMajority(x)=
    \begin{cases}
    1, & \text{if } |x|>\frac{N}{2},\\
    0, & \text{otherwise},
\end{cases}    
\end{equation*}
where $|x|=|\{i \mid i \in [N] \text{ and } x_i=1\}|$. If the Fourier expansion of $\propertyOddMajority$ is 
\begin{equation*}
\propertyOddMajority(x)=\sum_{B \subseteq \{0,1\}^N} \fourierCoefficients{odd-maj}{B} \chi_{B}(x),
\end{equation*}
where $\chi_{B}(x)=(-1)^{x.B}$, then $\fourierCoefficients{odd-maj}{B}=0$ whenever $|B|==0 \mod 2$ and for $|B|=N$ we have
\begin{equation*}
    \fourierCoefficients{odd-maj}{B}^2=\frac{4}{4^N} {N-1 \choose \frac{N-1}{2}}^2 \geq \frac{6}{\pi (3N-1)} > \frac{2}{\pi N}.
\end{equation*}
\end{lemma}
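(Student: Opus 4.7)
The plan has two independent parts: (i) show $\hat{f}(B)=0$ for every nonempty even-size $B$, and (ii) compute the degree-$N$ coefficient exactly and bound it numerically. A quick sanity check first: as stated the lemma would also force $\hat{f}(\emptyset)=0$, which is false for the $\{0,1\}$-valued function (where it equals $1/2$); the claim is consistent with---and the factor $4/4^N$ matches---viewing it as a statement about $g=1-2f\in\{-1,1\}$, i.e.\ the standard $\pm 1$-valued majority in O'Donnell's Theorem~5.19. I would work in that setting throughout and translate back at the end.

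For (i), use the bit-flip symmetry: since $N$ is odd, flipping every bit swaps majority for minority (no ties are possible), so $g(\bar{x})=-g(x)$. Substituting $x\mapsto\bar{x}$ inside $\hat{g}(B)=\mathbb{E}_x[g(x)\chi_B(x)]$ and using $\chi_B(\bar{x})=(-1)^{|B|}\chi_B(x)$ yields $\hat{g}(B)=-(-1)^{|B|}\hat{g}(B)$, which forces $\hat{g}(B)=0$ whenever $|B|$ is even.

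For (ii), write $N=2m+1$ and expand $\hat{g}([N])=\mathbb{E}_x[g(x)\prod_i x_i]$ by summing over the number $k$ of $-1$-entries. Because $g$ flips sign at $k=m+1$, the expectation becomes $2^{-N}\bigl[\sum_{k=0}^{m}-\sum_{k=m+1}^{N}\bigr](-1)^k\binom{N}{k}$; since the full alternating sum vanishes, the two halves are negatives of each other, reducing the expression to $2^{1-N}\sum_{k=0}^{m}(-1)^k\binom{N}{k}$. Applying the classical finite alternating identity $\sum_{k=0}^{j}(-1)^k\binom{n}{k}=(-1)^j\binom{n-1}{j}$ collapses this to $\pm 2^{1-N}\binom{2m}{m}$, and squaring gives exactly $\tfrac{4}{4^N}\binom{N-1}{(N-1)/2}^2$.

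The remaining numerical inequality $\binom{2m}{m}^2/4^{2m}\geq 3/(\pi(3m+1))$, equivalently $\geq 6/(\pi(3N-1))$, is the main obstacle. It reduces to the sharp central-binomial bound $\binom{2m}{m}\geq 4^m/\sqrt{\pi(m+1/3)}$, which is strictly tighter than the leading-order Stirling estimate $4^m/\sqrt{\pi m}$ and therefore requires a genuinely quantitative argument---either a Wallis-product computation or an induction controlling the ratio $\binom{2m}{m}/\binom{2m-2}{m-1}=2(2m-1)/m$ against the corresponding ratio of the claimed lower bound. Once that estimate is in hand, the looser bound $>2/(\pi N)$ follows immediately by monotonicity, completing the lemma.
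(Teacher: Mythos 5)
Your argument is correct, and the first thing to say is that the paper contains no proof of this lemma at all: it is imported as Theorem~5.19 of O'Donnell's book, so the comparison is with that external source rather than with anything in the text. Your derivation is a clean self-contained alternative: the antisymmetry argument ($g(\bar x)=-g(x)$ for odd $N$, hence $\hat g(B)=-(-1)^{|B|}\hat g(B)$) is the standard way to kill the even-degree coefficients, and your computation of the top coefficient --- splitting the expectation at the majority threshold, using that the full alternating sum vanishes, and collapsing $\sum_{k=0}^{m}(-1)^k\binom{N}{k}=(-1)^m\binom{N-1}{m}$ --- is more direct than O'Donnell's route, which derives a closed form for $\hat{\mathrm{Maj}}(S)$ at every degree. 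Your normalization remark is also a genuine catch: read literally for the $\{0,1\}$-valued function, the lemma fails at $B=\emptyset$ (where the coefficient is $1/2$) and the degree-$N$ coefficient is $\binom{N-1}{(N-1)/2}^2/4^N$, a factor of $4$ smaller than stated; the constants given are those of the $\pm1$-valued majority $1-2f$, which is indeed the convention used downstream in \Cref{thm:WhichPropertiesAreHardEvenForSmallerSets}. The only step you leave as a pointer rather than a proof is the sharp estimate $\binom{2m}{m}\ge 4^m/\sqrt{\pi(m+\tfrac13)}$, and you are right that plain Stirling does not suffice; the induction you sketch does close: with $b_m=\binom{2m}{m}^2(m+\tfrac13)/16^m$ one computes $b_{m+1}/b_m=\frac{(2m+1)^2(m+\frac43)}{4(m+1)^2(m+\frac13)}<1$ (numerator minus denominator equals $-m/3$), while $b_m\to 1/\pi$, so $b_m>1/\pi$ for all $m$. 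With that supplied, your proof is complete and, if anything, strengthens the stated $\ge$ to a strict inequality.
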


\begin{lemma}
\label{Appthm:OddMajorityIsHardEvenForSmallerSets}
Let $N$ be an odd integer and let $S \in \{0,1\}^N$ be a subset of size $s \geq 2^N\left(1-\frac{1}{N^{2}}\right)$. Every quantum algorithm that, with error probability $\epsilon \leq \frac{1}{2\cdot N^{2}}$, computes $\propertyOddMajority$ on each $x \in S$ uses $\Omega(N)$ queries.    
\end{lemma}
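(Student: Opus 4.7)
The plan is to derive this lemma as an almost-immediate corollary of \cref{thm:WhichPropertiesAreHardEvenForSmallerSets}. The hypotheses on the set $S$ (size $\geq 2^N(1-1/N^2)$) and on the error probability $\epsilon$ ($\leq 1/(2N^2)$) match those of \cref{thm:WhichPropertiesAreHardEvenForSmallerSets} verbatim, so the only genuine task is to verify that $\propertyOddMajority$ satisfies the Fourier-mass hypothesis $|\fourierCoefficients{P}{[N]}| \geq \sqrt{1/(2N)}$ appearing in that lemma.

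The first step is to move from the $\{0,1\}$-valued function $\propertyOddMajority$ to a $\{\pm 1\}$-valued surrogate by setting $P(x) := 1 - 2\cdot\propertyOddMajority(x)$. Any $T$-query quantum algorithm computing $\propertyOddMajority$ on $S$ with error at most $\epsilon$ yields, via a trivial classical relabelling of its output bit, a $T$-query algorithm computing $P$ on $S$ with the same error bound, and vice versa; hence any query lower bound proved for $P$ transfers immediately to $\propertyOddMajority$. Under this transformation, every nonempty Fourier coefficient simply picks up a factor of $-2$, so $\fourierCoefficients{P}{B} = -2\,\fourierCoefficients{odd-maj}{B}$ for all nonempty $B \subseteq [N]$.

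The second step is the Fourier-mass check. By \cref{thm:FourierCoefficientsOfMajority}, $\fourierCoefficients{odd-maj}{[N]}^2 > \tfrac{2}{\pi N}$, so $\fourierCoefficients{P}{[N]}^2 = 4 \fourierCoefficients{odd-maj}{[N]}^2 > \tfrac{8}{\pi N} > \tfrac{1}{2N}$ (using $\pi < 4$). This is exactly the Fourier-mass hypothesis required by \cref{thm:WhichPropertiesAreHardEvenForSmallerSets}. The remaining hypothesis $Q(P) = \Omega(N)$ is the standard tight quantum query lower bound for majority. Applying \cref{thm:WhichPropertiesAreHardEvenForSmallerSets} to $P$ then yields the desired $\Omega(N)$ query lower bound for computing $\propertyOddMajority$ on $S$.

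The main thing to be careful about is the $\{0,1\}$ versus $\{\pm 1\}$ sign convention: one must be sure the Fourier coefficient stated in \cref{thm:FourierCoefficientsOfMajority} refers to the encoding under which \cref{thm:WhichPropertiesAreHardEvenForSmallerSets} is invoked. Beyond this routine bookkeeping the proof is a chain of quotations of earlier results, with no substantive obstacle to overcome.
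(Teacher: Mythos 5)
Your proof is correct and is essentially identical to the paper's: the paper likewise derives the lemma in one line by combining the Fourier-coefficient bound of \cref{thm:FourierCoefficientsOfMajority} with \cref{thm:WhichPropertiesAreHardEvenForSmallerSets}. Your extra care with the $\{0,1\}$-versus-$\{\pm1\}$ encoding (the factor of $-2$ on nonempty coefficients) is a harmless refinement, since both $\tfrac{2}{\pi N}$ and $\tfrac{8}{\pi N}$ exceed the required threshold $\tfrac{1}{2N}$.
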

\begin{proof}
The absolute value of the $N$-degree Fourier coefficient for $\propertyOddMajority$ is at least $\sqrt{\frac{2}{\pi N}}$ using \cref{thm:WhichPropertiesAreHardEvenForSmallerSets} gives us the required proof.
\end{proof}

\begin{lemma}
\label{Appthm:EvenMajorityIsAlsoHardOnSmallSets}
Let $N$ be an even integer and let $S \in \{0,1\}^N$ be a subset of size $s \geq 2^N(1-\frac{1}{2\cdot N^2})$. Every quantum algorithm that, with error probability $\epsilon \leq \frac{1}{2\cdot N^{2}}$, computes $\propertyMajority$ on each $x \in S$ uses $\Omega(N)$ queries.  
\end{lemma}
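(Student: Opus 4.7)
The plan is to reduce the even-$N$ case to the already-proven odd-$(N-1)$ case, namely \cref{Appthm:OddMajorityIsHardEvenForSmallerSets}. Suppose $\mathcal{A}$ is a $T$-query quantum algorithm that computes $\propertyMajority$ with worst-case error at most $\frac{1}{2N^2}$ on every $x\in S$, where $|S|\geq 2^N\bigl(1-\tfrac{1}{2N^2}\bigr)$. First I would hard-wire one coordinate of the input: pick $b\in\{0,1\}$ compatible with whichever tie-breaking convention underlies $\propertyMajority$ on even-length inputs (take $b=0$ under the $|x|\geq N/2$ convention and $b=1$ under the strict $|x|>N/2$ convention), and define $\mathcal{A}'(x')\coloneqq \mathcal{A}(x'b)$ on inputs $x'\in\{0,1\}^{N-1}$, handling any oracle query to the last coordinate internally by returning the constant $b$ with no additional oracle call. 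Because $N$ is even and $|x'|\in\mathbb{Z}$, a one-line calculation with $|x'b|=|x'|+b$ gives $\propertyMajority_N(x'b)=\propertyOddMajority(x')$, so $\mathcal{A}'$ is really computing the $(N-1)$-variable odd majority using the same $T$ queries as $\mathcal{A}$.

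Next I would check that the hypotheses of \cref{Appthm:OddMajorityIsHardEvenForSmallerSets} transfer to $\mathcal{A}'$. Setting $S'\coloneqq \{x'\in\{0,1\}^{N-1}:x'b\in S\}$, a pigeonhole count yields
\[
|S'|\;\geq\; 2^{N-1}-(2^N-|S|)\;\geq\; 2^{N-1}\!\left(1-\tfrac{1}{N^2}\right)\;\geq\; 2^{N-1}\!\left(1-\tfrac{1}{(N-1)^2}\right),
\]
which meets the largeness bound of that lemma at parameter $N-1$. The error on each $x'\in S'$ is at most $\frac{1}{2N^2}\leq\frac{1}{2(N-1)^2}$, matching the required error budget. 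Invoking \cref{Appthm:OddMajorityIsHardEvenForSmallerSets} on $\mathcal{A}'$ then forces $T=\Omega(N-1)=\Omega(N)$, which is the desired conclusion.

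The substantive content---isolating a large degree-$N$ Fourier coefficient of majority and running the polynomial method on a large subset---has already been packaged in \cref{thm:WhichPropertiesAreHardEvenForSmallerSets} and \cref{Appthm:OddMajorityIsHardEvenForSmallerSets}, so no new quantum lower-bound machinery is needed. The only point I expect to require care is the correspondence between the tie-breaking convention in the definition of $\propertyMajority$ for even $N$ and the choice of fixed bit $b$; this is a purely combinatorial bookkeeping matter and is not a real obstacle. A self-contained Fourier approach is also available by directly computing the degree-$N$ Fourier coefficient of even-$N$ majority via a symmetry argument (giving a magnitude of order $1/\sqrt{N}$) and then appealing to \cref{thm:WhichPropertiesAreHardEvenForSmallerSets}, but the reduction above is considerably shorter.
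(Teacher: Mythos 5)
Your proposal is correct and follows essentially the same route as the paper's proof: restrict to the strings of $S$ whose last bit is a fixed value, observe that this embeds $\propertyOddMajority$ on $N-1$ variables, verify that the restricted set is still large enough (at least $2^{N-1}(1-\frac{1}{(N-1)^2})$), and invoke \cref{Appthm:OddMajorityIsHardEvenForSmallerSets}. Your explicit handling of the error-budget transfer and of the tie-breaking convention (which determines whether the fixed bit is $0$ or $1$) is slightly more careful than the paper's write-up but does not change the argument.
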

We embed the $\propertyOddMajority$ into the $\propertyMajority$ and then argue lower bound for $\propertyMajority$ for sets of size at least $2^N(1-\frac{1}{2\cdot N^2})$  using the result obtained in \cref{Appthm:OddMajorityIsHardEvenForSmallerSets}.
\begin{proof}
Towards contradiction, let us assume that there is a set $B \subseteq \{0,1\}^N$ of size $b \geq 2^N(1-\frac{1}{2\cdot N^2})$ that uses at most $N^{0.999}$ queries to compute $\propertyMajority$ on all strings $y \in B$. The set $B=\{B_0 \concatenate 0\} \sqcup \{B_1 \concatenate 1\}$ can be viewed as two disjoint sets of strings ending with $0$ and $1$, respectively. Clearly $|B_1| \leq 2^{N-1}$ and by our assumption we have $|B| \geq 2^N(1-\frac{1}{2 \dot N^2})$, therefore, $|B_0|=|B|-|B_1| \geq 2^N(\frac{1}{2}-\frac{1}{2 \cdot N^2})$. Set $S'=B_0$. The set $S'$ is a subset of strings of $N'$-length strings where $N'=N-1$ and moreover $N'$ is odd. Therefore, in terms of $N'$ we have $|B_0|\geq 2^{N'+1}(\frac{1}{2}-\frac{1}{2 \cdot (N'+1)^2})=2^{N'}-\frac{2^{N'}}{\frac{2}{2} \cdot (N'+1)^2} > 2^{N'}(1-\frac{1}{{N'}^2})$. As $S'=B_0$, computing $\propertyMajority$ on strings in $B_0 \concatenate 0$ (or equivalently $S' \concatenate 0$) would be equivalent to computing $\propertyOddMajority$ on $N'$-length strings in $S'$ of size $|S'| \geq 2^{N'}(1-\frac{1}{{N'}^2})$. But we know from \cref{Appthm:OddMajorityIsHardEvenForSmallerSets} a sublinear query algorithm is not possible for these set sizes. Hence, proved.
\end{proof}

We can give query lower bound for $\propertyStrictMajority$ by just setting $S'=B_1$ and running the same arguments mentioned in the proof of \cref{Appthm:EvenMajorityIsAlsoHardOnSmallSets}. Giving the following result for $\propertyStrictMajority$.

\begin{lemma}
\label{Appthm:EvenStrictMajorityIsAlsoHardOnSmallSets}
Let $N$ be an even integer and let $S \in \{0,1\}^N$ be a subset of size $s \geq 2^N(1-\frac{1}{2\cdot N^2})$. Every quantum algorithm that, with error probability $\epsilon \leq \frac{1}{2\cdot N^{2}}$, computes $\propertyStrictMajority$ on each $x \in S$ uses $\Omega(N)$ queries.  
\end{lemma}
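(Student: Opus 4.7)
The plan is to mimic the reduction used in the proof of \cref{Appthm:EvenMajorityIsAlsoHardOnSmallSets}, but routing through the sub-cube of strings ending in $1$ rather than $0$. Concretely, I would proceed by contradiction: assume there is a subset $B \subseteq \{0,1\}^N$ of size $|B| \geq 2^N(1 - \tfrac{1}{2N^2})$ and a quantum algorithm $\mathcal{A}$ that makes $o(N)$ queries and computes $\propertyStrictMajority$ on every $x \in B$ with error at most $\tfrac{1}{2N^2}$. Decompose $B = (B_0 \circ 0) \sqcup (B_1 \circ 1)$ into strings ending in $0$ and in $1$. Using $|B_0| \leq 2^{N-1}$ and the lower bound on $|B|$, a short arithmetic estimate gives $|B_1| \geq 2^{N-1}(1 - \tfrac{1}{N^2})$.

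The next step is to observe that $\propertyStrictMajority$ on the sub-cube of length-$N$ strings ending in $1$ restricts exactly to $\propertyOddMajority$ on length-$(N-1)$ strings. Writing $N = 2m$ and $x = y \circ 1$ with $y \in \{0,1\}^{N-1}$, we have $\propertyStrictMajority(x) = 1 \iff |x| > m \iff |y| \geq m$, and for odd $N' = N-1 = 2m-1$ the condition $\propertyOddMajority(y)=1$ is $|y| > N'/2 = m - \tfrac12$, i.e. $|y| \geq m$. Thus the two predicates coincide on this slice. Let $S' = B_1$; then $S' \subseteq \{0,1\}^{N'}$ and $|S'| \geq 2^{N'}(1 - \tfrac{1}{N'^2})$ (absorbing the factor between $N$ and $N-1$). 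The algorithm that, on input $y \in \{0,1\}^{N'}$, simulates $\mathcal{A}$ on $y \circ 1$ (appending a constant $1$ bit in the free qubit, at no query cost) computes $\propertyOddMajority$ on every $y \in S'$ with error at most $\tfrac{1}{2N^2} \leq \tfrac{1}{2N'^2}$, using the same $o(N) = o(N')$ number of queries.

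This directly contradicts \cref{Appthm:OddMajorityIsHardEvenForSmallerSets}, which forces any such algorithm to use $\Omega(N')$ quantum queries. Hence $\mathcal{A}$ must use $\Omega(N)$ queries, establishing the lemma. The only substantive thing to check carefully is the slice-identity $\propertyStrictMajority(y \circ 1) = \propertyOddMajority(y)$ for even $N$, and the book-keeping that the error and largeness parameters transfer correctly from $N$ to $N' = N - 1$; none of these present a real obstacle, as each is a one-line arithmetic fact. The reduction is almost entirely parallel to the majority case; the role of $B_0$ in that argument is simply replaced by $B_1$, and the shift in the parity of the appended bit accounts for the difference between non-strict and strict majority.
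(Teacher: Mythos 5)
Your proposal is correct and follows exactly the route the paper intends: the paper's proof of \cref{Appthm:EvenStrictMajorityIsAlsoHardOnSmallSets} is literally ``set $S'=B_1$ and rerun the argument of \cref{Appthm:EvenMajorityIsAlsoHardOnSmallSets},'' which is what you do, and your explicit verification of the slice identity $\propertyStrictMajority(y\circ 1)=\propertyOddMajority(y)$ and of the counting $|B_1|\geq 2^{N-1}(1-\tfrac{1}{N^2})>2^{N'}(1-\tfrac{1}{N'^2})$ fills in details the paper leaves implicit. No gaps.
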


When $N$ is odd, all the three definitions of $\propertyMajority$, $\propertyStrictMajority$ and $\propertyOddMajority$ are equivalent. Therefore, combining \cref{Appthm:OddMajorityIsHardEvenForSmallerSets,Appthm:EvenMajorityIsAlsoHardOnSmallSets,Appthm:EvenStrictMajorityIsAlsoHardOnSmallSets} we get the following.

\begin{lemma}
\label{Appthm:MajorityAndStrictIsHardEvenForSmallerSets}
Let $S \in \{0,1\}^N$ be a subset of size $s \geq 2^N\left(1-\frac{1}{2 \cdot N^2}\right)$. Every quantum algorithm that, with success probability $(1-\epsilon) \geq (1-\frac{1}{2\cdot N^2})$, computes $\propertyMajority$ ($\propertyStrictMajority$) on each $x \in S$ uses $\Omega(N)$ queries.
\end{lemma}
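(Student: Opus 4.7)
The plan is to observe that this lemma is essentially a bookkeeping consolidation of the three preceding lemmas in the appendix, so I would prove it by a simple case split on the parity of $N$ and on which of $\propertyMajority$ or $\propertyStrictMajority$ is in question. No new analytical ingredient is needed beyond what has already been established via \Cref{thm:WhichPropertiesAreHardEvenForSmallerSets} and the Fourier coefficient estimate of \Cref{thm:FourierCoefficientsOfMajority}.

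First I would handle the case when $N$ is odd. Here I would point out that since $|x| \in \{0, 1, \ldots, N\}$ cannot equal $N/2$, the three notions $\propertyMajority$, $\propertyStrictMajority$, and $\propertyOddMajority$ literally agree as Boolean functions on $\{0,1\}^N$. Any quantum algorithm computing $\propertyMajority$ or $\propertyStrictMajority$ on a subset $S$ is therefore computing $\propertyOddMajority$ on $S$. Since the hypothesis $|S| \geq 2^N(1 - \frac{1}{2 \cdot N^2})$ is at least as strong as the hypothesis $|S| \geq 2^N(1 - \frac{1}{N^2})$ required by \Cref{Appthm:OddMajorityIsHardEvenForSmallerSets}, and since the error probability hypothesis $\epsilon \leq \frac{1}{2 \cdot N^2}$ matches exactly, the $\Omega(N)$ query lower bound follows immediately from that lemma.

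Next I would handle the case when $N$ is even. For $\propertyMajority$, the claim is precisely \Cref{Appthm:EvenMajorityIsAlsoHardOnSmallSets}, which was proved by partitioning $S$ according to the last bit, observing that at least one half has density $\geq 2^{N-1}(1-\tfrac{1}{(N-1)^2})$ within $\{0,1\}^{N-1}$, and then invoking the odd-majority bound on this restriction (since fixing the last coordinate converts the even-$N$ majority question into an odd-$(N-1)$ majority question). For $\propertyStrictMajority$, the same reduction works with the roles of the two halves swapped, giving \Cref{Appthm:EvenStrictMajorityIsAlsoHardOnSmallSets}. Both lemmas already yield the $\Omega(N)$ lower bound under exactly the parameter regime asserted here.

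Combining these two cases finishes the proof, so strictly speaking the only step is to write down that the claimed statement is the disjunction of the three previously established results. There is no real obstacle; the only thing to double-check is that the size threshold $2^N(1 - \frac{1}{2 \cdot N^2})$ and the error tolerance $\frac{1}{2 \cdot N^2}$ used in the unified statement are consistent with, and no weaker than, the hypotheses of each of the three underlying lemmas — which a line-by-line check confirms.
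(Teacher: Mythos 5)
Your proposal is correct and matches the paper's own argument exactly: the paper likewise derives this lemma by combining Lemma~\ref{Appthm:OddMajorityIsHardEvenForSmallerSets} (odd $N$, where $\propertyMajority$, $\propertyStrictMajority$, and $\propertyOddMajority$ coincide) with Lemmas~\ref{Appthm:EvenMajorityIsAlsoHardOnSmallSets} and~\ref{Appthm:EvenStrictMajorityIsAlsoHardOnSmallSets} for even $N$. Your check that the size threshold $2^N(1-\frac{1}{2N^2})$ and error tolerance $\frac{1}{2N^2}$ subsume the hypotheses of each ingredient lemma is the only verification needed, and you carried it out correctly.
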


\end{document}